\newtheorem{theorem}{Theorem}
\newtheorem{example}{Example}
\newtheorem{corollary}{Corollary}
\newtheorem{lemma}{Lemma}
\theoremstyle{definition}
\newtheorem{defn}{Definition}
\theoremstyle{remark}
\newtheorem{remark}{Remark}
\newtheorem{fact}{Fact}
\newtheorem{notation}{Notation}
\newtheorem{property}{Property}
\DeclareMathOperator*{\esssup}{ess\,sup}
\DeclareMathOperator*{\tr}{tr}
\newcommand{\mb}{\mathbf}
\newcommand\blfootnote[1]{%
  \begingroup
  \renewcommand\thefootnote{}\footnote{#1}%
  \addtocounter{footnote}{-1}%
  \endgroup
}
\begin{document}
\title{Key Capacity for Product Sources with Application to Stationary Gaussian Processes}
\author{\IEEEauthorblockN{Jingbo Liu~~~~~~~~Paul Cuff~~~~~~~~Sergio Verd\'{u}\\}
\IEEEauthorblockA{Dept. of Electrical Eng., Princeton University, NJ 08544\\
\{jingbo,cuff,verdu\}@princeton.edu}}
\maketitle
\begin{abstract}
We show that for product sources, rate splitting
is optimal for secret key agreement using limited one-way
communication between two terminals. 
This yields an alternative
information-theoretic-converse-style proof of the tensorization property of a strong data processing
inequality originally studied by Erkip and Cover and amended
recently by Anantharam et al.
We derive a water-filling solution
of the communication-rate--key-rate tradeoff for a wide class of
discrete memoryless vector Gaussian sources which subsumes the
case without an eavesdropper. 
Moreover, we derive an explicit
formula for the maximum secret key per bit of communication
for all discrete memoryless vector Gaussian sources using a
tensorization property and a variation on the enhanced channel technique of Weingarten et al. 
Finally, a one-shot information spectrum
achievability bound for key generation is proved from which
we characterize the communication-rate--key-rate tradeoff for
stationary Gaussian processes.
\end{abstract}

\begin{IEEEkeywords}
Random number generation, source coding, Gaussian processes, Correlation coefficient, Decorrelation,
Fourier transforms, MIMO.
\end{IEEEkeywords}

\blfootnote{This paper was presented in part at 2014 IEEE International Symposium on Information Theory (ISIT). Copyright (c) 2014 IEEE. Personal use of this material is permitted.  However, permission to use this material for any other purposes must be obtained from the IEEE by sending a request to pubs-permissions@ieee.org.}
\section{Introduction}\label{sec1}
\begin{figure}[h!]
  \centering
\begin{tikzpicture}
[node distance=1cm,minimum height=9mm,minimum width=14mm,arw/.style={->,>=stealth'}]
  \node[coordinate] (p1) {};
  \node[rectangle,draw,rounded corners] (C) [below =of p1] {C: Eavesdropper};
  \node[rectangle,draw,rounded corners] (A) [left =2cm of p1] {A};
  \node[rectangle,draw,rounded corners] (B) [right =2cm of p1] {B};
  \node[coordinate] (X) [above =0.5cm of A] {};
  \node[coordinate] (Y) [above =0.5cm of B] {};
  \node[coordinate] (K) [below =0.5cm of A] {};
  \node[coordinate] (Kh) [below =0.5cm of B] {};
  \node[coordinate] (Z) [below =0.5cm of C] {};

  \draw [arw] (X) to node[midway,above]{$X^n$} (A);
  \draw [arw] (Y) to node[midway,above]{$Y^n$} (B);
  \draw [arw] (A) to node[midway,below]{$K$} (K);
  \draw [arw] (B) to node[midway,below]{$\hat{K}$} (Kh);
  \draw [arw] (Z) to node[midway,below]{$Z^n$} (C);
  \draw [arw] (A) to node[midway,above]{$W(X^n)$} (B);
  \draw [arw] (p1) to node{} (C);
\end{tikzpicture}
\caption{The basic model for secret key agreement between two terminals A and B allowing public communication from A to B.}
\label{f1}
\end{figure}
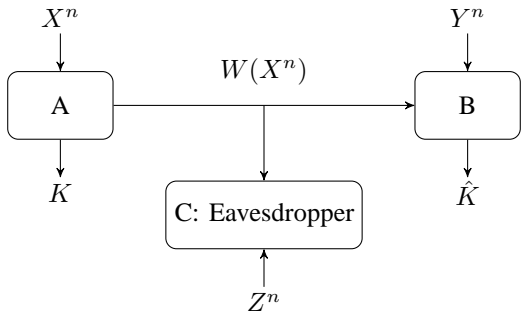
An important scenario for secret key agreement (a.k.a. key generation) arises when terminals at distant locations have access to correlated sources and are allowed to communicate publicly in order to decide on a key which is kept unknown to an eavesdropper.

The fundamental limit on the amount of secret key that can be generated from discrete memoryless sources was studied in \cite{csiszar2000common},\cite{ahlswede1998common}, where single-letter solutions were derived for the class of protocols allowing limited one-way communication from one terminal to the other. However, for many models of interest in practice, the key capacity remains unknown, since the optimizations over auxiliary random variables in those single-letter formulas are usually hard to solve.

In \cite{watanabe} the fundamental limit was extended to sources with continuous alphabets; and it was shown that for vector Gaussian sources, which are natural models of multiple input multiple output (MIMO) systems, one auxiliary random variable suffices to characterize the rate region, instead of two in the general case \cite{csiszar2000common}, and it is enough to consider auxiliary random vectors that are jointly Gaussian with the sources. This observation is formally stated in Fact~\ref{fact1} ahead, the proof of which in \cite{watanabe} was based on the enhancement technique introduced by Weingarten et~al.~\cite{weingarten2004capacity}.
Consequently, the capacity region for vector Gaussian sources was posed as a (generally non-convex) matrix optimization problem. Still, an explicit formula for the key capacity was not derived except for scalar Gaussian sources.

In this paper we provide an explicit formula for the key capacity of vector Gaussian sources by considering a more general setup: the key capacity of arbitrary product sources. Specifically, suppose terminals A and B and an eavesdropper observe discrete memoryless vector sources $\mathbf{X}=(X_i)_{i=1}^L$, $\mathbf{Y}=(Y_i)_{i=1}^L$ and $\mathbf{Z}=(Z_i)_{i=1}^L$ respectively, where
\begin{align}
P_{\mathbf{XY}}=&\prod_{i=1}^L P_{X_iY_i},\label{assump1}
\\
P_{\mathbf{XZ}}=&\prod_{i=1}^L P_{X_iZ_i}.\label{assump2}
\end{align}
We call $(\mb{X},\mb{Y},\mb{Z})$ a \emph{product source} because of the structure of its joint probability distribution. An example of product sources is illustrated in Figure~\ref{fig1}.\footnote{Actually Figure~\ref{fig1} only illustrates an unnecessarily special case of \eqref{assump1} and \eqref{assump2} where $P_{\bf XYZ}=\prod_{i=1}^L P_{X_iY_iZ_i}$; c.f.~Section~\ref{s2c}.}
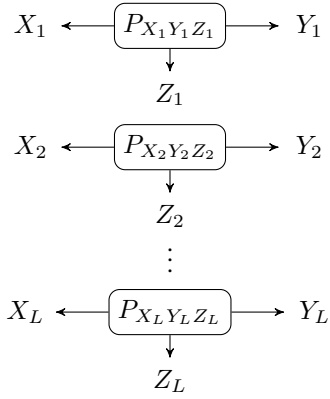
\begin{figure}[h!]
  \centering
\begin{tikzpicture}
[node distance=0.7cm,minimum height=6mm,minimum width=8mm,arw/.style={->,>=stealth'}]
  \node[rectangle,draw,rounded corners] (p1) {$P_{X_1Y_1Z_1}$};
  \node[rectangle] (x1) [left =of p1] {$X_1$};
  \node[rectangle] (y1) [right =of p1] {$Y_1$};
  \node[rectangle] (z1) [below =0.3cm of p1] {$Z_1$};
  \draw [arw] (p1) to node{} (x1);
  \draw [arw] (p1) to node{} (y1);
  \draw [arw] (p1) to node{} (z1);

  \node[rectangle,draw,rounded corners] (p2) [below =1cm of p1] {$P_{X_2Y_2Z_2}$};
  \node[rectangle] (x2) [left =of p2] {$X_2$};
  \node[rectangle] (y2) [right =of p2] {$Y_2$};
  \node[rectangle] (z2) [below =0.3cm of p2] {$Z_2$};
  \draw [arw] (p2) to node{} (x2);
  \draw [arw] (p2) to node{} (y2);
  \draw [arw] (p2) to node{} (z2);

  \node[rectangle] (dots) [below =0.7cm of p2] {$\vdots$};

  \node[rectangle,draw,rounded corners] (pL) [below =0.1cm of dots] {$P_{X_LY_LZ_L}$};
  \node[rectangle] (xL) [left =of pL] {$X_L$};
  \node[rectangle] (yL) [right =of pL] {$Y_L$};
  \node[rectangle] (zL) [below =0.3cm of pL] {$Z_L$};
  \draw [arw] (pL) to node{} (xL);
  \draw [arw] (pL) to node{} (yL);
  \draw [arw] (pL) to node{} (zL);
\end{tikzpicture}
  \caption{An illustration of the product sources in \eqref{assump1} and \eqref{assump2}.}\label{fig1}
\end{figure}
The maximal rate of secret key achievable as a function of public communication rate $r$ from A to B is denoted as $R(r)$. We show that
\begin{align}\label{disp}
R(r)=\max_{\sum_{i=1}^L r_i\le r}\sum_{i=1}^L R_{i}(r_{i}),
\end{align}
where $R_{i}(r_{i})$ is the key-communication function corresponding to the $i$-th source triple: $(X_i,Y_i,Z_i)$. This is analogous to a result due to Shannon \cite{shannon1959coding} on the rate distortion function of a product source with a separable distortion measure, which is obtained by summing the rates and distortions of points in the individual rate-distortion curves with the same slope.

In the case of jointly vector Gaussian sources without an eavesdropper (or with an eavesdropper but under a certain commutative condition on the covariance matrices), one can always apply separate invertible linear transforms on the vectors observed at A and B so that the source distribution is of the form in \eqref{assump1} and \eqref{assump2}, thus deriving an explicit formula of $R(r)$ utilizing corresponding results of scalar Gaussian sources. The solution displays a ``water filling'' behavior similar to the rate distortion function of vector Gaussian sources (e.g.\ \cite{cover2012elements}).

When the eavesdropper is present, the key-communication function is not always explicitly derived for vector Gaussian sources since the aforementioned commutative condition does not always hold. This motivates us to consider the maximum amount of secret key obtainable per bit of communication, denoted by $\eta_Z(X;Y)$. For vector Gaussian sources $\eta_{\mb{Z}}(\mb{X};\mb{Y})$ can always be explicitly found; and in order to upper bound $\eta_{\mb{Z}}(\mb{X};\mb{Y})$ we use an idea similar to but different than the \emph{enhanced channel} introduced in \cite{weingarten2004capacity}. Analogous to $\eta_Z(X;Y)$ is the notion of channel capacity per unit cost, introduced in \cite{verdu1990channel}.
As in the case of channel capacity per unit cost \cite{verdu1990channel}, a general formula for $\eta_Z(X;Y)$ can be obtained which is usually easier to compute both numerically and analytically. Some other general properties of $\eta_Z(X;Y)$ are discussed, including a formula of this quantity for product sources.

There is a curious connection between our results for product sources and the tensorization property of a strong data processing inequality originally studied by Erkip and Cover \cite{erkip1998efficiency} and amended recently by Anantharam et al. \cite{anantharam2013maximal}. Suppose $P_{XY}=P_XP_{Y|X}$ is given, and
\begin{align}
s^*(X;Y)=\sup_{U-X-Y,I(U;X)\neq0}\frac{I(U;Y)}{I(U;X)}.
\end{align}
In \cite{erkip1998efficiency} it was mistakenly claimed that
\begin{align}\label{star}
s^*(X;Y)=\rho_{\rm m}^2(X;Y)
\end{align}
where $\rho_{\rm m}^2(X;Y)$ denotes the maximal correlation coefficient \cite{witsenhausen1975sequences}. In fact, \cite{anantharam2013maximal} shows that \eqref{star} does not hold in general and gives a general but less explicit expression:
\begin{align}\label{ssup}
s^*(X;Y)=\sup_{Q_X\neq P_X}\frac{D(Q_Y||P_Y)}{D(Q_X||P_X)}
\end{align}
where\footnote{This notation defines a measure $Q_Y$ via $Q_Y(A):=\int P_{Y|X=x}(A){\rm d}Q_X(x)$ for any measurable $A\subseteq \mathcal{Y}$.}
$Q_X\to P_{Y|X} \to Q_Y$. Although $\rho_{\rm m}^2(X;Y)$ and $s^*(X;Y)$ tensorize and do agree for some simple distributions of $P_{XY}$ such as Gaussian and binary with equiprobable marginals,
it was already shown in \cite{ahlswede1976spreading} that they are not equal in general. Moreover, they are both closely linked to the problem of key generation \cite{witsenhausen1975sequences}\cite{zhao}.\footnote{For the reason we just discussed, the $\rho^2_{\rm m}(X;Y)$ in the expressions of efficiency functions in \cite{zhao} should be replaced by $s^*(X;Y)$.} To add one more connection between $s^*(X;Y)$ and key generation, we demonstrate that (\ref{disp}) implies the tensorization property of $s^*(X;Y)$. \footnote{Following our ISIT presentation of this work \cite{liu2014isit}, Beigi and Gohari \cite{beigi2015} extended such an idea and introduced several new tensorizing measures of correlation from the operational perspectives of coding theorems.}
The tensorization property of $s^*(X;Y)$ turns out to be the key to many of its applications, c.f.~\cite{ahlswede1976spreading} \cite{courtade2013outer} \cite{anan_conj}.
In particular, it was shown in \cite{ahlswede1976spreading} via the tensorization of hypercontractivity of Markov operators.

Related to (memoryless) product Gaussian sources are (scalar) stationary Gaussian processes which generally have memory, since intuitively one can consider the spectral representation of stationary Gaussian processes and apply the insights from the above results concerning product sources. However there are several technical difficulties in turning this intuition into a formal proof; for example the known achievability bounds for the model under our consideration are mostly confined to memoryless sources. Thus as the first step of our proof we derive an original one-shot achievability bound via resolvability for general sources. It is relatively well known that resolvability can be applied to wiretap channels (see \cite{hayashi2006general} and the references therein), and wiretap channel codes can be employed in the encoding schemes in key agreement (an idea due to \cite{maurer1993secret}; see also \cite[Section~22.4.3]{el2011network}). Based on these connections, a recent paper \cite{bloch2013strong} derived upper and lower bounds on the key capacity for sources with memory. However those bounds may be loose, and they are still asymptotic (expressed in terms of probabilistic $\limsup$ of random variables) rather than one-shot. Moreover the setting therein is a special case of ours where the public communication rate is unlimited, and the proof technique involving modulo sums only applies to discrete sources, therefore those results are still not quite useful for resolving the achievable region for stationary Gaussian sources. In contrast, our achievability bound overcomes those issues by employing a different encoding scheme called \emph{likelihood encoder} proposed recently in \cite{song}. We then apply certain asymptotic approximation theorems for Toeplitz matrices when specializing to Gaussian processes.

\emph{Organization. }The formal definition of the key generation problem with limited one-way communication, as well as the setup of product sources and stationary Gaussian sources, are presented in Section \ref{sec2}. The main results are given in Section~\ref{sec3}. Section~\ref{sec3a} gives the central result concerning key generation from general product sources and it analyzes the special case of product Gaussian sources culminating in the ``water-filling'' solution. The necessary and sufficient condition under which general vector Gaussian sources can be converted to product Gaussian sources is also identified. Section~\ref{s3a} begins with several general properties on the maximal secret key per bit of communication, and ends with a formula for this quantity for general vector Gaussian sources which may not be convertible to product sources. Section~\ref{sec3c} presents the water-filling solution for the key-communication tradeoff for stationary Gaussian processes (Theorem~\ref{thm5}) and discusses the intuition behind it. To prove Theorem~\ref{thm5}, we derive a general one-shot achievability bound for key generation from sources with memory in Section~\ref{sec4}, and then apply it in Section~\ref{sec5} to finish the achievability proof of Theorem~\ref{thm5}. In Section~\ref{sec6} we mention some related problems involving product sources/channels.


\section{Preliminaries}\label{sec2}
\subsection{Key Generation with One-Way Communication: Basic Setup}\label{sec2b}
Throughout this paper, random variables (but not excluding deterministic constants) are denoted by upper-case letters, and vectors and matrices are denoted in bold face.

Consider the source model illustrated in Figure~\ref{f1}. Stationary sources of blocklength $n$ have the joint distribution $P_{X^nY^nZ^n}$, where $X_i^j$ is a short hand notation for $(X_i,\dots,X_j)^{\top}$ and $X^n:=X_1^n$. Upon receiving $X^n$, terminal A computes an integer $K\in\mathcal{K}$ and a message $W\in\mathcal{W}$,
possibly stochastically\footnote{Here we allow stochastic encoders to be consistent with the achievability scheme in Section~\ref{sec4}, although in the literature $K$ and $W$ have often been defined as functions of $X^n$.}, according to $P_{KW|X^n}$. The message $W$ is then sent
through a noiseless public channel to terminal B, and B computes the key $\hat{K}=\hat{K}(W(X^n),Y^n)\in\mathcal{K}$ based on its available information. The probability of error and the measure of security are defined by
\begin{align}
\epsilon_n&=\mathbb{P}[K\neq \hat{K}],\label{e4}
\\
\nu_n&=\log|\mathcal{K}|-H(K|W,Z^n).
\end{align}

A rate pair $(R,r)$ is said to be achievable if a sequence of schemes can be designed to satisfy the following conditions on the probability of disagreement and security:
\begin{align}
\limsup_{n\to\infty}\frac{1}{n}\log|\mathcal{W}|&\le r,\\
\liminf_{n\to\infty}\frac{1}{n}\log|\mathcal{K}|&\ge R,
\\
\lim_{n\to\infty}\epsilon_n&=0,\\
\lim_{n\to\infty}\nu_n &=0.\label{e9}
\end{align}
In the remainder of Section~\ref{sec2b} we focus on the case of stationary memoryless sources with per-symbol distribution $P_{XYZ}$. The achievable rate region is defined as
\begin{align}
\mathcal{R}(X,Y,Z):=\{(R,r)\colon(R,r)\textrm{ is achievable}\},
\end{align}
and the key-communication function
\begin{align}
R(r):=\sup\{R\colon(R,r)\in\mathcal{R}(X,Y,Z)\}
\end{align}
characterizes the maximal possible key rate given a certain public communication rate.

From \cite{csiszar2000common}, the region $\mathcal{R}(X,Y,Z)$ is the union of
\begin{align}\label{region0}
[0,I(V;Y|U)-I(V;Z|U)]\times[I(U,V;X)-I(U,V;Y),\infty)
\end{align}
over all $U,V$ such that $(U,V)-X-(Y,Z)$.

For key generation with one-way communication under our consideration, only $P_{XY}$ and $P_{XZ}$ affect the achievable key-communication rates. Although beyond those joint distributions we do not need further information about the source, it is customary to say that $P_{XYZ}$ is stochastically degraded~\cite{cover2012elements} if $X-Y-Z$ form a Markov chain under a joint distribution whose pairwise distributions are $P_{XY}$ and $P_{XZ}$. In this case, the above region can be simplified to the union of
\begin{align}\label{region0}
[0,I(V;Y)-I(V;Z)]\times[I(V;X)-I(V;Y),\infty)
\end{align}
over all $V$ such that $V-X-(Y,Z)$.

For jointly Gaussian vectors $(\mb{X},\mb{Y},\mb{Z})$ it is generally not true that $P_{\mb{X}\mb{Y}\mb{Z}}$ is stochastically degraded. Thus it might seem remarkable that still only one auxiliary random variable is needed; and it can be chosen to be jointly Gaussian with the source vectors, as summarized below:
\begin{fact}[\cite{watanabe}]\label{fact1}
Suppose $X^L,Y^L$, and $Z^L$ are jointly Gaussian vectors of length $L$, and $U$ and $V$ are random variables such that $(U,V)-X^L-(Y^L,Z^L)$ form a Markov chain. Then there exists a random vector $\bar{U}^L$ in $\mathbb{R}^L$ such that $\bar{U}^L,X^L$ are jointly Gaussian, $\bar{U}^L-X^L-(Y^L,Z^L)$, and
\begin{align}
I(\bar{U}^L;X^L)-I(\bar{U}^L;Y^L)&\le I(U,V;X^L)-I(U,V;Y^L),
\\
I(\bar{U}^L;Y^L)-I(\bar{U}^L;Z^L)&\ge I(V;Y^L|U)-I(V;Z^L|U).
\end{align}
\end{fact}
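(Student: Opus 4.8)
The statement is the nontrivial (``$\subseteq$'') direction of the claim that the key-communication region produced by arbitrary auxiliaries equals the one produced by a single Gaussian auxiliary; the reverse direction is immediate, taking $U$ constant and $V=\bar U^L$. My plan is as follows. Since only $P_{X^LY^L}$ and $P_{X^LZ^L}$ affect the quantities in question, I may write $Y^L$ and $Z^L$ as noisy linear observations of $X^L$, i.e.\ $Y^L=G_YX^L+N_Y$ and $Z^L=G_ZX^L+N_Z$ with $N_Y,N_Z$ zero-mean Gaussian and independent of $X^L$ (always possible for jointly Gaussian vectors with the prescribed pairwise laws). Then for any $\bar U^L$ that is jointly Gaussian with $X^L$, all relevant mutual informations are explicit $\log\det$ functions of the single matrix $\Lambda:=\mathrm{Cov}(X^L\mid\bar U^L)$, which ranges over $0\preceq\Lambda\preceq\mathrm{Cov}(X^L)$ (realized e.g.\ by $\bar U^L=X^L+W$ for a suitable Gaussian $W$, the boundary and rank-deficient cases reached by a limiting argument); in particular $\bar U^L$ can be taken of length $L$ with $\bar U^L-X^L-(Y^L,Z^L)$. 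The problem thus becomes: exhibit a $\Lambda$ for which the two displayed inequalities hold.

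The second step is the standard reduction from two auxiliaries to one. With $T:=(U,V)$, so that $U-T-X^L-(Y^L,Z^L)$, the chain rule gives $I(U,V;X^L)-I(U,V;Y^L)=I(T;X^L)-I(T;Y^L)$ and $I(V;Y^L\mid U)-I(V;Z^L\mid U)=[I(T;Y^L)-I(T;Z^L)]-[I(U;Y^L)-I(U;Z^L)]$. Substituting the $\log\det$ expressions and cancelling the unconditional entropies $h(X^L),h(Y^L),h(Z^L)$ common to both sides, each inequality becomes a comparison between a difference of conditional differential entropies of linearly-transformed, Gaussian-noise-corrupted copies of $X^L$ (conditioned on $T$, and on $U$) and the corresponding Gaussian value at $\Lambda$. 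For the communication inequality this follows from the vector conditional entropy power inequality together with the ``Gaussian is the worst additive noise'' property. The key-rate inequality is more delicate, both because of the contribution of the inner auxiliary $U$ through $I(U;Y^L)-I(U;Z^L)$ and because $(Y^L,Z^L)$ carry no stochastic-degradedness ordering given $X^L$, so the entropy power inequality does not by itself align the two determinants.

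The crux, and the step I expect to be the main obstacle, is precisely this non-degradedness. I would handle it exactly by the enhancement technique of Weingarten et al.: replace $Z^L$ by an enhanced, stochastically more informative Gaussian observation $\tilde Z^L$ for which $X^L-Y^L-\tilde Z^L$ (or the opposite order, as needed) is a Markov chain; prove the key-rate inequality in the resulting degraded model by the conditional entropy power inequality; and then verify that at the chosen $\Lambda$ the enhanced bound coincides with the original one. This last matching step is the heart of the argument: one must pick the enhancement whose noise covariance is ``aligned'' with $\Lambda$ through a KKT-type stationarity condition on the covariance matrices, so that the chain ``original quantity $\ge$ enhanced (degraded) quantity $=$ Gaussian quantity at $\Lambda$'' is tight throughout. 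Singular covariance matrices are dealt with by a perturbation/limiting argument in the spirit of Liu and Viswanath. It remains to check that a single $\Lambda$ can be made to satisfy both inequalities at once --- e.g.\ by choosing it along a monotone one-parameter family of candidate Gaussian auxiliaries indexed by the induced communication rate --- which then yields the desired $\bar U^L$ and completes the proof.
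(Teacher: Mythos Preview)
The paper does not give its own proof of Fact~\ref{fact1}; it is quoted from \cite{watanabe}, and the only information the paper supplies is that the proof there ``was based on the enhancement technique introduced by Weingarten et~al.~\cite{weingarten2004capacity}.'' Your plan is exactly along these lines---write $Y^L,Z^L$ as Gaussian linear observations of $X^L$, reduce the pair $(U,V)$ to a single auxiliary $T=(U,V)$ via the chain rule, and handle the lack of stochastic degradedness by enhancing the eavesdropper's observation so that a conditional entropy-power argument applies, with a KKT-type alignment making the enhanced bound tight at the chosen $\Lambda$. So at the level of approach there is nothing in the present paper to contrast with; you are proposing precisely the technique the paper attributes to \cite{watanabe}.

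One caution on the execution. Your last paragraph leaves open whether a \emph{single} $\Lambda$ meets both displayed inequalities, and the suggested fix (``choosing it along a monotone one-parameter family'') is not by itself enough: the communication constraint and the key-rate constraint push $\Lambda$ in opposite directions, and a one-parameter sweep does not a priori yield a common point. In the actual enhancement argument this is not handled as a separate step; rather, the enhanced degraded source is constructed from the KKT conditions of the (Gaussian) optimization so that the \emph{same} $\Lambda$ is simultaneously optimal for the enhanced problem and feasible for the original one, which delivers both inequalities at once. That coupling is the real content of the proof and deserves to be stated as the mechanism, not as a residual check.
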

As a consequence of Fact~\ref{fact1} the region $\mathcal{R}(\bf X,Y,Z)$ is the union of
\begin{align}\label{region1}
[0,I(\mb{U};\mb{Y})-I(\mb{U};\mb{Z})]\times[I(\mb{U};\mb{X})-I(\mb{U};\mb{Y}),\infty)
\end{align}
over all $\bf U$ such that $\bf U-X-(Y,Z)$ and $\bf U,X$ are jointly Gaussian. Note that $(\mb{U},\mb{X},\mb{Y},\mb{Z})$ are necessarily jointly Gaussian as well because of the Markov chain condition.

\subsection{Key Generation from Product Sources}\label{s2c}
A product source is just a particular stationary memoryless source in which $P_{\mb{X}\mb{Y}\mb{Z}}$ has the structure of \eqref{assump1} and \eqref{assump2}. Hence the setup for a product source model is the same as the stationary case of Part \ref{sec2b} with the exception that $X,Y$ and $Z$ are replaced with $L$-vectors $\bf X,Y$ and $\mb{Z}$.

We remind the reader that $\mathcal{R}(X,Y,Z)$ in \ref{sec2b} depends only on $P_{XY}$ and $P_{XZ}$, hence we do not need to define a product source with the more stringent condition of $P_{\bf XYZ}=\prod_{i=1}^L P_{X_iY_iZ_i}$.

\section{Main Results}\label{sec3}

\subsection{Secret Key Generation from Product Sources}\label{sec3a}
Suppose we know the function $R_i(r)$ for each ``factor'' in the product source; what can we say about $R(r)$ for the whole source?
As Theorem \ref{thm3} elucidates, the rate splitting approach in which we produce keys separately for each factor source (with appropriately selected rates) achieves the optimal key rate. This is analogous to a result in rate distortion theory \cite{shannon1959coding} as remarked in the introduction.
\begin{theorem}
\label{thm3}
In the problem of key generation from product sources satisfying \eqref{assump1} and \eqref{assump2}, the maximum key rate satisfies
\begin{align}\label{eq18}
R(r)=\max_{\sum_{i=1}^L r_i\le r}\sum_{i=1}^L R_{i}(r_{i}),
\end{align}
where $R_{i}(r_i)$ is the key-communication function corresponding to the $i$'th source triple $(X_i,Y_i,Z_i)$. Further, if $R_{i}(\cdot)$ is differentiable and $(r_1^*,\dots,r_L^*)$ achieves the maximum in \eqref{eq18},
then for each $i$, either $R'_{i}(r^*_{i})=\mu$ for some constant $\mu$ or $r^*_{i}=0$.
\end{theorem}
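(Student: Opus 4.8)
The plan is to prove \eqref{eq18} by establishing the two inequalities separately, then derive the differentiability characterization by a standard Lagrangian/supporting-hyperplane argument.

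\emph{Achievability ($\geq$).} Fix any split $(r_1,\dots,r_L)$ with $\sum_i r_i\le r$. For each factor source $(X_i,Y_i,Z_i)$ pick an auxiliary $V_i$ (and $U_i$ in the general non-degraded case) achieving a point on $R_i(\cdot)$ at communication rate $r_i$, i.e.\ with $I(V_i;Y_i)-I(V_i;Z_i)$ arbitrarily close to $R_i(r_i)$ and $I(V_i;X_i)-I(V_i;Y_i)\le r_i$. Because of the product structure \eqref{assump1}--\eqref{assump2}, the ``super-letter'' auxiliary $\mb V=(V_1,\dots,V_L)$ with the $V_i$ conditionally independent given $\mb X$ satisfies $\mb V-\mb X-(\mb Y,\mb Z)$, and by additivity of mutual information across independent blocks,
\begin{align}
I(\mb V;\mb Y)-I(\mb V;\mb Z)&=\sum_{i=1}^L\bigl(I(V_i;Y_i)-I(V_i;Z_i)\bigr),\\
I(\mb V;\mb X)-I(\mb V;\mb Y)&=\sum_{i=1}^L\bigl(I(V_i;X_i)-I(V_i;Y_i)\bigr)\le r.
\end{align}
Plugging $\mb V$ into the single-letter region \eqref{region0} (or \eqref{region1} with two auxiliaries in the general case) gives $R(r)\ge\sum_i R_i(r_i)$; taking the max over admissible splits yields ``$\ge$''. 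I would also need a brief continuity/limiting remark so that near-optimal choices for each factor combine to a near-optimal choice overall.

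\emph{Converse ($\leq$).} This is the substantive direction. Start from the single-letter characterization of $\mathcal R(\mb X,\mb Y,\mb Z)$: there exist $\mb U,\mb V$ with $(\mb U,\mb V)-\mb X-(\mb Y,\mb Z)$, $I(\mb U,\mb V;\mb X)-I(\mb U,\mb V;\mb Y)\le r$, and $R(r)=I(\mb V;\mb Y|\mb U)-I(\mb V;\mb Z|\mb U)$. The goal is to ``tensorize'' the auxiliaries: write $R(r)$ as a sum over $i$ of quantities of the form $I(V_i';Y_i|U_i')-I(V_i';Z_i|U_i')$ and the constraint as a sum of $I(U_i',V_i';X_i)-I(U_i',V_i';Y_i)$, with $(U_i',V_i')-X_i-(Y_i,Z_i)$ for each $i$. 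The natural choice is $U_i'=(\mb U, X^{i-1})$ (or $(\mb U, X_{i+1}^L)$) and $V_i'=(\mb V,X^{i-1})$, mimicking the usual single-letterization telescoping. The key computation is a chain-rule expansion: using \eqref{assump1}, $I(\mb V;\mb Y|\mb U)=\sum_i I(\mb V;Y_i|\mb U,Y^{i-1})$, and one massages the conditioning from $Y^{i-1}$ to $X^{i-1}$ exploiting the Markov structure $Y^{i-1}-X^{i-1}-(\mb U,\mb V, X_i^L,Y_i)$ that follows from \eqref{assump1}. Doing the same for the $\mb Z$ term using \eqref{assump2}, and for the communication constraint, one obtains $R(r)\le\sum_i\bigl(I(V_i';Y_i|U_i')-I(V_i';Z_i|U_i')\bigr)$ with $\sum_i\bigl(I(U_i',V_i';X_i)-I(U_i',V_i';Y_i)\bigr)\le r$. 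Setting $r_i:=I(U_i',V_i';X_i)-I(U_i',V_i';Y_i)$ (which I would first argue may be taken nonnegative, or else handle sign issues by noting the region is monotone), each summand is at most $R_i(r_i)$ by definition of $R_i$, so $R(r)\le\sum_i R_i(r_i)\le\max_{\sum r_i\le r}\sum_i R_i(r_i)$. The main obstacle is getting the Markov conditions $(U_i',V_i')-X_i-(Y_i,Z_i)$ to hold for the chosen auxiliaries; this is exactly where \eqref{assump1} and \eqref{assump2} (rather than a joint product $\prod P_{X_iY_iZ_i}$) are used, and the bookkeeping of which past/future variables to include in $U_i'$ versus $V_i'$ must be done carefully so that \emph{both} the key expression and the rate constraint telescope correctly with the same auxiliaries.

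\emph{Differentiability characterization.} The function $r\mapsto\max_{\sum r_i\le r}\sum_i R_i(r_i)$ is a supremal convolution; at an optimal split the first-order (KKT) conditions give that all coordinates with $r_i^*>0$ share a common derivative $\mu$ (the multiplier of the budget constraint $\sum r_i\le r$), while coordinates at the boundary satisfy $r_i^*=0$. Concretely, if $R_i'(r_i^*)<R_j'(r_j^*)$ with $r_i^*>0$, shifting an $\epsilon$ of budget from $i$ to $j$ strictly increases the objective, contradicting optimality; hence all active coordinates have equal derivative, and the inactive ones are at $0$. I would state this as a short paragraph invoking differentiability of each $R_i$ as hypothesized, without needing concavity (though each $R_i$ is in fact concave by a standard time-sharing argument, which I could mention to make the stationarity condition also sufficient).
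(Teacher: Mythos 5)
The overall three-part plan (achievability by rate splitting, converse by tensorizing the auxiliaries, KKT for the slope condition) matches the paper, and both your achievability sketch and your differentiability argument are fine. The paper's achievability is operational (combine per-factor codes and observe the joint key is secure by \eqref{assump2}), whereas yours goes through the single-letter region; either works. The KKT discussion is standard.

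The gap is in the converse, and specifically in your choice of per-letter auxiliaries. The paper's Lemma~\ref{lem1} first replaces $Z^L$ by a conditionally independent version $\bar Z^L$ with $P_{UVX^LY^L\bar Z^L}=P_{UVX^L}P_{Y^L|X^L}P_{Z^L|X^L}$ (legitimate because only $P_{X^LY^L}$ and $P_{X^LZ^L}$ matter), and then takes $\bar U_i=(Y^{i-1},\bar Z_{i+1}^L,U)$, $V_i=V$. With this choice, the Csisz\'ar-sum identity makes the key-rate term an exact telescoping sum,
\begin{align}
I(V;Y^L|U)-I(V;\bar Z^L|U)=\sum_{i=1}^L\bigl[I(V_i;Y_i|\bar U_i)-I(V_i;\bar Z_i|\bar U_i)\bigr],
\end{align}
while the communication-rate term becomes a $\ge$-inequality. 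Your proposed auxiliaries $U_i'=(\mb U,X^{i-1})$, $V_i'=(\mb V,X^{i-1})$ do not produce this. If you expand $I(\mb V;\mb Y|\mb U)=\sum_i I(\mb V;Y_i|\mb U,Y^{i-1})$ and try to ``massage'' $Y^{i-1}$ into $X^{i-1}$ via $Y^{i-1}-X^{i-1}-(\mb U,\mb V,Y_i)$, you get
\begin{align}
I(\mb V;Y_i|\mb U,Y^{i-1})=I(\mb V;Y_i|\mb U,X^{i-1})+I(X^{i-1};Y_i|\mb U,Y^{i-1})-I(X^{i-1};Y_i|\mb U,\mb V,Y^{i-1}),
\end{align}
and an analogous identity with $Z$ in place of $Y$. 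Subtracting, the residual $\sum_i\bigl[I(X^{i-1};Y_i|\mb U,Y^{i-1})-I(X^{i-1};Y_i|\mb U,\mb V,Y^{i-1})\bigr]-\sum_i\bigl[I(X^{i-1};Z_i|\mb U,Z^{i-1})-I(X^{i-1};Z_i|\mb U,\mb V,Z^{i-1})\bigr]$ has no definite sign, so you cannot conclude $I(\mb V;\mb Y|\mb U)-I(\mb V;\mb Z|\mb U)\le\sum_i\bigl[I(\mb V;Y_i|\mb U,X^{i-1})-I(\mb V;Z_i|\mb U,X^{i-1})\bigr]$. (The rate-constraint direction does go through with $X^{i-1}$ conditioning, but that alone is not enough.) The $X^{i-1}$-type conditioning is natural only when there is no eavesdropper, where the paper's Lemma~\ref{lem5} uses exactly $U_2=(U,X_1)$; with a $Z$ present you need the mixed $(Y^{i-1},\bar Z_{i+1}^L)$ conditioning together with the $\bar Z$ replacement to make both the key rate and the rate constraint tensorize simultaneously. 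So, while you correctly flag that the bookkeeping must make both quantities telescope with the same auxiliaries, the specific choice you propose does not accomplish that, and this is precisely the step that requires the extra idea.
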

\begin{remark}
The result of \eqref{eq18} can be equivalently expressed as $\mathcal{R}({\bf X,Y,Z})=\sum_{i=1}^L\mathcal{R}(X_i,Y_i,Z_i)$, where the summation is the Minkowski sum of sets in the Euclidean space.
\end{remark}
\begin{proof}
Each key rate of $R_{i}(r^*_{i})$ can be approached by a scheme that operates on the $i$'th source triple separately using a communication rate of $r_i^*$. From \eqref{assump2}, the combination of these schemes forms a legitimate scheme for the product source, since the keys generated by those schemes are independent and their combination is (asymptotically) independent of $W$ and $\mb{Z}^n$. Thus $\geq$ holds in \eqref{eq18}~
\footnote{From this argument we see that $\ge$ in \eqref{eq18} only requires \eqref{assump2} but not \eqref{assump1}. In words, a rate-splitting key agreement scheme designed for product sources will be reliable and secure even if the $\mb{Y}$ vector is correlated. This can only correlate the decoding errors, which are negligible anyway.
}.

By (\ref{region0}) the achievable region $\mathcal{R}(X^L,Y^L,Z^L)$ is the union of
\begin{align}\label{region}
&[0,I(V;Y^L|U)-I(V;Z^L|U)]
\nonumber
\\
&\times[I(U,V;X^L)-I(U,V;Y^L),\infty)
\end{align}
over all $(U,V)$ such that $(U,V)-X^L-(Y^L,Z^L)$. The achievable region with rate splitting is the union of
\begin{align}
&\left[0,\sum_{i=1}^L I(V_i;Y_i|U_i)-\sum_{i=1}^LI(V_i;Z_i|U_i)\right]
\nonumber
\\
&\times\left[\sum_{i=1}^L I(U_i,V_i;X_i)-\sum_{i=1}^L I(U_i,V_i;Y_i),\infty\right)
\end{align}
over all $(U_i,V_i)$ such that $(U_i,V_i)-X_i-(Y_i,Z_i)$, which contains the union of \eqref{region}, according to Lemma~\ref{lem1} in Appendix~\ref{app1}. Hence we also have $\leq$ in \eqref{eq18}.

The last claim in the theorem for differentiable $R_{i}(\cdot)$ can be verified from the KKT condition and the fact that $R_i(\cdot)$ is a concave function for each $i$.
\end{proof}

From Theorem \ref{thm3} we derive the communication-rate--key-rate tradeoff for product Gaussian sources. The solution displays a ``water-filling'' behaviour which is reminiscent of the rate-distortion function for Gaussian vectors~\cite{cover2012elements}.
\begin{theorem}\label{thm4}
If $(X^L,Y^L,Z^L)$ are product Gaussian sources, then the achievable communication and key rates are parameterized by $\mu>0$ as
\begin{align}\label{rp0}
r=\frac{1}{2}\sum_{i\colon\beta_i>\mu}\log\frac{\beta_i(\mu+1)}{(\beta_i+1)\mu},
\end{align}
\begin{align}\label{rk0}
R=\frac{1}{2}\sum_{i\colon\beta_i>\mu}\log\frac{\beta_i+1}{\mu+1},
\end{align}
where
\begin{align}
\beta_i:=\frac{\rho_{X_iY_i}^2-\rho_{X_iZ_i}^2}{1-\rho_{X_iY_i}^2}.
\end{align}

\end{theorem}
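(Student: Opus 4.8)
The plan is to feed the scalar Gaussian key--communication function into the water-filling characterization already provided by Theorem~\ref{thm3}. Concretely, Theorem~\ref{thm3} gives $R(r)=\max_{\sum_i r_i\le r}\sum_i R_i(r_i)$ with each $R_i$ concave, so the optimal allocation is governed by a single multiplier $\mu>0$: each active factor operates at the point of its curve where $R_i'(r_i^*)=\mu$, and the inactive factors get $r_i^*=0$. Thus the proof reduces to three tasks: (i) write down $R_i(\cdot)$ for a scalar Gaussian triple, (ii) invert the common-slope equation $R_i'(r_i^*)=\mu$, and (iii) substitute the resulting operating point back into $r_i$ and $R_i$ and sum over the active indices.

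For (i) I would use Fact~\ref{fact1}: for a scalar Gaussian triple $(X_i,Y_i,Z_i)$ it suffices to range over a single jointly Gaussian auxiliary $U$ with $U-X_i-(Y_i,Z_i)$. Normalizing $\mathrm{Var}(X_i)=1$ and writing $\rho_1=\rho_{X_iY_i}$, $\rho_2=\rho_{X_iZ_i}$, I would parametrize such $U$ by $u:=1-\mathrm{Var}(X_i\mid U)\in[0,1)$; Markovity gives $\mathrm{Var}(Y_i\mid U)=1-\rho_1^2u$ and $\mathrm{Var}(Z_i\mid U)=1-\rho_2^2u$, so that
\begin{align*}
r_i(u)=\tfrac12\log\frac{1-\rho_1^2u}{1-u},\qquad R_i(u)=\tfrac12\log\frac{1-\rho_2^2u}{1-\rho_1^2u}.
\end{align*}
When $\rho_1^2\le\rho_2^2$ (equivalently $\beta_i\le0$) we have $R_i\equiv0$, matching the claim; otherwise both $r_i(u)$ and $R_i(u)$ are strictly increasing in $u$, so this parametric curve is precisely the graph of $R_i(\cdot)$, whose concavity is the fact invoked in Theorem~\ref{thm3}.

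For (ii) I would differentiate and simplify to
\begin{align*}
R_i'(r_i)=\frac{dR_i/du}{dr_i/du}=\beta_i\,\frac{1-u}{1-\rho_2^2u},
\end{align*}
which decreases from $R_i'(0)=\beta_i$; hence the slope $\mu$ is realized with $r_i^*>0$ exactly when $\beta_i>\mu$, and then $\beta_i(1-u)=\mu(1-\rho_2^2u)$ yields $u=u_i(\mu)=\frac{\beta_i-\mu}{\beta_i-\mu\rho_2^2}\in[0,1)$, while for $\beta_i\le\mu$ one takes $r_i^*=0$. This already reproduces the ``$\beta_i>\mu$'' selection rule in the statement. For (iii), plugging $u_i(\mu)$ into $r_i(u)$ and $R_i(u)$ and using the identities $\beta_i+1=\frac{1-\rho_2^2}{1-\rho_1^2}$ and $\frac{\beta_i}{\beta_i+1}=\frac{\rho_1^2-\rho_2^2}{1-\rho_2^2}$ collapses $r_i^*$ to $\tfrac12\log\frac{\beta_i(\mu+1)}{(\beta_i+1)\mu}$ and $R_i^*$ to $\tfrac12\log\frac{\beta_i+1}{\mu+1}$; summing over $\{i:\beta_i>\mu\}$ gives \eqref{rp0} and \eqref{rk0}.

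The genuinely load-bearing step is (i): pinning down the scalar Gaussian key--communication function, which rests both on being allowed to use a single Gaussian auxiliary (Fact~\ref{fact1}) and on the monotonicity that lets the parametric curve be read directly as the graph of $R_i$. Once that is in place, the rest is just the KKT condition from Theorem~\ref{thm3} together with back-substitution bookkeeping; the only remaining friction is the mild algebraic care needed there, namely checking $u_i(\mu)\in[0,1)$ and disposing of the degenerate case $\rho_1^2\le\rho_2^2$.
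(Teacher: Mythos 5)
Your proof is correct and follows the same overall route as the paper: apply the rate-splitting/KKT characterization of Theorem~\ref{thm3} to the scalar Gaussian key--communication function, solve the common-slope condition $R_i'(r_i^*)=\mu$ for the active indices, and back-substitute. The only difference is that you re-derive the scalar Gaussian curve from Fact~\ref{fact1} via the parametrization $u=1-\mathrm{Var}(X_i\mid U)$, whereas the paper simply cites the closed form $R(r)=\tfrac12\log\left(1+\beta^+-\beta^+e^{-2r}\right)$ from \cite{watanabe} and differentiates it.
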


\begin{remark}
The usefulness of the $i$-th component of the product source is completely captured by $\beta_i$. In \eqref{rp0} and \eqref{rk0} the $i$'th term enters the summations if and only if $\beta_i$ is large enough; in other words, only the components that are strong enough are ``on''. This is similar to water-filling over Gaussian channels (avoiding low SNR channels) and rate-distortion (neglecting to compress weak source components).
\end{remark}
\begin{remark}
In Theorem \ref{thm4} if we drop assumption that $Y^L$ is Gaussian, i.e, only assume that $(X^L,Y^L,Z^L)$ is a product source where $X^L,Z^L$ are jointly Gaussian, then Theorem \ref{thm4} will provide an inner bound on the achievable region. To see this, let $Y^L_G$ be a random variable such that $X^L$ and $Y^L_G$ are jointly Gaussian, and $(X^L,Y^L_G)$ has the same first and second order statistics as $(X^L,Y^L)$. If $U$ is a Gaussian auxiliary random variable as in (\ref{region1}), then $(U,Y^L_G)$ has the same first and second order statistics as $(U,Y^L)$.
For arbitrary $P\ll Q$, define the \emph{relative information}
\begin{align}
\imath_{P\|Q}(x)=\log\frac{{\rm d}P}{{\rm d}Q}(x)
\end{align}
as the logarithm of the Radon-Nikodym derivative.
Then we have
\begin{align}
&\quad I(U;Y^L)-I(U;Y_G^L)\nonumber
\\
&=D(P_{UY^L}||P_U\times P_{Y^L})
-\mathbb{E}\left[\imath_{P_{UY_G^L}||P_U\times P_{Y_G^L}}(U,Y_G^L)\right]
\\
&=D(P_{UY^L}||P_U\times P_{Y^L})
-\mathbb{E}\left[\imath_{P_{UY_G^L}||P_U\times P_{Y_G^L}}(U,Y^L)\right]\label{e61}
\\
&=D(P_{UY^L}||P_{UY_G^L})
-D(P_{Y^L}||P_{Y_G^L})
\\
&\ge0
\end{align}
where (\ref{e61}) is because $\imath_{P_{UY_G^L}||P_U\times P_{Y_G^L}}(u,y^L)$ is only a second order polynomial of $(u,y^L)$. Hence the secret key can be generated more efficiently than in the Gaussian case:
\begin{align}
I(U;X^L)-I(U;Y^L)&\le I(U;X^L)-I(U;Y_G^L)\\
I(U;Y^L)-I(U;Z^L)&\ge I(U;Y_G^L)-I(U;Z^L).
\end{align}
\end{remark}

For a positive-semidefinite matrix $\bf\Sigma$, let ${\bf \Sigma}^{-1/2}$ be a positive definite matrix such that ${\bf \Sigma}^{-1/2}{\bf \Sigma\Sigma}^{-1/2}={\bf I}_r$, where ${\bf I}_{r}$ denotes the identity matrix of dimension $r={\rm rank}({\bf \Sigma})$. Also write ${\bf \Sigma}^{-1}=({\bf \Sigma}^{-1/2})^2$, which is the matrix inverse when $\bf \Sigma$ is invertible. Note that under this definition ${\bf \Sigma}^{-1/2}$ (and therefore ${\bf \Sigma}^{-1}$) may not be unique. The following fact about Gaussian distributions is useful. The proof is based on the singular value decomposition and is deferred to Appendix \ref{appc}.
\begin{lemma}\label{lem3}
For a set of vector random variables $\mathbf{X},\mathbf{Y},\mathbf{Z}$, there exist invertible linear transforms $\mathbf{X}\mapsto \bar{\mathbf{X}}$, $\mathbf{Y}\mapsto \bar{\mathbf{Y}}$, $\mathbf{Z}\mapsto \bar{\mathbf{Z}}$ such that all the five covariance matrices $\mathbf{\Sigma}_{\bar{\mathbf{X}}}$, $\mathbf{\Sigma}_{\bar{\mathbf{Y}}}$, $\mathbf{\Sigma}_{\bar{\mathbf{Z}}}$, $\mathbf{\Sigma}_{\bar{\mathbf{X}}\bar{\mathbf{Y}}}$, $\mathbf{\Sigma}_{\bar{\mathbf{X}}\bar{\mathbf{Z}}}$ are diagonalized if and only if $\mb{G}$ commutes with $\mb{H}$ where
\begin{align}
\mb{G}&:=\mathbf{\Sigma}^{-1/2}_{\mathbf{X}}\mathbf{\Sigma}_{\mathbf{X}\mathbf{Y}}\mathbf{\Sigma}^{-1}_{\mathbf{Y}}\mathbf{\Sigma}_{\mathbf{Y}\mathbf{X}}\mathbf{\Sigma}^{-1/2}_{\mathbf{X}},
\label{defm}
\\
\mb{H}&:=\mathbf{\Sigma}^{-1/2}_{\mathbf{X}}\mathbf{\Sigma}_{\mathbf{X}\mathbf{Z}}\mathbf{\Sigma}^{-1}_{\mathbf{Z}}\mathbf{\Sigma}_{\mathbf{Z}\mathbf{X}}\mathbf{\Sigma}^{-1/2}_{\mathbf{X}}.
\label{defn}
\end{align}
\end{lemma}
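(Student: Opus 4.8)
The plan is to reduce the simultaneous diagonalizability of all five covariance matrices to a question about the matrices $\mb{G}$ and $\mb{H}$ by absorbing the obvious ``whitening'' degrees of freedom first. First I would apply the linear transforms $\mb{X}\mapsto\mb{\Sigma}_{\mb{X}}^{-1/2}\mb{X}$, $\mb{Y}\mapsto\mb{\Sigma}_{\mb{Y}}^{-1/2}\mb{Y}$, $\mb{Z}\mapsto\mb{\Sigma}_{\mb{Z}}^{-1/2}\mb{Z}$, which makes $\mb{\Sigma}_{\mb{X}}$, $\mb{\Sigma}_{\mb{Y}}$, $\mb{\Sigma}_{\mb{Z}}$ all equal to identity matrices (on the appropriate ranks); these three are now automatically diagonal and any further transform we apply must be orthogonal in order to preserve that. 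So the remaining freedom is a triple of orthogonal matrices $(\mb{O}_X,\mb{O}_Y,\mb{O}_Z)$, and the task is to choose them so that the transformed cross-covariances $\mb{O}_X\mb{A}\mb{O}_Y^{\top}$ and $\mb{O}_X\mb{B}\mb{O}_Z^{\top}$ are both diagonal, where $\mb{A}:=\mb{\Sigma}_{\mb{X}}^{-1/2}\mb{\Sigma}_{\mb{XY}}\mb{\Sigma}_{\mb{Y}}^{-1/2}$ and $\mb{B}:=\mb{\Sigma}_{\mb{X}}^{-1/2}\mb{\Sigma}_{\mb{XZ}}\mb{\Sigma}_{\mb{Z}}^{-1/2}$ are the whitened cross-covariances. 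Observe that $\mb{G}=\mb{A}\mb{A}^{\top}$ and $\mb{H}=\mb{B}\mb{B}^{\top}$ after the whitening (this is exactly what \eqref{defm}--\eqref{defn} say once the $\mb{\Sigma}^{-1/2}$ factors are read as whitening).

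Next I would recall the singular value decomposition: $\mb{O}_X\mb{A}\mb{O}_Y^{\top}$ is diagonal with nonnegative entries precisely when $(\mb{O}_X^{\top},\mb{O}_Y^{\top})$ are a pair of left/right singular-vector matrices of $\mb{A}$, and similarly a valid $(\mb{O}_X,\mb{O}_Z)$ must diagonalize $\mb{B}$. The left singular vectors of $\mb{A}$ are eigenvectors of $\mb{A}\mb{A}^{\top}=\mb{G}$, and the left singular vectors of $\mb{B}$ are eigenvectors of $\mb{B}\mb{B}^{\top}=\mb{H}$. So the existence of a common $\mb{O}_X$ that works for both decompositions is equivalent to the existence of a single orthogonal matrix that simultaneously diagonalizes $\mb{G}$ and $\mb{H}$ — and two real symmetric matrices are simultaneously orthogonally diagonalizable if and only if they commute. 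This is the direction where the ``if and only if'' gets its teeth. For the ``if'' direction, assuming $\mb{G}\mb{H}=\mb{H}\mb{G}$, pick $\mb{O}_X$ diagonalizing both, then choose $\mb{O}_Y$, $\mb{O}_Z$ to complete the SVDs of $\mb{A}$ and $\mb{B}$ respectively; one checks the five resulting matrices are all diagonal. For ``only if,'' if all five are diagonalized then in particular $\mb{G}$ and $\mb{H}$ are simultaneously diagonalized by $\mb{O}_X$, hence commute; and commuting is preserved under the initial whitening since conjugation by the fixed invertible whitening matrices is a similarity that respects the definitions \eqref{defm}--\eqref{defn}.

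The main obstacle is bookkeeping around degenerate cases: when $\mb{\Sigma}_{\mb{X}}$, $\mb{\Sigma}_{\mb{Y}}$, or $\mb{\Sigma}_{\mb{Z}}$ is singular (so $\mb{\Sigma}^{-1/2}$ is not unique and lives on a subspace), and when $\mb{G}$ or $\mb{H}$ has repeated eigenvalues (so the singular-vector matrices are not unique and one must verify that a choice diagonalizing $\mb{G}$ can be refined within each eigenspace to also realize the SVD of $\mb{A}$, and likewise for $\mb{B}$ and $\mb{H}$). Handling repeated eigenvalues is the genuinely careful part: within a common eigenspace of $\mb{G}$ and $\mb{H}$ one still has freedom to rotate, and one must use it to simultaneously align the singular directions of $\mb{A}$ and $\mb{B}$ — but on that subspace $\mb{A}\mb{A}^{\top}$ and $\mb{B}\mb{B}^{\top}$ are both scalar multiples of the identity, so the SVDs impose no further constraint on $\mb{O}_X$ there and the freedom can be spent freely on $\mb{O}_Y$ and $\mb{O}_Z$. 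I would relegate these verifications, together with the explicit SVD manipulations, to Appendix~\ref{appc} as the statement indicates.
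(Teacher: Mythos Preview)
Your proposal is correct and follows essentially the same route as the paper's proof in Appendix~\ref{appc}: whiten each vector with $\mb{\Sigma}^{-1/2}$, reduce the residual freedom to orthogonal matrices, identify $\mb{G}=\mb{A}\mb{A}^{\top}$ and $\mb{H}=\mb{B}\mb{B}^{\top}$, and invoke simultaneous diagonalizability of commuting symmetric matrices together with the SVD to produce $\mb{O}_Y,\mb{O}_Z$. Your discussion of the degenerate cases (singular covariances, repeated eigenvalues) is in fact more explicit than the paper's, which handles rank deficiency via block structure but does not spell out the repeated-eigenvalue argument.
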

\begin{remark}
The linear transforms being invertible ensures $\mathcal{R}\bf(X,Y,Z)=\mathcal{R}\bf(\bar{X},\bar{Y},\bar{Z})$.
\end{remark}
\begin{remark}\label{rem8}
For Hermitian matrices $\mb{A}$ and $\mb{B}$ of the same dimensions, we write $\mb{A}\preceq\mb{B}$ if $\mb{B}-\mb{A}$ is positive-semidefinite.
From the positive definiteness of covariance matrices it's straightforward to show that $\mb{0}\preceq {\bf G}\preceq \mb{I}$ and $\mb{0}\preceq {\bf H}\preceq \mb{I}$, where $\mb{G}$ and $\mb{H}$ are defined in \eqref{defm} and \eqref{defn}. Indeed $\bf G$ and $\bf H$ take the roles of $\rho_{XY}^2$ and $\rho_{XZ}^2$ in the scalar Gaussian case.
\end{remark}
\begin{remark}\label{rem2}
If $X^n$, $Y^n$ and $Z^n$ are drawn from jointly stationary Gaussian processes, then the commutativity assumption in the lemma is satisfied approximately for $n$ large. This is due to the commutativity of convolution.
\end{remark}

\begin{corollary}\label{cor2}
If $\mathbf{X}$ and $\mathbf{Y}$ are jointly Gaussian vectors, then there exist invertible linear transforms $\mathbf{X}\mapsto \bar{\mathbf{X}}$ and $\mathbf{Y}\mapsto \bar{\mathbf{Y}}$ such that $\mathbf{\Sigma}_{\bar{\mathbf{X}}}$, $\mathbf{\Sigma}_{\bar{\mathbf{Y}}}$ and $\mathbf{\Sigma}_{\bar{\mathbf{X}}\bar{\mathbf{Y}}}$ are diagonalized.
\end{corollary}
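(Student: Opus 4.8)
\emph{Proof plan.} The quickest route is to recognize this as the eavesdropper-free specialization of Lemma~\ref{lem3}: if one augments the pair $(\mathbf{X},\mathbf{Y})$ with a $\mathbf{Z}$ that is independent of $\mathbf{X}$, then $\mathbf{\Sigma}_{\mathbf{X}\mathbf{Z}}=\mathbf{0}$, so the matrix $\mathbf{H}$ in \eqref{defn} is the zero matrix, which commutes with any $\mathbf{G}$. The ``if'' direction of Lemma~\ref{lem3} then supplies invertible transforms simultaneously diagonalizing all five covariance matrices, in particular $\mathbf{\Sigma}_{\bar{\mathbf{X}}}$, $\mathbf{\Sigma}_{\bar{\mathbf{Y}}}$, $\mathbf{\Sigma}_{\bar{\mathbf{X}}\bar{\mathbf{Y}}}$, and we simply discard the transform applied to $\mathbf{Z}$. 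So modulo Lemma~\ref{lem3} there is essentially nothing to do; still, it is worth recording the direct argument, which is just canonical correlation analysis.

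Concretely, I would first whiten both vectors, setting $\mathbf{X}':=\mathbf{\Sigma}_{\mathbf{X}}^{-1/2}\mathbf{X}$ and $\mathbf{Y}':=\mathbf{\Sigma}_{\mathbf{Y}}^{-1/2}\mathbf{Y}$, so that $\mathbf{\Sigma}_{\mathbf{X}'}$ and $\mathbf{\Sigma}_{\mathbf{Y}'}$ become (projections onto coordinate subspaces, hence) diagonal; by the convention fixed just before Lemma~\ref{lem3}, $\mathbf{\Sigma}^{-1/2}$ is invertible, so these maps are invertible. Next I would take a singular value decomposition $\mathbf{\Sigma}_{\mathbf{X}'\mathbf{Y}'}=\mathbf{U}\mathbf{D}\mathbf{V}^{\top}$ with $\mathbf{U},\mathbf{V}$ orthogonal and $\mathbf{D}$ diagonal, and rotate: $\bar{\mathbf{X}}:=\mathbf{U}^{\top}\mathbf{X}'$ and $\bar{\mathbf{Y}}:=\mathbf{V}^{\top}\mathbf{Y}'$. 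Orthogonal conjugation leaves the whitened autocovariances diagonal (a scalar multiple of a coordinate projection is mapped to another such, and here it is $\mathbf{I}$ on the relevant subspace), while $\mathbf{\Sigma}_{\bar{\mathbf{X}}\bar{\mathbf{Y}}}=\mathbf{U}^{\top}\mathbf{\Sigma}_{\mathbf{X}'\mathbf{Y}'}\mathbf{V}=\mathbf{D}$ is diagonal. The overall maps $\mathbf{X}\mapsto\mathbf{U}^{\top}\mathbf{\Sigma}_{\mathbf{X}}^{-1/2}\mathbf{X}$ and $\mathbf{Y}\mapsto\mathbf{V}^{\top}\mathbf{\Sigma}_{\mathbf{Y}}^{-1/2}\mathbf{Y}$ are composites of invertible linear maps, hence invertible, which is all that is claimed.

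The only point needing a sentence of care is the degenerate case in which $\mathbf{\Sigma}_{\mathbf{X}}$ or $\mathbf{\Sigma}_{\mathbf{Y}}$ is rank-deficient; this is exactly why the generalized-inverse convention for $\mathbf{\Sigma}^{-1/2}$ was introduced, and with it the whitening and rotation steps go through unchanged (one is effectively working inside the supports of $\mathbf{X}$ and $\mathbf{Y}$). I do not expect any real obstacle here — the statement is a routine corollary, included mainly because it is the hypothesis-free instance of Lemma~\ref{lem3} that is invoked repeatedly in the Gaussian applications (and which underlies the reduction of a vector Gaussian pair to a product source when no eavesdropper is present).
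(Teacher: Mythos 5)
Your proposal is correct and matches the paper's intent: the corollary is stated without its own proof precisely because it is the specialization of Lemma~\ref{lem3} to an independent $\mathbf{Z}$ (so $\mathbf{H}=\mathbf{0}$, which commutes with everything), and the direct whitening-plus-SVD argument you sketch is exactly the content of the paper's proof of Lemma~\ref{lem3} in Appendix~\ref{appc}. The one place where your shorthand glosses slightly — conjugating a coordinate projection by an arbitrary orthogonal matrix does not in general give another coordinate projection — is handled in the paper by noting that positive-semidefiniteness forces the cross-covariance into the block form $\bigl(\begin{smallmatrix}\mathbf{A}_{x,y}&\mathbf{0}\\\mathbf{0}&\mathbf{0}\end{smallmatrix}\bigr)$, so one runs the SVD on $\mathbf{A}_{x,y}$ alone and pads with the identity, which is consistent with your remark about working inside the supports.
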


Thanks to Corollary \ref{cor2}, the task of finding the key capacity of arbitrarily correlated Gaussian vector sources in the absence of an eavesdropper is reduced to the case of product Gaussian sources $(X^L,Y^L)$ satisfying \eqref{assump1} and \eqref{assump2}. Note that assuming $\mb{X}$ and $\mb{Y}$ have the same length does not lose generality since one can always pad zero coordinates to $\mb{X}$ and $\mb{Y}$ so that they have the same length. In the presence of an eavesdropper, it is not always possible to reduce the problem to the case of product sources, since the commutativity condition in Lemma~\ref{lem3} is not always fulfilled; we discuss its practical relevance later in \ref{sec3c}.

\begin{proof}[Proof of Theorem \ref{thm4}]
Reference \cite{watanabe} derived an explicit formula for the achievable key rate in the case of scalar Gaussian sources, which, in our notations, can be expressed as
\begin{align}\label{scalar2}
R(r)= \frac{1}{2}
\log\left(1+\beta^+-\beta^+ \exp(-2r)\right)
\end{align}
where $\beta:=\frac{\rho_{xy}^2-\rho_{xz}^2}{1-\rho_{xy}^2}$ and $\beta^+:=\max\{\beta,0\}$. The bases of $\log$ and $\exp$ in \eqref{scalar2} depend on the unit of the information rates (e.g.\ bits or nats).

Now consider the product sources, and suppose that $(r_1^*,\dots,r_L^*)$ achieves the maximum in \eqref{eq18}.
According to Theorem \ref{thm3}, either $R_{i}'(r^*_{i})=\frac{\beta^+_i \exp(-2r^*_{i})}{1+\beta^+_i-\beta^+_i \exp(-2r^*_{i})}=\mu$ or $r^*_{i}=0$ for each $i$, where $\mu$ is some constant. For fixed $\mu$, this means
\begin{align}
r^*_{i}=\max\left\{0,\frac{1}{2}\log
\frac{(1+\mu)\beta^+_i}{\mu(1+\beta^+_i)}\right\}.
\end{align}
Equivalently, we can write
\begin{align}\label{rp}
r^*_{i}=\frac{1}{2}\log\frac{\beta^+_i(m_i+1)}{(\beta^+_i+1)m_i},
\end{align}
where $m_i:=\min\{\mu,\beta^+_i\}$. The claim then follows by substituting the value of $r^*_{i}$ into (\ref{scalar2}) and applying (\ref{eq18}).
\end{proof}

\subsection{Secret Key per Bit of Communication}\label{s3a}
Fix $P_{XYZ}$. The \emph{secret key per bit of communication} is defined as
\begin{align}
\eta_Z(X;Y):=\sup_{r> 0}\frac{R(r)}{r}.
\end{align}
From the convexity of the achievable rate region one immediately sees that $\eta_Z(X;Y):=\lim_{r\to 0}\frac{R(r)}{r}$.

Define
\begin{align}
s^*_Z(X;Y)
:=\sup_{U,V}\frac{I(V;Y|U)-I(V;Z|U)}
{I(V;X|U)-I(V;Z|U)+I(U;X)-I(U;Y)},\label{ss}
\end{align}
where the supremum is over all $(U,V)$ such that $(U,V)-X-(Y,Z)$ form a Markov chain and that the denominator in (\ref{ss}) does not vanish. Note that the denominator is always nonnegative; if it vanishes for all $U,V$, then so does the numerator and we set $s^*_Z(X;Y)=0$. From \eqref{ss} and \eqref{region0} it is immediate to see how $s^*_Z(X;Y)$ is related to $\eta_Z(X;Y)$. In the special case of no eavesdropper, this is related to the result in \cite{zhao}, which uses the incorrect constant $\rho^2_{\rm m}(X;Y)$ as we mentioned earlier.

\begin{theorem}\label{thm3}
Secret key per bit of communication is linked to $s^*_Z(X;Y)$ by
\begin{align}\label{e17}
\eta_Z(X;Y)=\frac{s^*_Z(X;Y)}{1-s^*_Z(X;Y)}.
\end{align}
\end{theorem}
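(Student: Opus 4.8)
The plan is to observe that, after passing to corner points of the region \eqref{region0}, both $s^*_Z(X;Y)$ and $\eta_Z(X;Y)$ are suprema of \emph{pointwise}-related ratios over the \emph{same} family of admissible auxiliaries, linked by the increasing bijection $\phi\colon[0,1)\to[0,\infty)$, $\phi(t)=t/(1-t)$ (with $\phi(1):=\infty$). No convexity of the rate region is needed for this theorem, only the definition $\eta_Z=\sup_{r>0}R(r)/r$.

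I would start with an identity. For admissible $(U,V)$ (meaning $(U,V)-X-(Y,Z)$) write
\begin{align}
a:=I(V;Y|U)-I(V;Z|U),\qquad b:=I(U,V;X)-I(U,V;Y),
\end{align}
the achievable key rate and associated communication rate in \eqref{region0}. Expanding $I(U,V;X)=I(U;X)+I(V;X|U)$ and $I(U,V;Y)=I(U;Y)+I(V;Y|U)$ shows that the denominator of \eqref{ss} is exactly $a+b$ and its numerator exactly $a$, so
\begin{align}\label{sstar_ab}
s^*_Z(X;Y)=\sup_{(U,V):\,a+b\neq0}\frac{a}{a+b}.
\end{align}
I would also note $b\ge0$ by data processing for $(U,V)-X-Y$, and $a+b\ge0$: conditioning on $U=u$ preserves $V-X-Z$, so $I(V;X|U)\ge I(V;Z|U)$, and adding $I(U;X)\ge I(U;Y)$ gives $a+b=[I(V;X|U)-I(V;Z|U)]+[I(U;X)-I(U;Y)]\ge0$. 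Hence $0\le s^*_Z\le1$ (the lower bound via a constant $V$, or the convention in \eqref{ss}), so $\phi(s^*_Z)$ is well defined in $[0,\infty]$.

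For $\eta_Z\le\phi(s^*_Z)$: fix $r>0$; if $R(r)=0$ then $R(r)/r=0\le\phi(s^*_Z)$, while if $R(r)>0$ then for each $a\in(0,R(r))$ there is an admissible $(U,V)$ with $b\le r$ and key rate $a'>a$, so $a'+b>0$ and \eqref{sstar_ab} gives $s^*_Z\ge a'/(a'+b)\ge a/(a+r)$; letting $a\uparrow R(r)$, $s^*_Z\ge R(r)/(R(r)+r)$, hence $\phi(s^*_Z)\ge\phi\!\big(R(r)/(R(r)+r)\big)=R(r)/r$. Taking $\sup_{r>0}$ gives $\phi(s^*_Z)\ge\eta_Z$. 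For $\eta_Z\ge\phi(s^*_Z)$: the cases $s^*_Z\in\{0,1\}$ are immediate ($\phi(0)=0\le\eta_Z$; for $s^*_Z=1$ one finds admissible $(U,V)$ with $b/(a+b)$ arbitrarily small and $a>0$, forcing $\eta_Z=\infty$), so assume $s^*_Z\in(0,1)$. For $\varepsilon\in(0,s^*_Z)$ pick admissible $(U,V)$ with $a/(a+b)>s^*_Z-\varepsilon$, so $a>0$; if $b=0$ then $(a,0)\in\mathcal{R}(X,Y,Z)$ and monotonicity of $R(\cdot)$ gives $R(r)\ge a>0$ for all $r>0$, so $\eta_Z=\infty$, while if $b>0$ then $(a,b)\in\mathcal{R}(X,Y,Z)$ gives $R(b)\ge a$, so $\eta_Z\ge R(b)/b\ge a/b=\phi\!\big(a/(a+b)\big)>\phi(s^*_Z-\varepsilon)$; letting $\varepsilon\downarrow0$ and using continuity of $\phi$ finishes.

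The conceptual content is \eqref{sstar_ab} and the monotonicity of $\phi$; the part that requires care — and which I expect to be the only real obstacle — is the boundary bookkeeping: that auxiliaries with $a<0$ are irrelevant both to $R(r)$ and to the value of $s^*_Z$, the convention when the denominator of \eqref{ss} vanishes identically, and the ``free key'' case $b=0$ giving $\eta_Z=\infty=\phi(1)$, all of which must be lined up so that the stated equality holds with the $\infty$ conventions.
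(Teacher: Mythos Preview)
Your proposal is correct and takes essentially the same approach as the paper: both rest on the observation that the denominator in the definition of $s^*_Z$ equals $a+b$ where $a$ is the key-rate corner and $b$ the communication-rate corner of \eqref{region0}, so that $s^*_Z=\sup a/(a+b)$ and $\eta_Z=\sup a/b$ are related by the monotone bijection $t\mapsto t/(1-t)$. The paper simply declares this ``immediate''; your version spells out the two inequalities and the boundary cases ($s^*_Z\in\{0,1\}$, $b=0$) that the paper leaves implicit, and correctly notes that convexity is not needed for the equality $\eta_Z=\sup_{r>0}R(r)/r=\phi(s^*_Z)$ itself (the paper invokes convexity only for the separate claim $\eta_Z=\lim_{r\downarrow0}R(r)/r$).
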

\begin{proof}
The characterization of $\mathcal{R}(X,Y,Z)$ is given in (\ref{region0}). Thus $\sup_{r> 0}\frac{R(r)}{r}=\frac{s^*_Z(X;Y)}{1-s^*_Z(X;Y)}$ follows immediately from the definition of $s^*_Z(X;Y)$. The claim of
$
\lim_{r\downarrow0}\frac{R(r)}{r}=\sup_{r> 0}\frac{R(r)}{r}
$
follows from the convexity of the achievable rate region.
\end{proof}
The following results provide some basic properties of $s^*_Z(X;Y)$. The rationale for defining $s^*_Z(X;Y)$ can be explained by Theorem~\ref{thm3} and 6) in Theorem~\ref{thm1}.
\begin{theorem}[Properties of $s^*_Z(X;Y)$]\label{thm1}~\\
\begin{enumerate}
\item For any $P_{XYZ}$,
\begin{align}\label{e35}
0\le s^*_Z(X;Y)\le 1.
\end{align}

\item For product sources $(X^L,Y^L,Z^L)$ as in \eqref{assump1} and \eqref{assump2},
\begin{align}\label{eq16}
s^*_{Z^L}(X^L;Y^L)=\max_{1\le i\le L}s^*_{Z_i}(X_i;Y_i).
\end{align}

\item For arbitrary $P_{XYZ}=P_XP_{YZ|X}$,
\begin{align}\label{e18}
&\quad s^*_Z(X;Y)\nonumber
\\
&=\sup_{Q_{VX}}\frac{I(\bar{V};\bar{Y})-I(\bar{V};\bar{Z})}
{I(\bar{V};\bar{X})-I(\bar{V};\bar{Z})+D(Q_X||P_X)-D(Q_Y||P_Y)},
\end{align}
where $\bar{V},\bar{X},\bar{Y},\bar{Z}$ have the joint distribution
\begin{align}\label{e19}
 P_{\bar{V}\bar{X}\bar{Y}\bar{Z}}(v,x,y,z)=Q_{VX}(v,x)P_{YZ|X}(y,z|x).
\end{align}
The supremum is over all $Q_{VX}$ such that the above denominator does not vanish.

Computation can be further simplified when the source has certain structures:\\
\item If $P_{XYZ}$ is stochastically degraded,
\begin{align}
s^*_Z(X;Y)&=\sup_{U}\frac{I(U;Y|Z)}{I(U;X|Z)}\nonumber\\
&=\sup_{Q_X}\frac{D(Q_Y||P_Y)-D(Q_Z||P_Z)}{D(Q_X||P_X)-D(Q_Z||P_Z)}\label{eq19}
\end{align}
where $Q_{XYZ}=Q_X P_{YZ|X}$.\\

\item As a special case of (\ref{eq19}), if $X=(X',Z)$, $Y=(Y',Z)$, then
\begin{align}\label{eq20}
s^*_Z(X;Y)=\esssup_{z\in\mathcal{Z}}s^*(X';Y'|Z=z),
\end{align}
where $\esssup$ denotes the essential supremum of a real valued function.
\item If $Z$ is constant, we recover $s_Z^*(X;Y)=s^*(X;Y)$, which is the best constant for the strong data processing inequality defined in \eqref{ssup}.
\end{enumerate}
\end{theorem}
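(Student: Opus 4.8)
The six claims fall into two groups. Properties 1) and 2) I would derive from the operational results already in hand---the product-source identity \eqref{eq18} and the relation \eqref{e17} between $\eta_Z(X;Y)$ and $s^*_Z(X;Y)$---together with the observation that $t\mapsto t/(1+t)$, the inverse of $s\mapsto s/(1-s)$, is an increasing bijection of $[0,+\infty]$ onto $[0,1]$. For 1): in \eqref{ss} the denominator minus the numerator equals $I(U,V;X)-I(U,V;Y)\ge 0$ since $(U,V)-X-Y$, so every admissible ratio is $\le 1$; and $\eta_Z(X;Y)=\sup_{r>0}R(r)/r\ge 0$ (because $R(r)\ge 0$), so \eqref{e17} forces $s^*_Z(X;Y)\in[0,1]$, as $s/(1-s)<0$ off $[0,1]$ (a two-line direct argument also works: take $V$ constant and any $U$ with $I(U;X)>I(U;Y)$, the one remaining case being $X$ a deterministic function of $Y$, for which every admissible ratio equals $1$). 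For 2): \eqref{eq18} and the elementary bound $R_i(r_i)\le r_i\,\eta_{Z_i}$ give $\eta_{Z^L}(X^L;Y^L)=\max_{1\le i\le L}\eta_{Z_i}(X_i;Y_i)$---``$\ge$'' by putting the whole rate budget on the best factor, ``$\le$'' by summing---and the bijection above transfers this identity to \eqref{eq16}, with $\eta_{Z_i}=+\infty\Leftrightarrow s^*_{Z_i}=1$ allowed. Properties 3)--6) I would prove by the device behind the Anantharam et al.\ formula \eqref{ssup}: conditioning on $U$, a mediant (convexity) inequality for the conditional average, and a change-of-measure perturbation for the converse; 4)--6) then fall out as specializations of 3).

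For 3), conditioning \eqref{ss} on $U$ and observing that given $U=u$ the tuple $(V,X,Y,Z)$ has exactly the law \eqref{e19} with $Q_{VX}=P_{VX|U=u}$, the numerator of \eqref{ss} equals $\mathbb{E}_U[N(P_{VX|U})]$ and the denominator $\mathbb{E}_U[D(P_{VX|U})]$, where $N(Q_{VX})$ and $D(Q_{VX})$ are the numerator and denominator of \eqref{e18}. Since $D(\cdot)\ge 0$ always (data processing for mutual information, and for divergence along $Q_X\to P_{YZ|X}$), the mediant inequality $\mathbb{E}_U[N]/\mathbb{E}_U[D]\le\esssup_U N/D$ yields $s^*_Z(X;Y)\le\sup_{Q_{VX}}N(Q_{VX})/D(Q_{VX})$; the atoms where $D=0$ are harmless because a short Markov-chain argument shows $N(Q_{VX})\le 0$ whenever $D(Q_{VX})=0$. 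For the reverse inequality, given $Q_{VX}$ with $D(Q_{VX})>0$ I would take $U\in\{0,1\}$ with $\mathbb{P}[U=1]=\lambda$, $P_{X|U=1}=Q_X$, $P_{X|U=0}=(P_X-\lambda Q_X)/(1-\lambda)$, $P_{V|X,U=1}=Q_{V|X}$ and $V$ constant on $\{U=0\}$; then the numerator of \eqref{ss} is exactly $\lambda\,N(Q_{VX})$, while local quadraticity of relative entropy makes the two $I(U;\cdot)$ terms contribute $\lambda\bigl(D(Q_X\|P_X)-D(Q_Y\|P_Y)\bigr)+O(\lambda^2)$, so the denominator is $\lambda\,D(Q_{VX})+O(\lambda^2)$ and letting $\lambda\downarrow 0$ gives $s^*_Z(X;Y)\ge N(Q_{VX})/D(Q_{VX})$.

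Properties 4)--6) I would read off from the degraded structure. When $X-Y-Z$ is Markov, every admissible $(U,V)$ satisfies $(U,V)-X-Y-Z$; writing $a=I(U,V;X)-I(U,V;Z)$, $b=I(U,V;Y)-I(U,V;Z)$, $c=I(U;Y)-I(U;Z)$ one gets $0\le c\le b\le a$ and the ratio in \eqref{ss} equals $\tfrac{b-c}{a-c}\le\tfrac ba$, so $U$ can be ``absorbed'' and $s^*_Z(X;Y)=\sup_V\tfrac{I(V;Y)-I(V;Z)}{I(V;X)-I(V;Z)}=\sup_U\tfrac{I(U;Y|Z)}{I(U;X|Z)}$ via $I(U;Y|Z)=I(U;Y)-I(U;Z)$ and $I(U;X|Z)=I(U;X)-I(U;Z)$; the divergence form \eqref{eq19} then follows either by repeating the conditioning/perturbation argument of 3) or directly from \eqref{e18} using the chain rule for divergence. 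For 5), substituting $X=(X',Z)$, $Y=(Y',Z)$ into \eqref{eq19}, the chain rule for divergence gives $D(Q_X\|P_X)-D(Q_Z\|P_Z)=\mathbb{E}_{Q_Z}[D(Q_{X'|Z}\|P_{X'|Z})]$ and likewise for $Y'$, so the ratio becomes $\mathbb{E}_{Q_Z}[D(Q_{Y'|Z}\|P_{Y'|Z})]/\mathbb{E}_{Q_Z}[D(Q_{X'|Z}\|P_{X'|Z})]$; the mediant inequality with \eqref{ssup} applied for each fixed $z$ gives ``$\le\esssup_{z}s^*(X';Y'|Z=z)$'', while concentrating $Q_Z$ on the (positive $P_Z$-measure) near-maximizing set gives ``$\ge$'', the \emph{essential} supremum entering precisely because $Q_Z\ll P_Z$. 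For 6), $Z$ constant kills the $I(V;Z|U)$ and $I(U;Z)$ terms, so $\tfrac{b-c}{a-c}\le\tfrac ba$ with $a=I(U,V;X)\ge b=I(U,V;Y)\ge c=I(U;Y)\ge 0$ collapses \eqref{ss} to $\sup_{(U,V)}\tfrac{I(U,V;Y)}{I(U,V;X)}=\sup_{W-X-Y}\tfrac{I(W;Y)}{I(W;X)}=s^*(X;Y)$.

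I expect the delicate step to be the matching lower bound in 3), hence also in 4)--5): making the $\lambda\downarrow 0$ perturbation rigorous beyond finite alphabets---approximating $Q_{VX}$ by laws with bounded ${\rm d}Q_X/{\rm d}P_X$ and controlling the $O(\lambda^2)$ remainder uniformly in the perturbation direction---and handling cleanly the degenerate cases with a vanishing denominator, which is where one needs $D(Q_{VX})=0\Rightarrow N(Q_{VX})\le 0$ and, for 2), the bookkeeping around $\eta_{Z_i}=+\infty$.
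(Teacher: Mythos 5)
Your proposal is correct, and much of it tracks the paper's Appendix~B, but you take a genuinely different route for several of the parts. For 1) and especially 2), you argue operationally: you use the rate--splitting identity \eqref{eq18} together with \eqref{e17} to deduce $\eta_{Z^L}(X^L;Y^L)=\max_i\eta_{Z_i}(X_i;Y_i)$ and then transfer this to $s^*$ through the bijection $t\mapsto t/(1+t)$. The paper's Appendix~B instead proves 2) ``from below'' by invoking Lemma~\ref{lem1} directly and applying the mediant bound to the single-letter ratio; the operational route you use is only mentioned in passing in the remark after the theorem. Both are correct and non-circular (the rate-splitting theorem and \eqref{e17} are each proved from \eqref{region0} and Lemma~\ref{lem1}, not from the properties being established here), but your route is shorter and makes the connection to the coding theorem explicit, whereas the paper's route is entirely single-letter and does not presuppose the converse/achievability machinery behind \eqref{eq18}. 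For 3) your proof is essentially the paper's: mediant/conditioning for the upper bound and the $\lambda\downarrow 0$ perturbation with $P_{X|U=0}=(P_X-\lambda Q_X)/(1-\lambda)$ and the $o(\lambda)$ smoothness of relative entropy for the lower bound (the one cosmetic difference is that you set $V$ to a constant on $\{U=0\}$ while the paper draws $V\sim P_V$ independently of $X$; both give $I(V;\cdot|U=0)=0$, which is all that is used). For 4), your Markov-chain ``absorption'' of $U$ via the inequality $\frac{b-c}{a-c}\le\frac ba$ for $0\le c\le b\le a$ is a clean, direct proof of the first equality in \eqref{eq19} that the paper does not spell out; the paper's Appendix~B only proves the second (divergence) equality by manipulating \eqref{e18}. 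Your pointer to ``chain rule for divergence'' for the second equality is terser than the paper's explicit calculation (their \eqref{e203}--\eqref{e206} plus a perturbation argument), so this is the one place you should fill in detail. For 5) your argument mirrors the paper's. For 6), you argue directly from \eqref{ss} with $Z$ constant, whereas the paper simply cites 4) or 5); both work since a constant $Z$ is trivially degraded.

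Two small technical points worth tightening. First, in your $\le 1$ argument for 1), be explicit that the supremum in \eqref{ss} is restricted to $(U,V)$ with strictly positive denominator $N+D'$ (where $D'=I(U,V;X)-I(U,V;Y)\ge 0$), so that $N\le N+D'$ immediately gives the bound; you should also check, as you flag, that $D'=0$ together with $N>0$ (which would make the ratio equal $1$ and $\eta_Z=\infty$) is consistent with $s^*_Z=1$. Second, in the $\lambda\downarrow 0$ step of 3) you need $Q_X\ll P_X$ (so that $D(Q_X\|P_X)$ is finite and $(P_X-\lambda Q_X)/(1-\lambda)$ is a distribution for small $\lambda$); the paper handles this by restricting the supremum in \eqref{e18} to such $Q_{VX}$, which is implicit in the formula being well-defined, and you should state it. Neither of these is a gap in the idea, only in the bookkeeping that you yourself flagged as the delicate part.
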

\begin{proof}
See Appendix \ref{app2}.
\end{proof}
\begin{remark}
The interpretation of the tensorization of $s^*_Z(X;Y)$ in \eqref{eq16} is that, with small allowable public communication, it is always efficient to only use the best component of the product sources. Alternatively, the fact that rate splitting is optimal for product sources implies the tensorization property of $s_Z^*(X;Y)$.
\end{remark}

\begin{remark}\label{rem6}
If the source is stochastically degraded, then $s_Z(X;Y)$ can be computed from \eqref{eq19} which only requires optimizing over an auxiliary distribution $Q_X$, instead of the optimization over a family of distributions $P_{U|X}$ when computing the rate region via \eqref{region1}. Similarly for non-degraded sources, \eqref{e18} only involves optimizing over $Q_{VX}$ whereas the region rate region \eqref{region0} requires optimizing over $P_{UV|X}$. Thus, in either case, the optimization problem may be considerably reduced if one is only interested in $\eta_Z(X;Y)$ instead of the whole rate region.
\end{remark}

\begin{example}[Symmetric Bernoulli Source]\label{ex1}
Suppose $X,Y$ and $Z$ are symmetric Bernoulli random variables, with $\epsilon_{XY}:=\mathbb{P}[X\neq Y]$ and $\epsilon_{XZ}:=\mathbb{P}[X\neq Z]$ satisfying $\epsilon_{XY}\le\epsilon_{XZ}<\frac{1}{2}$. The achievable region $\mathcal{R}(X,Y,Z)$ was derived in \cite{chou2012separation}, from which one can obtain
\begin{align}\label{e23}
\eta_Z(X;Y)=\frac{(1-2\epsilon_{XY})^2-(1-2\epsilon_{XZ})^2}{1-(1-2\epsilon_{XY})^2}.
\end{align}
Since $X$, $Y$, and $Z$ are stochastically degraded, we can assume without loss of generality that $X-Y-Z$ form a Markov chain, and use \eqref{e17} and \eqref{eq19} to obtain \eqref{e23}. In this case \eqref{eq19} is supremized as $Q_X$ approaches the equiprobable distribution on $\{0,1\}$.
\end{example}

\begin{example}[Scalar Gaussian Source]\label{ex2}
Setting $L=1$ in Theorem~\ref{thm4} gives
\begin{align}
\eta_Z(X;Y)=\beta
\end{align}
where
\begin{align}
\beta:=\frac{\rho_{XY}^2-\rho_{XZ}^2}{1-\rho_{XY}^2}
\end{align}
for jointly Gaussian random variables $X$, $Y$ and $Z$ satisfying $\rho_{XZ}\le\rho_{XY}<1$. We remark that the (less trivial) direction of $\eta_Z(X;Y)\le\beta$ can also be expected from Example \ref{ex1} (whereas the proof of this direction using Theorem \ref{thm4} essentially relies on entropy power inequality buried in Fact \ref{fact1}); see Appendix~\ref{ex2_app}.
\end{example}

\begin{example}[Product Gaussian Source]
If $\bf X,Y$ and $\bf Z$ are as in Theorem \ref{thm4}, then
\begin{align}
\eta_{\mb{Z}}(\mb{X};\mb{Y})=\max_{1\le i\le L}\beta_i^+,
\end{align}
where $\beta_i$ is as in Theorem~\ref{thm4}.
\end{example}

In addition to the potential dimension reduction in numerical evaluations (see Remark \ref{rem6}), another important motivation for considering $\eta_Z(X;Y)$ is that there exist source distributions for which $\eta_Z(X;Y)$ can be computed analytically even though $\mathcal{R}(X,Y,Z)$ is not completely known, as epitomized by the case of vector Gaussian sources in Theorem \ref{cor1} below. Note that Theorem \ref{cor1} holds even when the commutativity in Lemma \ref{lem3} fails. The achievability (lower bound) part of Theorem \ref{cor1} is accomplished by choosing an appropriate sequence of $Q_{VX}$ in \eqref{e18} followed by routine computations; the converse part requires slightly more ingenuity: we construct a new source distribution $P_{XY\hat{Z}}$ satisfying $\mathcal{R}(X,Y,Z)\subseteq\mathcal{R}(X,Y,\hat{Z})$, but for which the commutativity in Lemma \ref{lem3} is fulfilled and $\eta_{\hat{Z}}(X;Y)=\eta_Z(X;Y)$. Details of the proof are relegated to Appendix~\ref{app_cor1}.
\begin{theorem}\label{cor1}
If $\mb{X}$, $\mb{Y}$ and $\mb{Z}$ in the key generation model are jointly Gaussian vectors, then
\begin{align}
\eta_{\mb{Z}}(\mb{X};\mb{Y})=\lambda_{\max}^+((\mb{G}-\mb{H})(\mb{I}-\mb{G})^{-1}),
\end{align}
where $\lambda_{\max}(\cdot)$ and $\lambda_{\min}(\cdot)$ denote the largest and smallest eigenvalues of a matrix, and recall the notation $\lambda_{\max}^+:=\max\{0,\lambda_{\max}\}$.
\end{theorem}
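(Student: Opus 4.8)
The plan is to prove the two inequalities $\eta_{\mathbf{Z}}(\mathbf{X};\mathbf{Y})\ge\lambda^{+}$ and $\eta_{\mathbf{Z}}(\mathbf{X};\mathbf{Y})\le\lambda^{+}$ separately, where $\lambda^{+}:=\lambda_{\max}^{+}((\mathbf{G}-\mathbf{H})(\mathbf{I}-\mathbf{G})^{-1})$. Throughout I would work in the whitened coordinates in which $\mathbf{G}$ and $\mathbf{H}$ play the roles of $\rho_{XY}^{2}$ and $\rho_{XZ}^{2}$ (Remark~\ref{rem8}), so that $\mathbf{I}-\mathbf{G}$ and $\mathbf{I}-\mathbf{H}$ are the normalized conditional covariances of $\mathbf{X}$ given $\mathbf{Y}$ and given $\mathbf{Z}$, and I would pass freely between $\eta_{\mathbf{Z}}$ and $s^{*}_{\mathbf{Z}}$ through $\eta_{\mathbf{Z}}=s^{*}_{\mathbf{Z}}/(1-s^{*}_{\mathbf{Z}})$ (Theorem~\ref{thm3}).

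For achievability I would feed a weak Gaussian auxiliary into the formula \eqref{e18} for $s^{*}_{\mathbf{Z}}(\mathbf{X};\mathbf{Y})$: choosing $Q_{X}=P_{X}$ makes the two divergence terms vanish, so the supremum reduces to $\sup[I(\bar V;\mathbf{Y})-I(\bar V;\mathbf{Z})]/[I(\bar V;\mathbf{X})-I(\bar V;\mathbf{Z})]$ over Gaussian $\bar V-\mathbf{X}-(\mathbf{Y},\mathbf{Z})$, and I would take $\bar V$ to be $\mathbf{X}$ observed along a chosen direction $\mathbf{b}$ through a Gaussian channel of noise variance $\sigma^{2}\to\infty$. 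To leading order each mutual information is a fixed scalar multiple of a quadratic form in $\mathbf{b}$, and the ratio tends to $\mathbf{b}^{\top}(\mathbf{G}-\mathbf{H})\mathbf{b}\,/\,\mathbf{b}^{\top}(\mathbf{I}-\mathbf{H})\mathbf{b}$; maximizing over $\mathbf{b}$ and using $s^{*}_{\mathbf{Z}}\ge0$ yields $s^{*}_{\mathbf{Z}}(\mathbf{X};\mathbf{Y})\ge\lambda_{\max}^{+}((\mathbf{G}-\mathbf{H})(\mathbf{I}-\mathbf{H})^{-1})$. A short matrix identity — the same Möbius-type manipulation that turns the scalar formula \eqref{scalar2} into $\eta_{Z}(X;Y)=\beta$ — shows that $x\mapsto x/(1-x)$ sends this quantity to $\lambda^{+}$, so $\eta_{\mathbf{Z}}(\mathbf{X};\mathbf{Y})\ge\lambda^{+}$.

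For the converse I would use a Weingarten-type enhancement of the eavesdropper. The region \eqref{region1} depends on $\mathbf{Z}$ only through the quantities $I(\mathbf{U};\mathbf{Z})$, and $\hat{\mathbf{H}}\preceq\mathbf{H}$ is equivalent to $\mathbf{\Sigma}_{\mathbf{X}|\hat{\mathbf{Z}}}\succeq\mathbf{\Sigma}_{\mathbf{X}|\mathbf{Z}}$, which forces $I(\mathbf{U};\hat{\mathbf{Z}})\le I(\mathbf{U};\mathbf{Z})$ for every admissible Gaussian $\mathbf{U}$; hence any Gaussian $\hat{\mathbf{Z}}$ with $\hat{\mathbf{H}}\preceq\mathbf{H}$ satisfies $\mathcal{R}(\mathbf{X},\mathbf{Y},\mathbf{Z})\subseteq\mathcal{R}(\mathbf{X},\mathbf{Y},\hat{\mathbf{Z}})$, so $\eta_{\mathbf{Z}}(\mathbf{X};\mathbf{Y})\le\eta_{\hat{\mathbf{Z}}}(\mathbf{X};\mathbf{Y})$. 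The goal is therefore to exhibit $\hat{\mathbf{H}}$ with $\mathbf{0}\preceq\hat{\mathbf{H}}\preceq\mathbf{H}$ that commutes with $\mathbf{G}$ and for which $\lambda_{\max}^{+}((\mathbf{G}-\hat{\mathbf{H}})(\mathbf{I}-\mathbf{G})^{-1})=\lambda^{+}$: for such an $\hat{\mathbf{H}}$ Lemma~\ref{lem3} makes $(\mathbf{X},\mathbf{Y},\hat{\mathbf{Z}})$, after invertible linear maps, a product Gaussian source, so Theorem~\ref{thm4} evaluates $\eta_{\hat{\mathbf{Z}}}(\mathbf{X};\mathbf{Y})$ in closed form as $\lambda_{\max}^{+}((\mathbf{G}-\hat{\mathbf{H}})(\mathbf{I}-\mathbf{G})^{-1})=\lambda^{+}$, closing the chain.

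The hard part will be constructing $\hat{\mathbf{H}}$. Since $\hat{\mathbf{H}}\preceq\mathbf{H}$ can only enlarge the generalized eigenvalue, one also needs $\hat{\mathbf{H}}$ to be no smaller than $(1+\lambda^{+})\mathbf{G}-\lambda^{+}\mathbf{I}$, equivalently $\mathbf{G}-\hat{\mathbf{H}}\preceq\lambda^{+}(\mathbf{I}-\mathbf{G})$; the key point is that the extremal characterization of $\lambda^{+}$ as the largest value of $\mathbf{b}^{\top}(\mathbf{G}-\mathbf{H})\mathbf{b}/\mathbf{b}^{\top}(\mathbf{I}-\mathbf{G})\mathbf{b}$ gives precisely $\mathbf{G}-\mathbf{H}\preceq\lambda^{+}(\mathbf{I}-\mathbf{G})$, which makes the two-sided constraint on $\hat{\mathbf{H}}$ consistent, and I would then verify that a $\mathbf{G}$-commuting $\hat{\mathbf{H}}$ in this range exists by passing to the eigenbasis of $\mathbf{G}$ and invoking $\mathbf{0}\preceq\mathbf{H}\preceq\mathbf{I}$. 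Degenerate situations — an eigenvalue of $\mathbf{G}$ equal to $1$ (so $\mathbf{I}-\mathbf{G}$ is singular and $\eta_{\mathbf{Z}}$ may be infinite) and singular covariance matrices — should be handled by the pseudo-inverse conventions fixed before Lemma~\ref{lem3} together with a limiting argument; and when $\mathbf{G}\prec\mathbf{I}$ the converse admits an alternative, self-contained proof, because then concavity of $R(\cdot)$ gives $\eta_{\mathbf{Z}}=\lim_{r\downarrow0}R(r)/r$ and, for $r$ small, the communication constraint forces the auxiliary to be weak, so the bound follows from the same leading-order quadratic estimate used in the achievability direction.
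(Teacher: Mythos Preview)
Your plan coincides with the paper's proof. For achievability the paper also restricts to $Q_X=P_X$ in \eqref{e18} and sends in a weak rank-one Gaussian auxiliary (parametrized by a small $\epsilon$ in the direction of the top eigenvector of $(\mathbf{I}-\mathbf{H})^{-1/2}(\mathbf{G}-\mathbf{H})(\mathbf{I}-\mathbf{H})^{-1/2}$), obtaining $s^*_{\mathbf{Z}}(\mathbf{X};\mathbf{Y})\ge s^{+}$ with $s=\lambda_{\max}((\mathbf{G}-\mathbf{H})(\mathbf{I}-\mathbf{H})^{-1})$ and then passing to $\eta_{\mathbf{Z}}$ via the same M\"obius identity $\lambda/(1-\lambda)\leftrightarrow$ eigenvalues of $(\mathbf{G}-\mathbf{H})(\mathbf{I}-\mathbf{G})^{-1}$. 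For the converse the paper does exactly the enhancement you describe, but it makes a single explicit choice rather than searching in a range: it sets
\[
\hat{\mathbf{H}}:=\mathbf{I}-\tfrac{1}{1-s}(\mathbf{I}-\mathbf{G})=(1+\lambda^{+})\mathbf{G}-\lambda^{+}\mathbf{I},
\]
which is precisely your lower bound; this $\hat{\mathbf{H}}$ is already a polynomial in $\mathbf{G}$, so commutativity is automatic and no passage to the eigenbasis of $\mathbf{G}$ is needed. The inequality $\hat{\mathbf{H}}\preceq\mathbf{H}$ follows from $\mathbf{I}-\mathbf{G}\succeq(1-s)(\mathbf{I}-\mathbf{H})$, exactly the extremal characterization you invoke. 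After that, Lemma~\ref{lem3} and the product-Gaussian formula finish the argument, with every $\beta_i$ collapsing to the common value $\lambda^{+}$. So your proposal is correct and essentially identical to the paper's; the only simplification you would gain by following the paper verbatim is replacing the existential ``find a $\mathbf{G}$-commuting $\hat{\mathbf{H}}$ in the range'' step by the one-line explicit choice above.
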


\subsection{Key-Communication Function for Stationary Gaussian Processes}\label{sec3c}
We now derive the key-rate--communication-rate tradeoff for stationary Gaussian processes $(\mathbb{X},\mathbb{Y},\mathbb{Z})$. In contrast to the setting of product sources since in this section we deal with sources with memory. However as mentioned in Remark \ref{rem2}, one can still apply Lemma \ref{lem3}, and in fact the linear transforms can be easily found. Let us discuss the intuitions before diving into the formal proof. As a first attempt, it is tempting to pick the Fourier transform as the invertible linear transforms in Lemma~\ref{lem3}, since it diagonalizes circulant matrices \cite{gray2006toeplitz}. However this is not an allowable choice, since the linear transforms in Lemma~\ref{lem3} are real, thereby excluding the Fourier transform. In general, complex linear transforms are not useful for the conversion to product sources, since complex Gaussian variables may not be independent even if their correlation coefficient is zero.

The Fourier transform, however, is not too far from the correct choice. If a circulant matrix is symmetric, we can also diagonalize it with the sine/cosine orthogonal matrix (to be defined soon). In general, the cross-correlations $R_{XY}$ and $R_{XZ}$ are not symmetric, so the trick is to first pass $\mathbb{Y}$ through a filter whose impulse response is $R_{XY}$\footnote{When $R_{XY}$ is strictly bandlimited, convolution with $R_{XY}$ becomes a degenerate linear transform. In this case we can use a signal $\hat{R}_{XY}$ as an alternative, where $\hat{R}_{XY}$ has full spectrum and agrees with $R_{XY}$ in the pass-band of $R_{XY}$. The final formula of key capacity however will remain unchanged.},
the correlation function between $\mathbb{X}$ and $\mathbb{Y}$, resulting in a new process $\hat{\mathbb{Y}}$. Similarly, we construct $\hat{\mathbb{Z}}$ by convolving with $R_{XZ}$ yielding
\begin{align}
R_{X\hat{Y}}&=R_{XY}*R_{YX},\\
R_{X\hat{Z}}&=R_{XZ}*R_{ZX},
\end{align}
which are symmetric functions. Set $\bar{\mathbf{X}}={\bf Q}^{\top}X^n$, $\bar{\mathbf{Y}}={\bf Q}^{\top}\hat{Y}^n$, $\bar{\mathbf{Z}}={\bf Q}^{\top}\hat{Z}^n$ where $\bf Q$ the sine/cosine orthogonal matrix, i.e., for $1\le k,l\le n$,
\begin{align}\label{sincos}
Q_{kl}:=\left\{\begin{array}{cc}\bigskip
                 \cos(\frac{2\pi}{n}\lfloor\frac{l}{2}\rfloor k) & \textrm{~$l$ is odd;} \\
                 \sin(\frac{2\pi}{n}\lfloor\frac{l}{2}\rfloor k) & \textrm{~$l$ is even.}
               \end{array}
\right.
\end{align}
Then the covariance matrices $\bf\Sigma_{\bar{X}}$, $\bf\Sigma_{\bar{Y}}$, $\bf\Sigma_{\bar{Z}}$, $\bf\Sigma_{\bar{X}\bar{Y}}$, $\bf\Sigma_{\bar{X}\bar{Z}}$ will be asymptotically diagonal as their dimension grows.

In summary, the original Gaussian sources are converted to sources satisfying the product assumption \eqref{assump1} and \eqref{assump2} in the spectral representation, and the correlation coefficients corresponding to frequency $\omega$ (which relates to the factor $\frac{2\pi}{n}\lfloor\frac{l}{2}\rfloor$ in \eqref{sincos}) are
\begin{align}
\rho_{XY}(\omega)&:=\frac{|S_{XY}(\omega)|}{\sqrt{S_{X}(\omega)S_{Y}(\omega)}},\label{rho}\\
\rho_{XZ}(\omega)&:=\frac{|S_{XZ}(\omega)|}{\sqrt{S_{X}(\omega)S_{Z}(\omega)}},\label{rho1}
\end{align}
where $S_{X},S_{Y},S_{Z},S_{XY},S_{XZ}$ denote the spectral densities and joint spectral densities. From (\ref{rho}), (\ref{rho1}) and Theorem~\ref{thm4},
we can anticipate the expression in the next result. To prove it rigorously we impose a technical condition that requires all correlations and cross-correlations to be absolutely summable (that is, the corresponding spectrum functions are in the ``Wiener class'' \cite{gray2006toeplitz}). We do not believe this condition to be crucial for the validity of the result.
\begin{theorem}\label{thm5}
Suppose $\mathbb{X}$, $\mathbb{Y}$ and $\mathbb{Z}$ are Wiener class stationary Gaussian processes, and
\begin{align}
\beta(\omega):=\frac{|S_{XY}(\omega)|^2S_Z(\omega)-|S_{XZ}
(\omega)|^2S_Y(\omega)}{S_X(\omega)S_Y(\omega)S_Z(\omega)-
|S_{XY}(\omega)|^2S_Z(\omega)}
\end{align}
is well-defined, that is, excluding the $0/0$ case. Then the achievable communication and key rates are parameterized by $\mu>0$ as
\begin{align}
r&=\frac{1}{4\pi}\int_{\beta(\omega)>\mu}\log\frac{\beta(\omega)(\mu+1)}{(\beta(\omega)
+1)\mu}{\rm d}\omega,\label{e58}
\\
R&=\frac{1}{4\pi}\int_{\beta(\omega)>\mu}
\log\frac{\beta(\omega)+1}{\mu+1}{\rm d}\omega.\label{e59}
\end{align}
\end{theorem}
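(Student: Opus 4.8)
The plan is to transfer the finite-dimensional water-filling formula of Theorem~\ref{thm4} to the stationary setting by first reducing, via a sequence of invertible linear maps, the blocklength-$n$ Gaussian source to an approximate product source, and then letting $n\to\infty$. Concretely, I would proceed in four stages.

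\emph{Stage 1: Reduction to an (approximate) product source.} Following the discussion preceding the theorem, pass $\mathbb{Y}$ through the filter with impulse response $R_{XY}$ and $\mathbb{Z}$ through the filter with impulse response $R_{XZ}$ to obtain $\hat{\mathbb{Y}},\hat{\mathbb{Z}}$ whose cross-correlations with $\mathbb{X}$ are the symmetric functions $R_{X\hat Y}=R_{XY}*R_{YX}$ and $R_{X\hat Z}=R_{XZ}*R_{ZX}$. Apply the sine/cosine orthogonal matrix $\mathbf{Q}$ of \eqref{sincos} to $X^n,\hat Y^n,\hat Z^n$. Since $\mathbf{Q}$ is orthogonal (hence invertible) and the filtering maps are invertible (using the bandlimited workaround $\hat R_{XY}$ of the footnote when needed), the achievable region is unchanged: $\mathcal{R}(\mathbb{X}^n,\mathbb{Y}^n,\mathbb{Z}^n)=\mathcal{R}(\bar{\mathbf{X}},\bar{\mathbf{Y}},\bar{\mathbf{Z}})$. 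The five covariance matrices of $(\bar{\mathbf{X}},\bar{\mathbf{Y}},\bar{\mathbf{Z}})$ are of the form $\mathbf{Q}^\top \mathbf{T}_n(f)\mathbf{Q}$ with $\mathbf{T}_n(f)$ Toeplitz and $f$ the relevant (cross-)spectral density; because the symmetrized cross-correlations make these Toeplitz matrices symmetric, the Szeg\H{o}-type approximation theorems for Toeplitz matrices in the Wiener class (\cite{gray2006toeplitz}) give that $\mathbf{Q}^\top\mathbf{T}_n(f)\mathbf{Q}$ is asymptotically diagonal, with $k$-th diagonal entry $\approx f(\omega_k)$, $\omega_k=\tfrac{2\pi}{n}\lfloor k/2\rfloor$. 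Thus $(\bar{\mathbf{X}},\bar{\mathbf{Y}},\bar{\mathbf{Z}})$ is, up to vanishing error, a product source whose $k$-th component has correlation coefficients $\rho_{XY}(\omega_k),\rho_{XZ}(\omega_k)$ of \eqref{rho}, \eqref{rho1}, and whose corresponding $\beta$-parameter is the Riemann sample $\beta(\omega_k)$.

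\emph{Stage 2: Converse.} Here I would invoke the one-way key-communication characterization \eqref{region1} together with Theorem~\ref{thm4}/Theorem~\ref{thm3}: for a genuine product source the rates are given exactly by \eqref{rp0}, \eqref{rk0}, which are Riemann sums for \eqref{e58}, \eqref{e59}. Because the actual covariance matrices are only asymptotically diagonal, I would control the off-diagonal perturbation: bound the difference in the relevant mutual-information differences $I(\mathbf{U};\mathbf{X})-I(\mathbf{U};\mathbf{Y})$ and $I(\mathbf{U};\mathbf{Y})-I(\mathbf{U};\mathbf{Z})$ (over Gaussian $\mathbf{U}$) by a quantity that is $o(n)$, using the operator-norm smallness of the off-diagonal blocks and continuity of log-det on a compact spectral range (the Wiener-class assumption bounds spectra away from $\infty$; the $0/0$ exclusion and a standard truncation handle the lower end). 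Dividing by $n$ and passing to the limit turns the Riemann sums into the integrals \eqref{e58}, \eqref{e59}, giving the converse $R(r)\le$ the stated value.

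\emph{Stage 3: Achievability.} The subtlety, as flagged in the introduction, is that the standard single-letter achievability of \cite{csiszar2000common} is proved for memoryless sources, whereas here the source has memory; the asymptotically-diagonal structure is not literally a product source at any finite $n$. So I would instead use the one-shot information-spectrum achievability bound for key generation developed in Section~\ref{sec4} (via the likelihood encoder of \cite{song}), applied to blocklength-$n$ blocks with $n\to\infty$. One feeds in the Gaussian auxiliary $\mathbf{U}=(\mathbf{U}_k)$ chosen coordinatewise-optimally for the sampled parameters $\beta(\omega_k)$ at water level $\mu$; the relevant relative-information random variables concentrate (by the law of large numbers across the $n$ asymptotically-independent spectral coordinates, again justified by the Toeplitz approximation and the Wiener-class bounds), so the information-spectrum quantities converge to their expectations, which are exactly the integrals in \eqref{e58}, \eqref{e59}. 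This yields achievability of every point on the parameterized curve.

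\emph{Main obstacle.} I expect the crux to be Stage 1/Stage 3's quantitative control: upgrading ``$\mathbf{Q}^\top\mathbf{T}_n(f)\mathbf{Q}$ is asymptotically diagonal'' into error bounds strong enough that (i) the perturbation of all the log-det / mutual-information expressions is genuinely $o(n)$ uniformly, and (ii) the spectral coordinates are independent enough for a law-of-large-numbers argument in the information-spectrum bound — all while the spectra may approach $0$ (the $0/0$ points) so that the per-coordinate $\beta(\omega_k)$ and the associated informations are unbounded near those frequencies. Handling this likely requires a truncation argument (discard a vanishing-measure neighborhood of the bad frequencies, bound its contribution, then let the truncation parameter vanish), combined with the Wiener-class hypothesis to keep everything else uniformly bounded. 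The rest — the KKT/water-filling bookkeeping and the passage from Riemann sums to integrals — is routine given Theorems~\ref{thm3} and~\ref{thm4}.
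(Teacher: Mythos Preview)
Your proposal follows the same architecture as the paper: reduce the blocklength-$n$ source to an approximate product Gaussian source via Toeplitz--circulant approximation and the sine/cosine transform, invoke Theorem~\ref{thm4} for the converse, and use the one-shot bound of Section~\ref{sec4} for achievability. Two execution-level points are worth flagging.

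First, in Stage~3 a bare law-of-large-numbers concentration is not sufficient. The secrecy term in the one-shot bound \eqref{e62} carries a factor $\log(M_1^{3/2}/\delta)$ that grows linearly in $n$, so the soft-covering errors $T_1,T_2,T_3$ must decay \emph{exponentially} in $n$ to kill it. The paper achieves this with a Chernoff bound, supported by a lemma (Lemma~\ref{lemma3}) that gives \emph{uniform} control of the moment generating function of $\imath_{U;X}(U;X)$ over all correlation coefficients bounded away from $\pm 1$; the required uniform bound on the per-coordinate correlations is then extracted from the asymptotic-equivalence machinery (Lemma~\ref{lem4} and Fact~\ref{fact5}). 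Your ``LLN across asymptotically-independent coordinates'' would need to be upgraded to this.

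Second, the paper does not truncate a neighborhood of the bad frequencies. The well-definedness assumption on $\beta(\omega)$ already forces $S_X,S_Y,S_Z$ to be bounded away from zero; the remaining obstruction $\sup_\omega\rho_{XY}^2(\omega)=1$ is handled by first degrading $\mathbb{Y}$ with additive independent noise (so the supremum drops below $1$), proving achievability for the degraded source, and then letting the noise vanish via monotone convergence and closedness of the achievable region. Your truncation idea could be made to work, but the degradation route is cleaner and avoids having to track how the discarded band interacts with the one-shot bound.
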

\begin{remark}
\begin{align}
\eta_{\mathbb{Z}}(\mathbb{X};\mathbb{Y})=\sup_{\omega\in[0,2\pi)}(\beta^+(\omega)).
\end{align}
\end{remark}
\begin{remark}
From \eqref{rho} and \eqref{rho1} we can verify that
\begin{align}
\beta(\omega)=\frac{\rho_{XY}^2(\omega)-\rho_{XZ}^2(\omega)}{1-\rho_{XY}^2(\omega)},
\end{align}
which is the counterpart of $\beta_i$ in Theorem~\ref{thm4}.
\end{remark}
The achievability proof of Theorem~\ref{thm5} is given in Sections~\ref{sec4} and \ref{sec5}, and the converse is relegated to Appendix~\ref{app_conv}.

%

\section{Achievability of One-Shot Key Generation}\label{sec4}
The single-letter expressions of \eqref{region0} or \eqref{region1} only apply to discrete memoryless sources. In order to allow memory, and in particular to prove the achievability part of Theorem \ref{thm5}, we derive a one-shot achievability result in this section. The proof relies on a stochastic encoding scheme called \emph{likelihood encoder} \cite{song}. The idea is to introduce an idealized distribution which is easier to work with, and which approximates the true distribution in total variation distance under certain rate conditions according to soft covering lemma/resolvability \cite{cuff2012distributed}.

\begin{notation}
Given $P_{XY}$, denote the information density by
\begin{align}
\imath_{X;Y}(x;y):=\log\frac{{\rm d}P_{XY}}{{\rm d}(P_X\times P_Y)}(x,y).
\end{align}
\end{notation}
\begin{theorem}\label{thm55}
Suppose the sources are distributed according to $P_{XYZ}$, the integers $M,M_1,M_2>0$, and $\bar{P}_{U|X}$ is a conditional distribution on an arbitrary alphabet $\mathcal{U}$. Then there is a scheme such that $|\mathcal{W}|=M$, $|\mathcal{K}|=M_1$, and that
\begin{align}
\mathbb{P}(\hat{K}\neq K)\le \epsilon^*,
\end{align}
\begin{align}\label{e62}
\log M_1-H(K|WZ)\le \inf_{0<\delta<e^{-1}M_1^{\frac{3}{2}}}\left\{(T^*+8\delta)\log\frac{M_1^{\frac{3}{2}}}{\delta}
\right\},
\end{align}
where $\epsilon^*$ and $T^*$ are defined in \eqref{es} and \eqref{ts}.
\end{theorem}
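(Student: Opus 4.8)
The plan is to establish Theorem~\ref{thm55} by following the likelihood-encoder paradigm of \cite{song} combined with the soft-covering/resolvability lemma. First I would set up the idealized distribution: introduce a codebook $\{u(m,m_1,m_2)\}$ of $M M_1 M_2$ codewords drawn i.i.d.\ from $\bar P_U$ (the $X$-marginal of $\bar P_{U|X}$ applied to $P_X$), and define an ``ideal'' joint law in which the triple of indices $(W,K,M_2)$ is uniform, the codeword $U = u(W,K,M_2)$ is selected, and $X$ is then generated through the reverse channel $\bar P_{X|U}$ induced by $\bar P_{U|X}$ and $\bar P_U$; finally $(Y,Z)$ is passed through the true channel $P_{YZ|X}$. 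The likelihood encoder, given the true source realization $X=x$, picks $(W,K,M_2)$ with probability proportional to $\bar P_{U|X}(u(W,K,M_2)\mid x)$; this is exactly the conditional of the ideal distribution given $X=x$, so the true induced distribution and the ideal distribution differ only in the $X$-marginal, which by the soft-covering lemma is close in total variation when $\log(M M_1 M_2) > I(U;X)$ (in a one-shot information-spectrum sense). The decoder at B performs joint-typicality / likelihood decoding of $(W', K')$ from $(W, Y)$; its error is controlled by a one-shot channel-coding bound (a DT- or Feinstein-type argument) that succeeds when $\log(M_1 M_2)$ is below the relevant information density $\imath_{U;Y|W}$ — this is where the quantities $\epsilon^*$ and $T^*$ of \eqref{es}, \eqref{ts} enter, packaging the information-spectrum tail probabilities.

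Next I would handle the secrecy estimate, which is the quantitative heart of \eqref{e62}. The goal is to bound $\log M_1 - H(K\mid W Z)$, equivalently $I(K; WZ)$ up to the uniformity gap of $K$. Under the ideal distribution, $(W, Z)$ should be nearly independent of $K$: this is again a soft-covering statement, now applied conditionally — for each fixed value of $W$ and each fixed $K$, the sub-codebook indexed by $M_2$ must ``cover'' the channel $\bar P_{U|X}\to P_{Z|X}$, i.e.\ $\log M_2 > I(U;Z\mid W)$-type condition, so that the $Z$-output looks the same regardless of $K$. Summing the conditional total-variation bounds and converting total-variation closeness of $P_{KWZ}$ to $P_K \times P_{WZ}$ into a bound on $\log M_1 - H(K\mid WZ)$ requires a continuity-of-entropy estimate: if $\|P_{KWZ} - Q_{KWZ}\|_{\rm TV}\le \delta$ with $Q$ having $K$ uniform and independent of $(W,Z)$, then $\log M_1 - H(K\mid WZ) \lesssim \delta\log\frac{M_1}{\delta}$ by a Fano/Csiszár-type inequality. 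The specific form $(T^*+8\delta)\log\frac{M_1^{3/2}}{\delta}$ with the infimum over $\delta$ reflects optimizing this continuity bound against the resolvability error $T^*$, with the $M_1^{3/2}$ and the constant $8$ coming from bookkeeping over the several total-variation terms (the $X$-marginal covering, the decoding error, and the conditional covering) and from the precise continuity estimate used.

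In carrying this out I would (i) define the codebook distribution and the ideal/true laws precisely; (ii) invoke the soft-covering lemma once for the encoder fidelity (true vs.\ ideal on $X$, hence on everything); (iii) invoke a one-shot joint-typicality decoding bound for $\mathbb{P}(\hat K\neq K)\le\epsilon^*$; (iv) invoke the soft-covering lemma conditionally for the secrecy, obtaining a total-variation bound between $P_{KWZ}$ and the ideal product form; (v) apply the entropy-continuity lemma and optimize over the free parameter $\delta$ to land on \eqref{e62}; and finally (vi) average over the random codebook to conclude that a deterministic codebook achieving all bounds simultaneously exists. The main obstacle I anticipate is step (iv)–(v): getting the conditional soft-covering bound to interact cleanly with the entropy-continuity estimate so that the final bound is expressed purely in terms of the one-shot quantities $\epsilon^*$ and $T^*$ without residual codebook-size penalties — in particular, making sure the conditional covering error is summable/controllable uniformly over $W$ and that the union of all the total-variation terms is absorbed into the single expression $T^*$, and then carrying the $\log\frac{M_1^{3/2}}{\delta}$ factor through the optimization without losing the right constants.
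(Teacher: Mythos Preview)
Your proposal is correct and follows essentially the same route as the paper: random codebook of size $M M_1 M_2$, likelihood encoder matching the idealized distribution $Q$ conditionally on $X$, soft-covering bounds $T_1,T_2,T_3$ for the three relevant sub-codebooks, entropy-continuity to convert total variation into the $\log\frac{M_1^{3/2}}{\delta}$ form, Shannon's one-shot decoding bound for $\epsilon^*$, and Markov's inequality to extract a deterministic codebook. The only refinement worth noting is that the secrecy step in the paper uses \emph{two} covering bounds (one for $Q_{Z|W=w}$ with $M_1M_2$ codewords giving $T_1$, and one for $Q_{Z|W=w,K=k}$ with $M_2$ codewords giving $T_2$) combined by the triangle inequality, rather than a single conditional covering; this is exactly the bookkeeping you flag as the anticipated obstacle, and it resolves cleanly.
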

\begin{proof}
Fix the joint distribution of the sources $P_{XYZ}$. Let $\bar{P}_{UXYZ}=\bar{P}_{U|X}P_{XYZ}$. Randomly generate a codebook
\begin{align}
\mathbf{U}\in \mathcal{U}^{M\times M_1\times M_2}
\end{align}
according to $\bar{P}_U$.
Let $P_{WKLXYZ}$ be the distribution induced by the likelihood encoder \cite{song}:
\begin{align}\label{enc}
P_{WKL|XYZ}(w,k,l|x,y,z)=\frac{1}{Z_x}\bar{P}_{X|U}(x|U(w,k,l))
\end{align}
where $Z_x$ is a normalization constant independent of $(w,k,l)$. In words, the stochastic encoder in \eqref{enc} outputs the indices $w$, $k$ and $l$ according to the likelihood of $U(w,k,l)$ passing through the ``test channel'' $\bar{P}_{X|U}$. Define
\begin{align}
Q_{XWKL}(x,w,k,l)&=\frac{1}{MM_1M_2}\bar{P}_{X|U}(x|U(w,k,l)),\\
Q_{YZ|XWKL}&=P_{YZ|X}.\label{qcond}
\end{align}
Note that $Q_{WKL}$ is an equiprobable distribution, hence by the construction of the likelihood encoder we have
\begin{align}\label{e67}
P_{WKL|X}=Q_{WKL|X}.
\end{align}

We now digress into a brief review of the \emph{total variation distance}. By definition, the total variation distance between probability measures $P$ and $Q$ on the same $\sigma$-algebra of subsets $\mathcal{F}$ of the sample space $\mathcal{X}$ is
\begin{align}
|P-Q|:=\sup_{\mathcal{A}\in\mathcal{F}}|P(\mathcal{A})-Q(\mathcal{A})|.
\end{align}
Below are some of the relevant properties of total variational distance; see for example \cite{cuff2012distributed}.
\begin{property}
\begin{enumerate}
\item Triangle inequality: if $P$, $Q$ and $S$ are distributions on the same sample space, then
\begin{align}
|P-Q|\le|P-S|+|S-Q|.
\end{align}
\item If $P_XP_{Y|X}$ and $Q_XQ_{Y|X}$ are joint distributions on $\mathcal{X}\times\mathcal{Y}$, then
\begin{align}
|P_X-Q_X|\le|P_XP_{Y|X}-Q_XQ_{Y|X}|
\end{align}
where the equality holds when $P_{Y|X}=Q_{Y|X}$.
\end{enumerate}
\end{property}

According to Theorem VII.1 in \cite{cuff2012distributed}, we have the following bounds on the total variations with respect to the codebook $\mathcal{C}$:
\begin{align}
\mathbb{E}_{\mathcal{C}}|Q_{Z|W=w}-P_Z|&\le T_1,\\
\mathbb{E}_{\mathcal{C}}|Q_{Z|W=w,K=k}-P_Z|&\le T_2,
\end{align}
for each $m,k$, where
\begin{align}
T_1&:=\inf_{\tau>0}\left\{\mathbb{P}[\imath_{U;Z}(U;Z)>\tau]+\frac{1}{2}\sqrt{\frac{2^{\tau}}{MM_2}}\right\},\label{d1}\\
T_2&:=\inf_{\tau>0}\left\{\mathbb{P}[\imath_{U;Z}(U;Z)>\tau]+\frac{1}{2}\sqrt{\frac{2^{\tau}}{M_2}}\right\},\label{d2}
\end{align}
and $\imath_{U;Z}(U;Z)$ is computed with the joint distribution $\bar{P}_{UZ}$.
By the triangle inequality,
\begin{align}
\mathbb{E}_{\mathcal{C}}|Q_{Z|W=w,K=k}-Q_{Z|W=w}|&\le T_1+T_2, \quad \forall w,k,
\end{align}
and since $Q_{WK}=Q_WQ_{K}$, we obtain
\begin{align}\label{eq1}
\mathbb{E}_{\mathcal{C}}|Q_{ZWK}-Q_{ZW}Q_{K}|
&=\mathbb{E}_{\mathcal{C}}|Q_{ZWK}-Q_{Z|W}Q_{WK}|
\\
&\le T_1+T_2.
\end{align}
From $P_{WKL|X}=Q_{WKL|X}$, we have
\begin{align}
\mathbb{E}_{\mathcal{C}}|P_{WKLX}-Q_{WKLX}|
&=\mathbb{E}_{\mathcal{C}}|P_X-Q_X|\\
&\le T_3,
\end{align}
where
\begin{align}\label{d3}
T_3:=&\inf_{\tau>0}\left\{\mathbb{P}[\imath_{U;X}(U;X)>\tau]+\frac{1}{2}\sqrt{\frac{2^{\tau}}{MM_1M_2}}\right\}
\end{align}
and $\imath_{U;X}(U;X)$ is computed with the joint distribution $\bar{P}_{UX}$.
Therefore by (\ref{qcond}),
\begin{align}
\mathbb{E}_{\mathcal{C}}|P_{KWZ}-Q_{KWZ}|
&\le \mathbb{E}_{\mathcal{C}}|P_{KWXZ}-Q_{KWXZ}|\\
&=\mathbb{E}_{\mathcal{C}}|P_{KWX}-Q_{KWX}|\\
&\le T_3,\label{eq2}
\end{align}
and
\begin{align}
\mathbb{E}_{\mathcal{C}}|P_{WZ}Q_K-Q_{WZ}Q_K|=
\mathbb{E}_{\mathcal{C}}|P_{WZ}-Q_{WZ}|\le T_3\label{eq3}.
\end{align}
Equations (\ref{eq1}), (\ref{eq2}), (\ref{eq3}) and the triangle inequality imply that
\begin{align}
\mathbb{E}_{\mathcal{C}}\int|P_{K|ZW}-Q_{K}|{\rm d}P_{ZW}
&= \mathbb{E}_{\mathcal{C}}|P_{KWZ}-P_{ZW}Q_{K}|\\
&\le T_1+T_2+2T_3.\label{eq4}
\end{align}
\begin{lemma}
For any $z,w$,
\begin{align}
&\quad D(P_{K|Z=z,W=w}||Q_{K})\nonumber
\\
&\le 2|P_{K|Z=z,W=w}-Q_{K}|\log\frac{M_1^{\frac{3}{2}}}{|P_{K|Z=z,W=w}-Q_{K}|}.
\end{align}
\end{lemma}
\begin{proof}
\begin{align}
&\quad D(P_{K|Z=z,W=w}||Q_{K})\nonumber
\\
&=-H(P_{K|Z=z,W=w})+\sum_{k=1}^{M_1}P_{K|Z=z,W=w}(k)\log\frac{1}{Q_{K}(k)}\\
&=-H(P_{K|Z=z,W=w})+H(Q_{K})\nonumber\\
&\quad+\sum_{k=1}^{M_1}(P_{K|Z=z,W=w}(k)-Q_{K}(k))\log M_1\\
&\le2|P_{K|Z=z,W=w}-Q_{K}|\log\frac{M_1}{|P_{K|Z=z,W=w}-Q_{K}|}\nonumber\\
&\quad+|P_{K|Z=z,W=w}-Q_{K}|\log M_1,
\end{align}
where the last step used the inequality in \cite{zhang2007estimating}.
\end{proof}
Thanks to the lemma, for any $0<\delta<e^{-1}M_1^{\frac{3}{2}}$ we have
\begin{align}
&I(K;Z,W)
\nonumber\\
&=D(P_{KZW}||P_{K}P_{ZW})\\
&\le D(P_{KZW}||Q_{K}P_{ZW})\\
&=\int D(P_{K|ZW}||Q_{K}){\rm d}P_{ZW}\\
&\le2\int|P_{K|ZW}-Q_{K}|\log\frac{M_1^{\frac{3}{2}}}{|P_{K|ZW}-Q_{K}|}{\rm d}P_{ZW}\\
&=2\log M_1^{\frac{3}{2}}|P_{KZW}-Q_{K}P_{ZW}|
\\
    &\quad+2\int|P_{K|ZW}-Q_{K}|\log\frac{1}{|P_{K|ZW}-Q_K|}{\rm d}P_{ZW}\\
&\le2|P_{KZW}-Q_{K}P_{ZW}|\left(\log M_1^{\frac{3}{2}}+\log\frac{1}{|P_{KZW}-Q_KP_{ZW}|}\right)\label{eq92}\\
&\le2\log\frac{M_1^{\frac{3}{2}}}{\delta}(|P_{KZW}-Q_{K}P_{ZW}|+\delta),\label{eq93}
\end{align}
where we used Jensen's inequality in \eqref{eq92} and $x\log\frac{\lambda}{x}\le (x+\delta)\log\frac{\lambda}{\delta}$ for all $x>0$ and $0<\delta<e^{-1}\lambda$ in \eqref{eq93}.
Averaging \eqref{eq93} over the codebook and applying (\ref{eq4}), we obtain
\begin{align}\label{e91}
\mathbb{E}_{\mathcal{C}}I(K;Z,W)\le 2\log\frac{M_1^{\frac{3}{2}}}{\delta}(T_1+T_2+2T_3+\delta).
\end{align}
Similarly from \eqref{eq2} we have $\mathbb{E}_{\mathcal{C}}|P_{K}-Q_{K}|\le T_3$, hence
\begin{align}\label{e92}
\mathbb{E}_{\mathcal{C}}[\log K-H(K)]=\mathbb{E}_{\mathcal{C}}D(P_{K}||Q_{K})\le2\log\frac{M_1^{\frac{3}{2}}}{\delta}(T_3+\delta).
\end{align}
Thus for the security constraint, we have
\begin{align}\label{e93}
\mathbb{E}_{\mathcal{C}}[\log K-H(K|WZ)]\le 2\log\frac{M_1^{\frac{3}{2}}}{\delta}(T_1+T_2+3T_3+2\delta),
\end{align}
which follows from \eqref{e91} and \eqref{e92}.

For the key agreement constraint, choose a good channel decoder $P_{\hat{K}|WY}$, and let
\begin{align}
P_{WKLXYZ\hat{W}}&=P_{WKLXYZ}P_{\hat{K}|WY},\\
Q_{WKLXYZ\hat{W}}&=Q_{WKLXYZ}P_{\hat{K}|WY}.
\end{align}
Then using a single-shot version of Shannon's achievability bound \cite{shannon1957certain} for discrete memoryless channels, the error probability of the channel decoder can be bounded as $\mathbb{E}_{\mathcal{C}}\mathbb{P}_{Q}(\hat{K}\neq K)
\le \epsilon$ where we have defined
\begin{align}\label{d4}
\epsilon:=\inf_{\gamma>0}\left\{\mathbb{P}[\imath_{U;Y}(U;Y)\le\log(M_1M_2-1)+\gamma]+\exp(-\gamma)\right\},
\end{align}
and $\imath_{U;Y}(U;Y)$ is computed with the joint distribution $\bar{P}_{UY}$.
Then, the probability of decoding $K$ erroneously under the true distribution is bounded as
\begin{align}
\mathbb{E}_{\mathcal{C}}\mathbb{P}_{P}(\hat{K}\neq K)
&\le \mathbb{E}_{\mathcal{C}}\mathbb{P}_{Q}(\hat{K}\neq K)+|P_{X}-Q_{X}|\label{e100}
\\
&\le T_3+\epsilon,
\end{align}
where $\mathbb{P}_{P}$ and $\mathbb{P}_{Q}$ denote the probabilities under the distributions $P_{XYWK}$ and $Q_{XYWK}$, respectively. In \eqref{e100} we used $P_{K\hat{K}WY|X}=Q_{K\hat{K}WY|X}$, which follows from $P_{WY|X}=Q_{WY|X}$ in \eqref{e67}, and that $K$ and $\hat{K}$ are functions of $X$ and $(W,Y)$, respectively.
By Markov's inequality,
\begin{align}
\mathbb{P}_{\mathcal{C}}[\mathbb{P}_{P}(\hat{K}\neq K)> 2(T_3+\epsilon)]<\frac{1}{2}.
\end{align}
Similarly from \eqref{e93},
\begin{align}
&\mathbb{P}_{\mathcal{C}}\left[\log K-H(K|WZ)> 4(T_1+T_2+3T_3+2\delta)\log\frac{M_1^{\frac{3}{2}}}{\delta}\right]
\nonumber
\\
&<\frac{1}{2}.
\end{align}
Hence there exists a codebook which satisfies the properties in Theorem~\ref{thm55} where
\begin{align}
\epsilon^*&:=2(T_3+\epsilon),\label{es}
\\
T^*&:=4(T_1+T_2+3T_3);\label{ts}
\end{align}
and $T_1$, $T_2$, $T_3$, $\epsilon$ are as in \eqref{d1}, \eqref{d2}, \eqref{d3} and \eqref{d4}.
\end{proof}

\section{Approximation of Gaussian Processes and Achievability of Theorem \ref{thm5}}\label{sec5}
In this section we apply Theorem~\ref{thm55} to stationary Gaussian processes to finish the achievability part of Theorem~\ref{thm5}. The derivation is essentially based on the asymptotic distribution of the eigenvalues of Toeplitz matrices, a brief review of which is given in Appendix~\ref{appI}.

We now introduce notations for Toeplitz matrices and circulant matrices. Given a continuous function $f$ on $[0,2\pi)$, define for $k=0,1,\dots,n-1$,
\begin{align}
t_k&:=\frac{1}{2\pi}\int_0^{2\pi}f(\omega)e^{ik\omega}{\rm d}\omega,\\
c^{(n)}_k&:=\sum_{m=-\infty}^{\infty}t_{-k+mn}.\label{def1}
\end{align}
Note that from (\ref{def1}), an equivalent way of defining $c^{(n)}_k$ is
\begin{align}\label{def2}
c^{(n)}_k:=\frac{1}{n}\sum_{j=0}^{n-1}f(2\pi j/n)e^{2\pi ijk/n}.
\end{align}
If $\{t_k\}$ has fast decay, then $\{c^{(n)}_k\}$ approximates $\{t_k\}$ for large $n$. The advantage of $\{c^{(n)}_k\}$ over $\{t_k\}$ is that the former is a periodic sequence.
For $0\le i,j\le n-1$, define
\begin{align}
[\mathbf{T}_n(f)]_{i,j}&:=t_{i-j},\label{e109}\\
[\mathbf{C}_n(f)]_{i,j}&:=c_{i-j}.\label{e110}
\end{align}
Then it is clear that \eqref{e110} is a circulant matrix.

Using the above notations, the covariance matrix of the vector $(X^n,Y^n,Z^n)$ which are samples from $(\mathbb{X},\mathbb{Y},\mathbb{Z})$ can be expressed as
\begin{align}
\mathbf{T}_n:=&\left(
     \begin{array}{ccc}
       \mathbf{T}_n(S_X) & \mathbf{T}_n(S_{XY}) & \mathbf{T}_n(S_{XZ}) \\
       \mathbf{T}_n(S_{YX}) & \mathbf{T}_n(S_Y) & \mathbf{T}_n(S_{YZ}) \\
       \mathbf{T}_n(S_{ZX}) & \mathbf{T}_n(S_{ZY}) & \mathbf{T}_n(S_Z) \\
     \end{array}
   \right),\\
\end{align}
Now define a positive-semidefinite matrix composed of circulant blocks
\begin{align}\label{ecirc}
   \mathbf{C}_n:=&\left(
     \begin{array}{ccc}
       \mathbf{C}_n(S_X) & \mathbf{C}_n(S_{XY}) & \mathbf{C}_n(S_{XZ}) \\
       \mathbf{C}_n(S_{YX}) & \mathbf{C}_n(S_Y) & \mathbf{C}_n(S_{YZ}) \\
       \mathbf{C}_n(S_{ZX}) & \mathbf{C}_n(S_{ZY}) & \mathbf{C}_n(S_Z) \\
     \end{array}
   \right).
\end{align}
We assume that all the spectrums belong to the Wiener class. Then from Fact~\ref{fact6} in Appendix~\ref{appI} we have
\begin{align}\label{e116}
\mathbf{T}_n\sim \mathbf{C}_n
\end{align}
since the corresponding blocks in $\mathbf{T}_n$ and $\mathbf{C}_n$ are asymptotically equivalent.
We shall use $\mathbf{C}_n$ as a proxy for $\mb{T}_n$ in the subsequent analysis. Let $(\tilde{X}^n,\tilde{Y}^n,\tilde{Z}^n)$ be a zero mean Gaussian vector with covariance matrix $\mathbf{C}_n$. Suppose $\mb{Q}$ is the sin/cosine orthogonal matrix (see \eqref{sincos}). Define
\begin{align}
\mb{\hat{X}}&=\mb{Q}^{\top}\mb{\tilde{X}},\label{e123}\\
\mb{\hat{Y}}&=\mb{Q}^{\top}\mathbf{C}_n\left(\frac{S_{XY}}{|S_{XY}|}\right)\mb{\tilde{Y}},\label{e124}\\
\mb{\hat{Z}}&=\mb{Q}^{\top}\mathbf{C}_n\left(\frac{S_{XZ}}{|S_{XZ}|}\right)\mb{\tilde{Z}}.\label{e125}
\end{align}
Here $\frac{S_{XY}(\omega)}{|S_{XY}(\omega)|}$ can be arbitrarily set to $1$ if $S_{XY}(\omega)=0$. This ensures that $\mb{C}_n(\frac{S_{XY}(\omega)}{|S_{XY}(\omega)|})$ is an invertible, and in particular, unitary matrix. Note that the simplified discussion in \ref{sec3c} corresponds to replacing $\mathbf{C}_n\left(\frac{S_{XY}}{|S_{XY}|}\right)$ in \eqref{e124} with $\mathbf{C}_n\left(S_{XY}\right)$, which may be singular.
One can verify that $(\mb{\hat{X}},\mb{\hat{Y}},\mb{\hat{Z}})$ has the product structure of \eqref{assump1} and \eqref{assump2}. Next we shall specify an auxiliary distribution $P_{\mb{\hat{U}}|\mb{\hat{X}}}$. We first design the correlation coefficients $\rho_{UX}:[0,2\pi)\to[0,1]$ as
\begin{align}\label{e54}
&\rho_{UX}(\omega)=
\nonumber\\
&\left\{\begin{array}{cc}
                                                  \left(\frac{(1+\mu)\rho^2_{XY}(\omega)-\rho^2_{XZ}(\omega)-\mu}
                                                  {\rho^2_{XY}(\omega)-(1+\mu)\rho^2_{XZ}(\omega)
                                                  +\mu\rho^2_{XY}(\omega)\rho^2_{XZ}(\omega)}
                                                  \right)^{\frac{1}{2}}
                                                   & \beta(\omega)>\mu \\
                                                  0 & \textrm{otherwise} \\
                                                \end{array}
\right.
\end{align}
for $\omega\in[0,2\pi)$, where $\rho^2_{XY}(\omega)$ and $\rho^2_{XZ}(\omega)$ are as in \eqref{rho} and \eqref{rho1}. The definition \eqref{e54} ensures that $\rho_{UX}$ satisfies
\begin{align}\label{e118}
\log\frac{\beta(\omega)(\mu+1)}{(\beta(\omega)+1)\mu}
=\log\frac{1}{1-\rho^2_{UX}(\omega)}-\log\frac{1}{1-\rho^2_{UX}(\omega)\rho^2_{XY}(\omega)}
\end{align}
and
\begin{align}\label{e119}
&\log\frac{\beta(\omega)+1}{\mu+1}
=
\nonumber\\
&\log\frac{1}{1-\rho^2_{UX}(\omega)\rho^2_{XY}(\omega)}-\log\frac{1}{1-\rho^2_{UX}(\omega)\rho^2_{XZ}(\omega)}.
\end{align}
The intuition for $\rho_{UX}$ is as follows: suppose $\mathbb{U}$ is a Gaussian process jointly stationary with $\mathbb{X}$ and $\mathbb{U}-\mathbb{X}-(\mathbb{Y},\mathbb{Z})$ such that $\frac{|S_{UX}(\omega)|}{\sqrt{S_X(\omega)S_U(\omega)}}=\rho_{UX}(\omega)$. Then from \eqref{e118}, \eqref{e119} and Theorem \ref{thm5} we can verify a counterpart of the rate region \eqref{region1} for stationary processes:
\begin{align}
I(\mathbb{U};\mathbb{X})-I(\mathbb{U};\mathbb{Y})&=r,\\
I(\mathbb{U};\mathbb{Y})-I(\mathbb{U};\mathbb{Z})&=R,
\end{align}
where $I(\mathbb{U};\mathbb{X}):=\lim_{n\to\infty}\frac{1}{n}I(U^n;X^n)$ stands for the mutual information rate between $\mathbb{U}$ and $\mathbb{X}$.
Now, $P_{\hat{U}_i|\hat{X}_i}$ can be defined by requiring that $\hat{U}_i$ is zero mean jointly Gaussian with $\hat{X}_i$ satisfying
\begin{align}\label{eq124}
\rho_{\hat{U}_i\hat{X}_i}
=\rho_{UX}\left(\frac{2\pi i}{n}\right),~i=1,2,\dots,n
\end{align}
The scaling of $\hat{U}_i$ doesn't matter and can be chosen arbitrarily. We set $P_{\bf \hat{U}|\hat{X}}=\prod_{i=1}^n P_{\hat{U}_i|\hat{X}_i}$. Notice that this and (\ref{e123})-(\ref{e125}) have defined a channel $P_{\bf\hat{U}|\tilde{X}}$. Also beware that $\hat{X}_i$ and $\tilde{X}_i$ (and $\bar{X}_i$ to be defined later) depend implicitly on $n$, though $X_i$ does not. Below, $\rho_{\hat{U}_i\hat{X}_i}$ will be denoted by $\rho^{(n)}_i$ for simplicity.

Now for $i=0,\dots,n-1$, define the random variables
\begin{align}\label{eq123}
\eta_i^{(n)}=\imath_{\hat{X}_i;\hat{U}_i}(\hat{X}_i;\hat{U}_i).
\end{align}
The following lemma will be useful later when applying Chernoff bound:
\begin{lemma}\label{lemma3}
Fix any $0<\delta<\frac{1}{2}$. For any $\epsilon>0$, there exists $t>0$ such that
\begin{align}\label{el1}
t\mathbb{E}\eta\le\ln\mathbb{E}e^{t\eta}\le (1+\epsilon)t\mathbb{E}\eta+\epsilon t
\end{align}
for all $\rho\in[\delta-1,1-\delta]$, where $\eta:=\imath_{U;X}(U;X)$, in which $U,X$ are jointly Gaussian with correlation coefficient $\rho$.
\end{lemma}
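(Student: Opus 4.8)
The plan is to reduce the claim to an explicit evaluation of the cumulant generating function $\Lambda_\rho(t):=\ln\mathbb{E}\,e^{t\eta}$, working throughout in nats. Normalizing so that $U$ and $X$ have unit variance (harmless, since $\eta$ depends only on $\rho$ and $\imath_{U;X}$ is invariant under separate rescalings of $U$ and $X$), the relative information $\imath_{U;X}(u;x)$ equals the additive constant $-\tfrac12\ln(1-\rho^2)$ plus a quadratic form in the jointly Gaussian pair $(u,x)$. Hence $\mathbb{E}\,e^{t\eta}$ is a Gaussian integral; after writing its integrand as $\exp(-\tfrac12 v^{\top}Mv)$ with $v=(u,x)^{\top}$ and computing $\det M$, I expect to obtain
\begin{align}
\Lambda_\rho(t)=-\frac{t}{2}\ln(1-\rho^2)-\frac{1}{2}\ln(1-t^2\rho^2),
\end{align}
finite precisely for $|t\rho|<1$. (Equivalently, the matrix $A\mathbf{\Sigma}$ governing the quadratic form has eigenvalues $\pm\rho$, which is what produces the $1-t^2\rho^2$ factor.)

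Granting this formula, the rest is elementary. Differentiating at $t=0$ gives $\mathbb{E}\eta=\Lambda_\rho'(0)=-\tfrac12\ln(1-\rho^2)$, so the ``excess'' is exactly $\Lambda_\rho(t)-t\mathbb{E}\eta=-\tfrac12\ln(1-t^2\rho^2)$. Since $0<1-t^2\rho^2\le 1$ for admissible $t$, this excess is nonnegative, which yields the left-hand inequality $t\mathbb{E}\eta\le\Lambda_\rho(t)$ (this is of course just Jensen's inequality and needs no computation). For the right-hand inequality I would apply $-\ln(1-x)\le x/(1-x)$ with $x=t^2\rho^2$: restricting to $t\le\tfrac12$, so that $t^2\rho^2\le\tfrac14$ uniformly over $|\rho|\le 1-\delta$, gives $-\tfrac12\ln(1-t^2\rho^2)\le t^2\rho^2\le t^2$. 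Choosing $t=\min\{\epsilon,\tfrac12\}$ makes this at most $\epsilon t$, and since $\mathbb{E}\eta\ge 0$ we conclude $\Lambda_\rho(t)=t\mathbb{E}\eta+\bigl(-\tfrac12\ln(1-t^2\rho^2)\bigr)\le(1+\epsilon)t\mathbb{E}\eta+\epsilon t$, which is the asserted bound.

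The hypothesis $\rho\in[\delta-1,1-\delta]$ (that is, $\delta>0$) is used precisely to make the chosen $t$ work uniformly in $\rho$: it keeps $\rho^2$ bounded away from $1$, so that $\mathbb{E}\eta$ stays bounded and, more importantly, so that the interval $|t\rho|<1$ on which $\Lambda_\rho$ is defined contains a fixed neighbourhood of $0$; without such a margin one could not pick a single $t$ valid for all $\rho$. An alternative route, should one prefer to avoid the closed form, is to bound $\Lambda_\rho''$ uniformly on a fixed rectangle $[0,t_0]\times[\delta-1,1-\delta]$ and invoke Taylor's theorem with remainder together with $\mathbb{E}\eta\ge 0$; but the explicit formula reduces everything to a one-line estimate. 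I therefore do not anticipate a genuine obstacle: the only step requiring any work is the Gaussian integral producing the displayed $\Lambda_\rho$, and that is routine linear algebra.
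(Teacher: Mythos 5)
Your computation of the cumulant generating function is correct: normalizing to unit variance, one has
\begin{align*}
\eta = -\tfrac12\ln(1-\rho^2) - \frac{\rho^2 U^2 - 2\rho UX + \rho^2 X^2}{2(1-\rho^2)},
\end{align*}
and the Gaussian integral gives $\mathbb{E}e^{t\eta} = (1-\rho^2)^{-t/2}(1-t^2\rho^2)^{-1/2}$ for $|t\rho|<1$ (equivalently, $A\boldsymbol{\Sigma}$ has eigenvalues $\pm\rho$, so $\det(\mathbf{I}+tA\boldsymbol{\Sigma})=1-t^2\rho^2$). Hence $\Lambda_\rho(t)-t\mathbb{E}\eta = -\tfrac12\ln(1-t^2\rho^2)$, and for $t\le\tfrac12$ and $|\rho|\le 1-\delta<1$ the elementary bound $-\ln(1-x)\le 2x$ (valid on $[0,\tfrac12]$) gives $\Lambda_\rho(t)-t\mathbb{E}\eta\le t^2\rho^2\le t^2\le \epsilon t$ once $t\le\epsilon$. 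Your choice $t=\min\{\epsilon,\tfrac12\}$ then closes the argument, with $\mathbb{E}\eta\ge 0$ absorbing the slack into $(1+\epsilon)t\mathbb{E}\eta$. The proof is complete and correct.

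Your route is genuinely different from the paper's. The paper never computes $\Lambda_\rho$ explicitly. Instead it works with soft bounds: it dominates $|\eta|$ by a random variable of the form $C_1(U^2+N^2)+C_2$ (with $N$ independent of $U$, constants depending only on $\delta$), establishes uniform integrability of $\{e^{t\eta}\}_{\rho}$ via a Cauchy--Schwarz bound on $\mathbb{E}e^{\lambda(U^2+N^2)}$ and dominated convergence, and then splits the argument into two regimes, $\rho<\delta_0$ (where a crude bound via the cumulant generating function of $\zeta$ suffices) and $\rho\ge\delta_0$ (where a truncation $|\eta|<\Delta_0$ plus a ratio argument is used). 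What each approach buys: your closed-form computation is far shorter and more transparent, reduces the lemma to a one-line inequality, and in fact gives a uniform quantitative estimate ($\Lambda_\rho(t)-t\mathbb{E}\eta\le t^2$ for $t\le\tfrac12$) that is sharper than what is stated. The paper's argument is more robust in that it does not rely on an exact Gaussian identity and would extend verbatim to settings where $\eta$ is only \emph{dominated} by a quadratic in Gaussian variables, but here that generality is unnecessary. For this lemma your approach is preferable.
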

\begin{proof}
See Appendix \ref{app_lem3}.
\end{proof}

Now return to the proof of Theorem \ref{thm5}. Define
\begin{align}\label{eq152}
\delta:=1-\sup_{0\le \omega<2\pi }\rho_{UX}(\omega).
\end{align}

From the assumption of Theorem~\ref{cor1}, we know that $S_X(\omega)$, $S_Z(\omega)$ and $S_Z(\omega)$ do not vanish for any $\omega\in[0,2\pi)$, since otherwise $\beta(\omega)$ will be a fraction of the type $\frac{0}{0}$ for some $\omega$.
This in turn implies that
\begin{align}\label{e153}
\min_{0\le \omega<2\pi }S_X(\omega)>0,\quad\min_{0\le \omega<2\pi }S_Y(\omega)>0,\quad \min_{0\le \omega<2\pi }S_Z(\omega)>0;
\end{align}
since $S_X(\omega)$, $S_Y(\omega)$ and $S_Z(\omega)$ are continuous functions on the compact set $[0,2\pi)$. We shall make an additional assumption that
\begin{align}\label{e131}
\quad\sup_{0\le \omega<2\pi }\rho^2_{XY}(\omega)<1.
\end{align}
Fortunately, the proof does not lose any generality due to the assumptions of \eqref{e131}:
\begin{lemma}\label{assumptionlem}
If Theorem~\ref{thm5} holds for sources satisfying \eqref{e131}, then it must also hold without those assumptions.
\end{lemma}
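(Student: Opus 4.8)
The plan is to prove Lemma~\ref{assumptionlem} by a perturbation argument. Given an arbitrary Wiener-class source $(\mathbb{X},\mathbb{Y},\mathbb{Z})$ for which $\beta(\omega)$ is well-defined (but possibly $\sup_\omega\rho^2_{XY}(\omega)=1$), I would degrade terminal A's observation by adding a vanishingly small amount of independent white noise, obtaining a family of sources that \emph{do} satisfy \eqref{e131}, apply the assumed special case of Theorem~\ref{thm5} to them, transfer achievability back to the original source, and finally let the noise level tend to zero.

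Concretely, for $\epsilon>0$ let $\mathbb{N}$ be a unit-variance white Gaussian process independent of $(\mathbb{X},\mathbb{Y},\mathbb{Z})$ and set $\mathbb{X}_\epsilon:=\mathbb{X}+\epsilon\mathbb{N}$. The first observation is that $\mathcal{R}(\mathbb{X}_\epsilon,\mathbb{Y},\mathbb{Z})\subseteq\mathcal{R}(\mathbb{X},\mathbb{Y},\mathbb{Z})$: terminal A can generate $\epsilon\mathbb{N}$ from private randomness and add it to $X^n$ before running any scheme designed for the $\mathbb{X}_\epsilon$-source, which leaves the joint law of (encoder output, $Y^n$, $Z^n$) --- hence $\epsilon_n$ and $\nu_n$ --- unchanged. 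Next I would check that $(\mathbb{X}_\epsilon,\mathbb{Y},\mathbb{Z})$ meets all the hypotheses under which Theorem~\ref{thm5} is assumed to hold: it is again jointly stationary Gaussian and Wiener class because $S_{X_\epsilon}=S_X+\epsilon^2$ while $S_{XY},S_{XZ},S_Y,S_Z$ do not change; writing $c_\epsilon(\omega):=S_X(\omega)/(S_X(\omega)+\epsilon^2)\in(0,1)$ one has $\rho^2_{X_\epsilon Y}(\omega)=c_\epsilon(\omega)\rho^2_{XY}(\omega)\le\|S_X\|_\infty/(\|S_X\|_\infty+\epsilon^2)<1$ uniformly in $\omega$, which is precisely \eqref{e131}; and since well-definedness of $\beta$ already forces \eqref{e153}, the denominator of $\beta_\epsilon$ equals $S_Z(\omega)\bigl(S_{X_\epsilon}(\omega)S_Y(\omega)-|S_{XY}(\omega)|^2\bigr)\ge\epsilon^2 S_Z(\omega)S_Y(\omega)>0$, so $\beta_\epsilon(\omega)$ is well-defined (indeed finite). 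I would also record the explicit form $\beta_\epsilon(\omega)=\frac{c_\epsilon(\omega)(\rho^2_{XY}(\omega)-\rho^2_{XZ}(\omega))}{1-c_\epsilon(\omega)\rho^2_{XY}(\omega)}$, which is nondecreasing in $c_\epsilon(\omega)$ on the set $\{\beta(\omega)\ge0\}$ and therefore increases to $\beta(\omega)$ there as $\epsilon\downarrow0$, while staying negative wherever $\beta(\omega)<0$.

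With these facts in hand, the argument closes as follows. Theorem~\ref{thm5} applied to $(\mathbb{X}_\epsilon,\mathbb{Y},\mathbb{Z})$ gives that the pair $(r_\epsilon(\mu),R_\epsilon(\mu))$ obtained from \eqref{e58}--\eqref{e59} with $\beta_\epsilon$ in place of $\beta$ is achievable for that source, hence for $(\mathbb{X},\mathbb{Y},\mathbb{Z})$ by the containment above. Since the integrands there are nonnegative, since $\log\frac{\beta(\mu+1)}{(\beta+1)\mu}$ is continuous and increasing in $\beta$ on $[\mu,\infty)$ (vanishing at $\beta=\mu$) and similarly for $\log\frac{\beta+1}{\mu+1}$, and since $\beta_\epsilon\uparrow\beta$ and $\mathbf{1}\{\beta_\epsilon(\omega)>\mu\}\uparrow\mathbf{1}\{\beta(\omega)>\mu\}$ pointwise, the monotone convergence theorem yields $r_\epsilon(\mu)\uparrow r(\mu)$ and $R_\epsilon(\mu)\uparrow R(\mu)$; the degenerate subcase where $\rho^2_{XY}=1$ on a set of positive measure, for which $R(\mu)=+\infty$ while $r(\mu)$ stays finite, is covered by the same computation since then $R_\epsilon(\mu)\to\infty$. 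Finally I would take $\epsilon=\epsilon_j\downarrow0$ and diagonalize the associated blocklength-$n$ schemes over $j$: because $r_{\epsilon_j}(\mu)\le r(\mu)$ for every $j$ and $R_{\epsilon_j}(\mu)\to R(\mu)$, the resulting scheme sequence for $(\mathbb{X},\mathbb{Y},\mathbb{Z})$ satisfies $\limsup_n\frac1n\log|\mathcal W|\le r(\mu)$, $\liminf_n\frac1n\log|\mathcal K|\ge R(\mu)$, $\epsilon_n\to0$ and $\nu_n\to0$ --- equivalently, one simply invokes closedness of the achievable rate region. I expect the only genuine obstacle to be the design of the perturbation: it must at once degrade A's side, preserve the Wiener class, force \eqref{e131}, keep $\beta$ well-defined, and make the parametrized rate curves converge; the per-$\omega$ computations for $\beta_\epsilon$ and the limit passage are routine, and the converse half of Theorem~\ref{thm5} is unaffected because it is established separately in Appendix~\ref{app_conv}.
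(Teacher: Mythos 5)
Your proof is correct and takes essentially the same route as the paper's: perturb one terminal's observation by a vanishingly small independent white Gaussian noise to force \eqref{e131}, apply the special case of Theorem~\ref{thm5} to the perturbed source, pull the achievability back to the original source through the induced degradation Markov chain, and let the noise vanish using monotone convergence together with the closedness of $\mathcal{R}$. The only cosmetic difference is the side being degraded---you add noise to $\mathbb{X}$, so both $\rho^2_{XY}(\omega)$ and $\rho^2_{XZ}(\omega)$ are scaled by $c_\epsilon(\omega)$, whereas the paper adds it to $\mathbb{Y}$, scaling only $\rho^2_{XY}(\omega)$---but in either case the resulting $\beta$-function increases pointwise to $\beta(\omega)$ on $\{\beta>\mu\}$, so the limit passage is identical.
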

\begin{proof}
Assume that Theorem~\ref{thm5} is proved under the assumptions \eqref{e131}. For general source $(\mathbb{X},\mathbb{Y},\mathbb{Z})$ and $\lambda\in[0,1)$, we can degrade $\mathbb{Y}$ by $\mathbb{Y}^{\lambda}:=\mathbb{Y}+\lambda\mathbb{N}$, where $\mathbb{N}$ is a stationary white Gaussian processes such that $\mathbb{N}$ and $(\mathbb{X},\mathbb{Y},\mathbb{Z})$ are independent.
Let $\beta^{\lambda}(\omega)$ be as defined in Theorem~\ref{thm5} but for the new source $(\mathbb{X},\mathbb{Y}^{\lambda},\mathbb{Z})$, and define
\begin{align}
r^{\lambda}&:=\frac{1}{4\pi}\int_{\beta^{\lambda}(\omega)>\mu}\log\frac{\beta^{\lambda}(\omega)(\mu+1)}
{(\beta^{\lambda}(\omega)
+1)\mu}{\rm d}\omega,\label{eq134}
\\
R^{\lambda}&:=\frac{1}{4\pi}\int_{\beta^{\lambda}(\omega)>\mu}
\log\frac{\beta^{\lambda}(\omega)+1}{\mu+1}{\rm d}\omega.\label{eq135}
\end{align}
It's easy to check that $\beta^{\lambda}(\omega)\uparrow\beta(\omega)$ as $\lambda\downarrow0$ for each $\omega\in[0,2\pi)$. Then by monotone convergence theorem we have $r^{\lambda}\uparrow r$ and $R^{\lambda}\uparrow R$ as $\lambda\downarrow0$, where $r$ and $R$ are as in \eqref{e58} and \eqref{e59}. However for each $\lambda>0$ the condition \eqref{e131} holds.
By our assumption we can prove $(r^{\lambda},R^{\lambda})\in\mathcal{R}(\mathbb{X},\mathbb{Y}^{\lambda},\mathbb{Z})$, and the Markov chain $\mathbb{Y}^{\lambda}-\mathbb{Y}-(\mathbb{X},\mathbb{Z})$ implies $\mathcal{R}(\mathbb{X},\mathbb{Y}^{\lambda},\mathbb{Z})\subseteq\mathcal{R}(\mathbb{X},\mathbb{Y},\mathbb{Z})$; hence we also have $(r^{\lambda},R^{\lambda})\in\mathcal{R}(\mathbb{X},\mathbb{Y},\mathbb{Z})$. Then by the closure property of the achievable region we know $(R,r)\in\mathcal{R}(\mathbb{X,Y,Z})$.
\end{proof}

Assume that \eqref{e153} and \eqref{e131} are true. If $\beta(\omega):=\frac{\rho^2_{XY}(\omega)-\rho^2_{XZ}(\omega)}{1-\rho^2_{XY}(\omega)}>\mu$ then from \eqref{e54},
\begin{align}
&\inf_{0\le\omega<2\pi}\{1-\rho^2_{UX}(\omega)\}
\nonumber
\\
&=\inf_{0\le\omega<2\pi}\frac{\mu(1-\rho^2_{XY}(\omega)(1-\rho^2_{XZ}(\omega)))}
{\rho^2_{XY}(\omega)-(1+\mu)\rho^2_{XZ}(\omega)+\mu\rho^2_{XY}(\omega)\rho^2_{XZ}(\omega)}\label{eq156}
\\
&\ge\inf_{0\le\omega<2\pi}\mu\frac{1-\rho^2_{XY}(\omega)}{\rho^2_{XY}(\omega)}\label{eq157}
\\
&>0.\label{e132}
\end{align}
where \eqref{eq157} used the monotonically increasing property of the rational function on the right hand side of \eqref{eq156} in $\rho^2_{XZ}(\omega)\in[0,(1+\mu)\rho^2_{XY}(\omega)-\mu)$. This means that $\delta>0$ in \eqref{eq152}, which will be essential to applying Lemma~\ref{lemma3}.

For Wiener class Gaussian processes, the spectral function is continuous. Hence from \eqref{eq124}, \eqref{eq123} and the definition of Riemann integral we have
\begin{align}
\frac{1}{n}\sum_{i=1}^n\mathbb{E}\eta^{(n)}_i
&\to\frac{1}{4\pi}\int\log\left(\frac{1}{1-\rho^2_{UX}(\omega)}\right){\rm d}\omega
\\
&=I(\mathbb{U};\mathbb{X}).\label{e56}
\end{align}
Now fix $B>I(\mathbb{U};\mathbb{X})$. Define $P_{\mb{U}|\mb{X}}:=P_{\mb{\hat{U}}|\mb{\tilde{X}}}$.
According to Corollary \ref{cor2}, there exist non-degenerate linear transforms on $\mb{U}$ and $\mb{X}$ to obtain $\mb{\bar{U}}$ and $\mb{\bar{X}}$ such that $P_{\mb{\bar{U}}\mb{\bar{X}}}=\prod_{i=1}^n P_{\bar{U}_i\bar{X}_i}$. Let $\bar{\rho}^{(n)}_i,~i=1,\dots,n$ be the correlation coefficients between $\bar{U}_i$ and $\bar{X}_i$. From the proof of Lemma~\ref{lem3} one can verify that $(\bar{\rho}^{(n)}_i)^2$, $i=1,\dots,n$ are eigenvalues of $\mb{I}-{\bf \Sigma_X}^{-\frac{1}{2}}{\bf \Sigma_{X|U}}{\bf \Sigma_X}^{-\frac{1}{2}}$, and $(\rho^{(n)}_i)^2$, $i=1,\dots,n$ are eigenvalues of $\mb{I}-{\bf \Sigma_{\tilde{X}}}^{-\frac{1}{2}}{\bf \Sigma_{\tilde{X}|\hat{U}}}{\bf \Sigma_{\tilde{X}}}^{-\frac{1}{2}}$. However these two matrices are asymptotically equivalent, and their largest eigenvalues are uniformly upper bounded away from one, which follows immediately from Fact~\ref{fact4} and the following result.
\begin{lemma}\label{lem4}
Under the assumptions \eqref{e153} and \eqref{e131}, we have
\begin{description}
  \item[(a)] \begin{align}
\bf \Sigma_{X}\sim\Sigma_{\tilde{X}},\quad
\Sigma_{Y}\sim\Sigma_{\tilde{Y}},\quad
\Sigma_{Z}\sim\Sigma_{\tilde{Z}}.
\end{align}
Moreover, the smallest eigenvalues of these matrices are uniformly bounded (meaning that the bound is independent of $n$) away from zero, and their largest eigenvalues are also uniformly upper bounded.

  \item[(b)] \begin{align}
\bf \Sigma_{X|U}\sim\Sigma_{\tilde{X}|\hat{U}},\quad
\Sigma_{Y|U}\sim\Sigma_{\tilde{Y}|\hat{U}},\quad
\Sigma_{Z|U}\sim\Sigma_{\tilde{Z}|\hat{U}}.
\end{align}
Moreover, the smallest eigenvalues of these matrices are uniformly bounded away from zero.
\end{description}
\end{lemma}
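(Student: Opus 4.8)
The plan is to treat the two parts separately and to reduce everything to the calculus of asymptotically equivalent sequences of matrices (the relation ``$\sim$'' of \cite{gray2006toeplitz}), using the standard facts that $A_n\sim B_n$ and $C_n\sim D_n$ imply $A_n+C_n\sim B_n+D_n$ and, provided all operator norms are bounded uniformly in $n$, $A_nC_n\sim B_nD_n$; that $A_n\sim B_n$ with $\sup_n\|A_n^{-1}\|<\infty$ and $\sup_n\|B_n^{-1}\|<\infty$ implies $A_n^{-1}\sim B_n^{-1}$; and that passing to a fixed principal block preserves ``$\sim$''. For part (a) I would note that $\mathbf{\Sigma}_X=\mathbf{T}_n(S_X)$ and $\mathbf{\Sigma}_{\tilde X}=\mathbf{C}_n(S_X)$ (likewise for $Y,Z$), so the three equivalences are blockwise instances of \eqref{e116}; for the eigenvalue bounds I would use that the eigenvalues of the circulant matrix $\mathbf{C}_n(S_X)$ are exactly $\{S_X(2\pi m/n):0\le m<n\}$ whereas those of $\mathbf{T}_n(S_X)$ lie in $[\min_\omega S_X(\omega),\max_\omega S_X(\omega)]$ by the Grenander--Szeg\H{o} bounds, so that, $S_X$ being continuous on $[0,2\pi]$ and bounded away from zero by \eqref{e153}, all these eigenvalues lie in a fixed subinterval of $(0,\infty)$ independent of $n$ (and similarly for $Y,Z$).

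For the ``$\sim$'' part of (b) the key observation is that $P_{\mathbf{U}|\mathbf{X}}=P_{\hat U|\tilde X}$ is a single fixed affine-Gaussian map $\mathbf{U}=\mathbf{F}X^n+\mathbf{N}$ with $\mathbf{N}\sim N(\mathbf{0},\mathbf{\Sigma}_{\mathbf{N}})$ independent of $X^n$, and that the very same $\mathbf{F},\mathbf{\Sigma}_{\mathbf{N}}$ carry $\tilde X^n$ into $\hat U^n$; by part (a) together with $\sup_\omega\rho_{UX}(\omega)=1-\delta<1$ and the positivity $\delta>0$ of \eqref{e132}, the operator norms of $\mathbf{F}$, $\mathbf{\Sigma}_{\mathbf{N}}$ and $\mathbf{\Sigma}_{\mathbf{N}}^{-1}$ are bounded uniformly in $n$. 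I would then compare the joint covariance of $(X^n,Y^n,Z^n,\mathbf{U})$ with that of $(\tilde X^n,\tilde Y^n,\tilde Z^n,\hat U)$: the two agree up to replacing the source block $\mathbf{T}_n$ by $\mathbf{C}_n$, the $\mathbf{U}$-rows and columns being obtained by applying the same bounded map $\mathbf{F}$ to the respective $X$-parts, so by \eqref{e116} and the product rule they are asymptotically equivalent. Forming the Schur complement with respect to the $\mathbf{U}$-block, which is uniformly invertible since $\mathbf{\Sigma}_{\mathbf{U}}\succeq\mathbf{\Sigma}_{\mathbf{N}}$, and restricting to principal blocks gives $\mathbf{\Sigma}_{X|U}\sim\mathbf{\Sigma}_{\tilde X|\hat U}$, $\mathbf{\Sigma}_{Y|U}\sim\mathbf{\Sigma}_{\tilde Y|\hat U}$, $\mathbf{\Sigma}_{Z|U}\sim\mathbf{\Sigma}_{\tilde Z|\hat U}$ (for the $X$-marginal one sees the cancellation directly in the precision form, $\mathbf{\Sigma}_{X|U}^{-1}-\mathbf{\Sigma}_{\tilde X|\hat U}^{-1}=\mathbf{\Sigma}_X^{-1}-\mathbf{\Sigma}_{\tilde X}^{-1}$, the channel term $\mathbf{F}^{\top}\mathbf{\Sigma}_{\mathbf{N}}^{-1}\mathbf{F}$ dropping out).

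It remains to establish the uniform eigenvalue bounds, which I would handle in each model separately. In the circulant model the construction \eqref{e123}--\eqref{e125} and \eqref{eq124} makes $(\hat X_i,\hat Y_i,\hat Z_i,\hat U_i)$ independent over $i$ with $\hat U_i-\hat X_i-(\hat Y_i,\hat Z_i)$ and correlation coefficients $\rho_{UX}(\omega_i),\rho_{XY}(\omega_i),\rho_{XZ}(\omega_i)$, so that in product coordinates $\mathbf{\Sigma}_{\hat X|\hat U},\mathbf{\Sigma}_{\hat Y|\hat U},\mathbf{\Sigma}_{\hat Z|\hat U}$ are diagonal with entries $\sigma^2_{\hat X_i}(1-\rho^2_{UX}(\omega_i))$, $\sigma^2_{\hat Y_i}(1-\rho^2_{UX}(\omega_i)\rho^2_{XY}(\omega_i))$ and $\sigma^2_{\hat Z_i}(1-\rho^2_{UX}(\omega_i)\rho^2_{XZ}(\omega_i))$, each at least $(1-(1-\delta)^2)\min\{\min_\omega S_X,\min_\omega S_Y,\min_\omega S_Z\}>0$ by \eqref{e153}; since the coordinates appearing in the lemma differ from these by a unitary change of basis (compositions of $\mathbf{Q}$ with the unitary circulant filters of \eqref{e124}--\eqref{e125}), the eigenvalues are unchanged. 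In the true model I would use the Markov chains $\mathbf{U}-X^n-Y^n$ and $\mathbf{U}-X^n-Z^n$ to get $\mathbf{\Sigma}_{Y|U}\succeq\mathbf{\Sigma}_{Y|X}$ and $\mathbf{\Sigma}_{Z|U}\succeq\mathbf{\Sigma}_{Z|X}$, then the Schur-complement inequality $\lambda_{\min}(\mathbf{\Sigma}_{Y|X})\ge\lambda_{\min}(\mathbf{\Sigma}_{(X^n,Y^n)})$, and finally that $\mathbf{\Sigma}_{(X^n,Y^n)}$ is block-Toeplitz with $2\times2$ symbol of determinant $S_X(\omega)S_Y(\omega)(1-\rho^2_{XY}(\omega))$, bounded away from zero by \eqref{e153} and \eqref{e131}, and bounded trace, hence with smallest eigenvalue uniformly bounded below; the case of $Z$ is identical (using that $\rho^2_{XZ}(\omega)$ is likewise bounded below $1$, part of the nondegeneracy of the source in Theorem~\ref{thm5}), while the matching uniform upper bounds follow from $\mathbf{\Sigma}_{X|U}\preceq\mathbf{\Sigma}_X$ together with part (a).

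The step I expect to be the real obstacle is this last one --- proving that $\mathbf{\Sigma}_{Y|U}$ and $\mathbf{\Sigma}_{Z|U}$ have smallest eigenvalues uniformly bounded away from zero, since for these marginals there is no precision-form cancellation --- and it is exactly here that the nondegeneracy assumptions \eqref{e153}, \eqref{e131} and the positivity $\delta>0$ of \eqref{e132} become indispensable: were $\sup_\omega\rho^2_{XY}(\omega)=1$, the conditional covariance $\mathbf{\Sigma}_{Y|X}$ could become asymptotically singular and the statement would fail.
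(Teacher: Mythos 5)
Your Part (a) matches the paper's argument exactly. For the asymptotic equivalences in Part (b) you take a genuinely different route: the paper writes $\mathbf{\Sigma}_{X|U}$ and $\mathbf{\Sigma}_{\tilde X|\hat U}$ out via an explicit test-channel parametrization $\hat{\mathbf{U}}=(\mathbf{I}-\mathbf{R}^2)^{1/2}\mathbf{W}+\mathbf{R}\hat{\mathbf{X}}$ and compares the two formulas factor by factor (and then propagates to $Y,Z$ through $\mathbf{\Sigma}_{Y|U}=\mathbf{\Sigma}_{Y|X}+\mathbf{\Sigma}_{YX}\mathbf{\Sigma}_X^{-1}\mathbf{\Sigma}_{X|U}\mathbf{\Sigma}_X^{-1}\mathbf{\Sigma}_{XY}$), whereas you form the full joint covariance of $(X^n,Y^n,Z^n,\mathbf{U})$ versus $(\tilde X^n,\tilde Y^n,\tilde Z^n,\hat U)$, show these $4n\times4n$ matrices are asymptotically equivalent, and then Schur-complement out $\mathbf{U}$ and restrict to principal blocks. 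Both work; your route is arguably tidier because it handles $X$, $Y$, $Z$ simultaneously, and your precision-form cancellation $\mathbf{\Sigma}_{X|U}^{-1}-\mathbf{\Sigma}_{\tilde X|\hat U}^{-1}=\mathbf{\Sigma}_X^{-1}-\mathbf{\Sigma}_{\tilde X}^{-1}$ is a genuinely cleaner way to see the $X$-equivalence than what the paper writes.

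However, your treatment of the uniform lower eigenvalue bound for $\mathbf{\Sigma}_{Z|U}$ has a real gap, and you half-flag it without noticing that it is a gap. You go through $\mathbf{\Sigma}_{Z|U}\succeq\mathbf{\Sigma}_{Z|X}\succeq\lambda_{\min}(\mathbf{\Sigma}_{(X^n,Z^n)})\,\mathbf{I}$ and bound the last quantity via the $2\times2$ symbol, which requires $\inf_\omega S_X(\omega)S_Z(\omega)(1-\rho^2_{XZ}(\omega))>0$, i.e.\ $\sup_\omega\rho^2_{XZ}(\omega)<1$. That condition is not among the hypotheses \eqref{e153}, \eqref{e131}, and it is \emph{not} implied by Theorem~\ref{thm5}'s requirement that $\beta(\omega)$ be well-defined: if $\rho^2_{XZ}(\omega_0)=1$ while $\rho^2_{XY}(\omega_0)<1$ and all spectra are positive, one simply gets $\beta(\omega_0)=-1$, a perfectly finite value, so the source is not ruled out. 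The paper's own argument sidesteps this entirely by writing $\mathbf{\Sigma}_{Z|U}=\mathbf{\Sigma}_Z^{1/2}\bigl[\mathbf{I}-\mathbf{A}_Z(\mathbf{I}-\mathbf{\Sigma}_X^{-1/2}\mathbf{\Sigma}_{X|U}\mathbf{\Sigma}_X^{-1/2})\mathbf{A}_Z^{\top}\bigr]\mathbf{\Sigma}_Z^{1/2}$ with $\mathbf{A}_Z:=\mathbf{\Sigma}_Z^{-1/2}\mathbf{\Sigma}_{ZX}\mathbf{\Sigma}_X^{-1/2}$; here $\sigma_{\max}(\mathbf{A}_Z)\le1$ is automatic from positive-semidefiniteness, and the strict contraction comes from the middle factor $\mathbf{I}-\mathbf{\Sigma}_X^{-1/2}\mathbf{\Sigma}_{X|U}\mathbf{\Sigma}_X^{-1/2}\preceq(1-\delta')\mathbf{I}$, which is controlled solely by $\sup_\omega\rho^2_{UX}(\omega)<1$ (the $\delta>0$ of \eqref{e132}). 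So the conclusion holds with no constraint on $\rho_{XZ}$, and you should replace your block-Toeplitz bound for $Z$ with this representation (or, equivalently, bound $\rho_{\mathrm m}(\mathbf{Z};\mathbf{U})\le\rho_{\mathrm m}(\mathbf{Z};\mathbf{X})\rho_{\mathrm m}(\mathbf{X};\mathbf{U})\le\rho_{\mathrm m}(\mathbf{X};\mathbf{U})<1-\delta'$ via the Markov chain); your route for $Y$ is fine because \eqref{e131} supplies exactly the needed bound there.
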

\begin{proof}
See Appendix \ref{appf}.
\end{proof}
Therefore $\{(\bar{\rho}_i^{(n)})^2\}$ is asymptotically equally distributed as $\{(\rho_i^{(n)})^2\}$ on $[0,1-\delta_0)$ for some $\delta_0>0$ according to Fact~\ref{fact5}. It follows that for any continuous function $F$ on $[0,1-\delta_0)$,
\begin{align}\label{e157}
\lim_{n\to\infty}\frac{1}{n}\sum_iF((\bar{\rho}_i^{(n)})^2)=\lim_{n\to\infty}\frac{1}{n}\sum_i F((\rho_i^{(n)})^2).
\end{align}
Define $\bar{\eta}_i^{(n)}=\imath_{\bar{X}_i;\bar{U}_i}(\bar{X}_i;\bar{U}_i)$. Then fixing $\epsilon<\frac{B-I(\mathbb{U};\mathbb{X})}{3+I(\mathbb{U};\mathbb{X})}$, there exists $t>0$ such that for all $n$,
\begin{align}
\frac{1}{n}\ln\mathbb{P}\left(\imath_{\mb{X};\mb{U}}(\mb{X};\mb{U})\ge nB\right)
&= \frac{1}{n}\ln\mathbb{P}\left(\frac{1}{n}\sum_{i=1}^n\bar{\eta}_i^{(n)}\ge B\right)\label{e57}
\\
&\le \frac{1}{n}\sum_{i=1}^n\ln\mathbb{E}e^{t\bar{\eta}^{(n)}_i}-tB\label{e159}
\\
&\le t(1+\epsilon)\frac{1}{n}\sum_{i=1}^n\mathbb{E}\bar{\eta}^{(n)}_i+\epsilon t-tB\label{e160}
\end{align}
where \eqref{e159} is from Markov's inequality (or the Chernoff bound) and \eqref{e160} uses Lemma \ref{lemma3} and the fact that $|\bar{\rho}_i^{(n)}|<\sqrt{1-\delta_0}$. Now let $F\colon x\mapsto \frac{1}{2}\log(\frac{1}{1-x})$. From \eqref{e157} and \eqref{e56}, there exists $n_0>0$ such that for $n>n_0$,
\begin{align}\label{e161}
\frac{1}{n}\sum_{i=1}^n\mathbb{E}\bar{\eta}^{(n)}_i<I(\mathbb{U};\mathbb{X})+\frac{\epsilon}{1+\epsilon}.
\end{align}
Then \eqref{e160} and \eqref{e161} imply that for $n>n_0$,
\begin{align}
\frac{1}{n}\ln\mathbb{P}\left(\imath_{\mb{X};\mb{U}}(\mb{X};\mb{U})\ge nB\right)
&<t[(1+\epsilon)I(\mathbb{U};\mathbb{X})+2\epsilon-B]
\\
&<-t\epsilon.\label{e163}
\end{align}

To finish the achievability proof, we need to show that the bounds in Theorem \ref{thm55} converge to zero for rate pairs in the interior of $\mathcal{R}(\mathbb{X,Y,Z})$. An inspection of the bounds in Theorem \ref{thm55} reveals that it suffices to show (as $n\to\infty$)
\begin{enumerate}
  \item $\mathbb{P}(\imath_{\mb{U};\mb{X}}(\mb{U};\mb{X})>nB)$ converges to $0$ exponentially fast;
  \item $\mathbb{P}(\imath_{\mb{U};\mb{Y}}(\mb{U};\mb{Y})<nC)$ converges to $0$;
  \item $\mathbb{P}(\imath_{\mb{U};\mb{Z}}(\mb{U};\mb{Z})>nD)$ converges to $0$ exponentially fast,
\end{enumerate}
for $P_{\bf UXYZ}:=P_{\bf \hat{U}|\tilde{X}}P_{\bf XYZ}$, and any $B>I(\mathbb{U};\mathbb{X})$, $C<I(\mathbb{U};\mathbb{Y})$ and $D>I(\mathbb{U};\mathbb{Z})$. Speed of converge is imposed in 1) and 3), so that upon choosing $\delta$ to be exponentially decreasing in $n$, the term
\begin{align}
T^*+8\delta=4(T_1+T_2+3T_3+2\delta)
\end{align}
in \eqref{e62} is also exponentially decreasing in $n$, thus annihilating the term $\log\frac{M_1^{\frac{3}{2}}}{\delta}$ in \eqref{e62}, which grows linearly in $n$. From \eqref{e163} we see the validity of property~1).

The proof of 3) follows the same steps as that of 1). Similar to \eqref{e56}, we have
\begin{align}
\frac{1}{n}I({\bf\hat{U};\tilde{Z}})
&=\frac{1}{n}I({\bf\hat{U};\hat{Z}})
\\
&\to\frac{1}{4\pi}\int\log\left(\frac{1}{1-\rho^2_{XU}(\omega)\rho^2_{XZ}(\omega)}\right){\rm d}\omega
\\
&=I(\mathbb{U};\mathbb{Z}).\label{e149}
\end{align}
And as in \eqref{e57}-\eqref{e160}, fixing $\epsilon<\frac{D-I(\mathbb{U};\mathbb{Z})}{3+I(\mathbb{U};\mathbb{Z})}$ there exists $t>0$ so that we can upper bound
\begin{align}
\frac{1}{n}\ln\mathbb{P}\left(\imath_{\mb{Z};\mb{U}}(\mb{Z};\mb{U})\ge nD\right)
\le t(1+\epsilon)\frac{1}{n}I({\bf U;Z})+\epsilon t-tD.\label{eq150}
\end{align}
Then \eqref{e149} and \eqref{eq150} will imply 3) once
\begin{align}\label{e151}
\lim_{n\to\infty}\frac{1}{n}[I({\bf U;Z})-I({\bf\hat{U};\tilde{Z}})]=0
\end{align}
is established. Now suppose $\bf U\to \overline{\underline{U}}$ and $\bf Z\to \overline{\underline{Z}}$ are the diagonalizing linear transforms in Lemma~\ref{lem3}. Then it suffices to show that $\{\rho_{\overline{\underline{U}}_i;\overline{\underline{Z}}_i}^2\}_{i=1}^n$ and $\{\rho_{\hat{U}_i;\hat{Z}_i}^2\}_{i=1}^n$ are asymptotically equally distributed on $[0,1-\delta_0]$.
Indeed, we first note that $\max_{1\le i\le n}|\rho_{\overline{\underline{U}}_i;\overline{\underline{Z}}_i}|$ is the maximal correlation coefficient between $\bf U$ and $\bf Z$,
and $\max_{1\le i\le n}\rho_{\bar{U}_i;\bar{X}_i}$ is the maximal correlation coefficient between $\bf U$ and $\bf X$,
hence $\max_{1\le i\le n}|\rho_{\overline{\underline{U}}_i;\overline{\underline{Z}}_i}|\le\max_{1\le i\le n}\rho_{\bar{U}_i;\bar{X}_i}\le\sqrt{1-\delta_0}$ due to the Markov chain $\bf U-X-Z$. By a similar argument we also have $\max_{1\le i\le n}|\rho_{\hat{U}_i;\hat{Z}_i}|\le \sqrt{1-\delta_0}$. Hence we have shown that $\rho_{\hat{U}_i;\hat{Z}_i}^2$ and $\rho_{\overline{\underline{U}}_i;\overline{\underline{Z}}_i}^2$ are bounded in $[0,1-\delta_0]$. To show their asymptotic equidistribution, it remains to prove that
\begin{align}\label{eqq152}
{\bf I}-{\bf \Sigma_Z}^{-\frac{1}{2}}{\bf \Sigma_{Z|U}}{\bf \Sigma_Z}^{-\frac{1}{2}}\sim
{\bf I}-{\bf \Sigma_{\tilde{Z}}}^{-\frac{1}{2}}{\bf \Sigma_{\tilde{Z}|\hat{U}}}{\bf \Sigma_{\tilde{Z}}}^{-\frac{1}{2}}
\end{align}
which follows immediately from Lemma~\ref{lem4} and Fact~\ref{fact4}.

The proof of 2) is simpler: without an requirement on the speed of convergence, we can just use a coarse upper bounded via Chebyshev's inequality:
\begin{align}\label{eqq156}
\mathbb{P}(\imath_{\mb{U};\mb{Y}}(\mb{U};\mb{Y})<nC)\le
\frac{{\rm Var}(\imath_{\mb{U};\mb{Y}}(\mb{U};\mb{Y}))}{n^2(\frac{1}{n}I(\mb{U};\mb{Y})-C)^2}
\end{align}
The roles of $\bf Y$ and $\bf Z$ are identical to the counterparts of \eqref{e149} and \eqref{e151} hold, so we have
\begin{align}\label{eqq157}
\lim_{n\to\infty}\frac{1}{n}I({\bf U;Y})=I(\mathbb{U};\mathbb{Y}).
\end{align}
Suppose $\bf U\to\underline{{U}}$ and $\bf Z\to\underline{{Y}}$ are the diagonalizing linear transforms in Lemma~\ref{lem3}. Then as before $\max_{1\le i\le n}\rho_{\underline{{U}}_i;\underline{{Y}}_i}^2\le 1-\delta_0$ which is uniformly upper bounded for all $n$. Hence there exists a uniform upper bound ${\rm Var}(\imath_{\underline{{U}}_i;\underline{{Y}}_i}
(\underline{{U}}_i;\underline{{Y}}_i))<V$ for some $V>0$ independent of $n$. Then ${\rm Var}(\imath_{\mb{U};\mb{Y}}(\mb{U};\mb{Y}))\le nV$, and so condition 2) is true by virtue of \eqref{eqq156} and \eqref{eqq157}. The achievability proof for Theorem~\ref{thm5} is completed.

\begin{remark}
Although the assumption that $\beta(\omega)$ is well defined for each $\omega\in[0,2\pi)$ in Theorem~\ref{thm5} is fairly reasonable, it is still possible that $\beta(\omega)$ is \emph{not} defined for a set of frequencies of measure zero yet the Lebesgue integrals in \eqref{e58} and \eqref{e59} still make sense. In such a case, we no longer have the convenient conditions in \eqref{e153}. However, if only the first two conditions in \eqref{e153} are unfulfilled and $\min_{0\le \omega<2\pi }S_Z(\omega)>0$ remains true, we can still prove Theorem~\ref{thm5} by showing the achievability for some degraded $\mathbb{X}$ and $\mathbb{Y}$ first and then applying the closure property of the achievable region, which is similar to the argument in Lemma~\ref{assumptionlem}. Nonetheless, our proof cannot be easily extended to the case where $\min_{0\le \omega<2\pi }S_Z(\omega)=0$, since degrading the eavesdropper's observation can only augment the achievable region.
\end{remark}

\section{Discussion}\label{sec6}
As remarked earlier, Theorem \ref{thm3} is analogous to a rate distortion theorem for product sources under additive distortion measure; in fact one can show a similar result for channel capacity with additive cost constraints.
Related phenomena in information theory also include the additivity of channel capacity (without input constraints) and Wyner's common information \cite{wyner1975common}. In those cases, the achievable rate region of the product source/channel is the Minkowski sum of the achievable region of the factor sources/channels.
The evidence points to the principle that rate splitting is optimal for product resources asymptotically in most information theoretic problems admitting single-letter solutions.\footnote{Exceptions to this principle do exist, for example the key generation with an omniscient helper problem \cite{liu2015key}, the mismatched broadcast channel with a common message \cite[Remark~9.6]{el2011network}, and lossy compression with mismatched side-information \cite{watanabe2013}. \label{ft7}} Indeed, the algebraic manipulations in the converse proofs usually rely only on the independence of $\{X_t\}$, and do not require them to be identically distributed. Hence the main element in proving such a result about rate splitting (e.g. Lemma \ref{lem1} in the appendix) is usually related to the converse proof of the corresponding coding theorem.
However, there are a number of examples where the \emph{achievable} regions fail to satisfy such an additive property (c.f.~a relay broadcast channel discussed in \cite[Remark~17]{liang2007rate}), although the \emph{exact} region is not known.
Moreover, this rule also fails quite often for coding problems of combinatorial nature. For example, the additivity of zero error capacity was a famous conjecture \cite{shannon1959coding}\cite{lovasz1979shannon} which has now been disproved \cite{alon1998shannon}.

It is also interesting to consider the constant
\begin{align}
s_*(X;Y):=\inf_{U-X-Y,I(U;X)\neq0}\frac{I(U;Y)}{I(U;X)}=\inf_{Q_X\neq P_X}\frac{D(Q_Y||P_Y)}{D(Q_X||P_X)}
\end{align}
where $Q_X\to P_{Y|X}\to Q_Y$. Interestingly, $s_*(X;Y)$ does not tensorize, and in fact it usually vanishes exponentially in $L$ for i.i.d. $(X_i,Y_i)_{i=1}^L$. Indeed if $H(X|Y)>0$, we can choose $R\in(I(X;Y),H(X))$. Set $P_{Y^L|X^L}=\prod_{i=1}^L P_{Y_i|X_i}$ and $P_{X^L}=\prod_{i=1}^L P_{X_i}$. By resolvability/soft covering lemma and its strong converse \cite{wyner1975common}\cite{han1993approximation}\cite{cuff2012distributed}, we can choose a $\lfloor2^{nR}\rfloor$-type\footnote{In \cite{han1993approximation} a probability distribution $P$ is called $M$-type if $P(a)M$ is an integer for each $a\in\mathcal{A}$.} distribution $Q_{X^L}$ and set $Q_{X^L}\to P_{Y^L|X^L}\to Q_{Y^L}$ such that $D(Q_{Y^L}||P_{Y^L})$ converges to zero exponentially as $n\to\infty$ whereas $D(Q_{X^L}||P_{X^L})$ is bounded away from zero, from which the exponential decay of $\frac{D(Q_{Y^L}||P_{Y^L})}{D(Q_{X^L}||P_{X^L})}$ follows. This implies, among other things, that no information theoretic problem can have a single-letter solution of the form $[I(U;Y),\infty)\times [0,I(U;X)]$.

\section{Acknowledgments}
We are pleased to acknowledge Sanket Satpathy for suggesting the Minkowski sum interpretation in Theorem~\ref{thm3}, and Shun Watanabe for pointing out the last two examples in footnote~\ref{ft7}. This work was supported by NSF under Grants CCF-1350595, CCF-1116013, CCF-1319299, CCF-1319304, and the Air Force Office of Scientific Research under Grant FA9550-15-1-0180, FA9550-12-1-0196.

\appendices
\section{A Key Observation for Product Sources}\label{app1}
The following observation is central to the proof of both tensorization property of $s^*_Z(X;Y)$ and the optimality of rate splitting in Theorem \ref{thm3}. It thus manifests how the two problems are inherently connected.
\begin{lemma}\label{lem1}
Suppose that $\{(X_i,Y_i,Z_i)\}_{i=1}^L$ possess the product structure of \ref{assump1} and \eqref{assump2}, and $(U,V)$ are r.v.'s such that $(U,V)-X^L-(Y^L,Z^L)$. Then there exist $U^L$ and $V^L$ such that $(U_i,V_i)-X_i-(Y_i,Z_i)$ for $i=1,\dots,L$ and
\begin{align}
I(U,V;X^L)-I(U,V;Y^L)
&\ge\sum_{i=1}^L [I(U_i,V_i;X_i)-I(U_i,V_i;Y_i)],\label{e43}\\
I(V;Y^L|U)-I(V;Z^L|U)
&=\sum_{i=1}^L [I(V_i;Y_i|U_i)-I(V_i;Z_i|U_i)].\label{e1}
\end{align}
\end{lemma}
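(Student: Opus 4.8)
The plan is to single-letterize $(U,V)$ by the standard Csisz\'ar--K\"orner device, with the asymmetric choice
\[
U_i := (U,\, Y^{i-1},\, Z_{i+1}^L), \qquad V_i := (V,\, U,\, Y^{i-1},\, Z_{i+1}^L),\qquad i=1,\dots,L,
\]
with the conventions $Y^0 = Z_{L+1}^L = \emptyset$. First I would reduce to the clean case: since all the mutual informations in \eqref{e43}--\eqref{e1} depend on $(U,V,X^L,Y^L,Z^L)$ only through $P_{UVX^L}$ and the pairwise marginals $P_{X^LY^L}=\prod_i P_{X_iY_i}$, $P_{X^LZ^L}=\prod_i P_{X_iZ_i}$ (as noted in Section~\ref{s2c}), one may assume without loss of generality that the joint has the full product form $P_{X^LY^LZ^L}=\prod_{i=1}^L P_{X_iY_iZ_i}$. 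Under this assumption the per-letter Markov chain $(U_i,V_i)-X_i-(Y_i,Z_i)$ is immediate: conditioning $(U,V)-X^L-(Y^L,Z^L)$ together with the product structure shows $P_{Y_iZ_i\mid X_i,U,V,Y^{i-1},Z_{i+1}^L}=P_{Y_iZ_i\mid X_i}$.

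For the key-rate identity \eqref{e1}, I would expand $I(V;Y^L|U)=\sum_i I(V;Y_i|U,Y^{i-1})$ and $I(V;Z^L|U)=\sum_i I(V;Z_i|U,Z_{i+1}^L)$, and in the $i$-th term of the first sum insert $Z_{i+1}^L$, in the $i$-th term of the second insert $Y^{i-1}$, so that both acquire the conditioning $(U,Y^{i-1},Z_{i+1}^L)$. The leftover terms are $\sum_i I(Z_{i+1}^L;Y_i\mid U,Y^{i-1}) - \sum_i I(Y^{i-1};Z_i\mid U,Z_{i+1}^L)$ together with the same expression conditioned additionally on $V$; each vanishes by the Csisz\'ar sum identity. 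What remains is $\sum_i\bigl[I(V;Y_i\mid U_i)-I(V;Z_i\mid U_i)\bigr]$, which equals $\sum_i\bigl[I(V_i;Y_i\mid U_i)-I(V_i;Z_i\mid U_i)\bigr]$ since $V_i=(V,U_i)$.

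For the communication-rate inequality \eqref{e43}, the Markov chain $(U,V)-X^L-Y^L$ lets me rewrite the left side as $I(U,V;X^L\mid Y^L)$, and then, using the product structure to replace $H(X_i\mid Y^L,X_{i+1}^L)$ by $H(X_i\mid Y_i)$, as $\sum_i\bigl[H(X_i\mid Y_i)-H(X_i\mid U,V,Y^L,X_{i+1}^L)\bigr]$. The crux is the per-letter entropy bound
\[
H(X_i\mid U,V,Y^L,X_{i+1}^L)\ \le\ H(X_i\mid U,V,Y^i,Z_{i+1}^L),
\]
which I would establish in two steps: (i) $Z_{i+1}^L$ may be adjoined to the left-hand conditioning for free, because each $Z_j$ with $j>i$ satisfies $Z_j-X_j-(\text{everything else})$ under the product and Markov assumptions, whence $Z_{i+1}^L\perp X_i\mid (U,V,Y^L,X_{i+1}^L)$; and (ii) then dropping $Y_{i+1}^L$ and $X_{i+1}^L$ from the conditioning can only increase entropy. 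Summing gives $H(X^L\mid U,V,Y^L)\le \sum_i H(X_i\mid U_i,V_i,Y_i)$, and since $(U_i,V_i)-X_i-Y_i$ yields $H(X_i\mid Y_i)-H(X_i\mid U_i,V_i,Y_i)=I(U_i,V_i;X_i)-I(U_i,V_i;Y_i)$, we obtain \eqref{e43}.

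I expect the main obstacle to be pinning down the auxiliaries: the asymmetric choice $U_i=(U,Y^{i-1},Z_{i+1}^L)$ is precisely what makes the Csisz\'ar-sum cancellation in \eqref{e1} and the ``replace $X_{i+1}^L,Y_{i+1}^L$ by $Z_{i+1}^L$'' entropy step in \eqref{e43} work at the same time, and the delicate part is the conditional-independence bookkeeping behind that entropy step --- in particular verifying that conditioning on $(U,V,Y^L,X_{i+1}^L)$ already contains enough information (namely $X_{i+1}^L$ \emph{and} $Y_{i+1}^L$, via $Y^L$) to decouple $Z_{i+1}^L$ from $X_i$.
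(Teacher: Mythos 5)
Your proof is correct and its skeleton matches the paper's: both reduce to the case of a full product joint (the paper does this explicitly by constructing $\bar Z^L$ with the same $P_{X^LZ^L}$ marginal but satisfying $Y^L-X^L-\bar Z^L$, then at the end recouples $U_i$ to the original $Z_i$; your ``WLOG'' phrasing compresses this), both pick the asymmetric auxiliary $U_i=(U,Y^{i-1},Z_{i+1}^L)$, and both establish \eqref{e1} via the Csisz\'ar sum identity. The one place you genuinely diverge is \eqref{e43}: the paper applies the same chain-rule/Csisz\'ar-sum decomposition to $I(U,V;X^L)-I(U,V;Y^L)$, inserts $\bar Z_{i+1}^L$ by the Markov property, and then drops a nonnegative term $I(X_{i+1}^L;X_i\mid\cdot)-I(X_{i+1}^L;Y_i\mid\cdot)\ge 0$; you instead pass to $I(U,V;X^L\mid Y^L)$, expand in conditional entropies, and bound $H(X_i\mid U,V,Y^L,X_{i+1}^L)$ by $H(X_i\mid U_i,V_i,Y_i)$ in two moves (adjoin $Z_{i+1}^L$ for free, drop the rest). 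These are mechanically equivalent but packaged differently; your entropy-form derivation is arguably a touch more transparent about \emph{why} the inequality goes the right way. One small slip: the justification ``each $Z_j$ with $j>i$ satisfies $Z_j-X_j-(\text{everything else})$'' is not forced by the product factorization alone, since $Z_j$ and $Y_j$ can remain dependent given $X_j$; you either need the extra Markov chain $Y_j-X_j-Z_j$ (which the paper's $\bar Z$-construction provides and which you are implicitly invoking), or the weaker but always-true statement $Z_{i+1}^L-(X_{i+1}^L,Y_{i+1}^L)-\text{rest}$, which is what your final-paragraph remark already points to and which suffices because $(X_{i+1}^L,Y_{i+1}^L)$ sits in the conditioning. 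Also, the paper simply takes $V_i:=V$ rather than $V_i:=(V,U_i)$; this is immaterial since everything is evaluated modulo conditioning on $U_i$.
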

\begin{proof}
Suppose we are given the additional condition that $Y^L-X^L-Z^L$ form a Markov chain, then \eqref{assump1} and \eqref{assump2} will imply $P_{X^LY^LZ^L}=\prod_{i=1}^L P_{X_i,Y_i,Z_i}$ which will facilitate the proof. Now in general $Y^L-X^L-Z^L$ may not be true; but notice that the expressions in \eqref{e43} and \eqref{e1} depend only on the marginal distributions of $Y^L$ and $Z^L$ given $X^L$, rather than how they are correlated given $X^L$. Hence we can convert the source distribution to a new one where $Y^L-X^L-Z^L$ while the conditional marginal distributions of $(X^L,Y^L)$ and $(X^L,Z^L)$ remain the same.

To carry out the above procedure, choose $\bar{Z}^L$ such that
\begin{align}
&P_{UVX^LY^L\bar{Z}^L}(u,v,x^L,y^L,z^L)\nonumber\\
&=P_{UVX^L}(u,v,x^L)P_{Y^L|X^L}(y^L|x^L)P_{Z^L|X^L}(z^L|x^L).
\end{align}
Define $\bar{U}_i=(Y^{i-1},\bar{Z}_{i+1}^L,U)$ and $V_i=V$. Then $(\bar{U}_i,V_i)-X_i-(Y_i,\bar{Z}_i)$ for each $i$. Moreover
\begin{align}\label{eq184}
I(V;Y^L|U)-I(V;\bar{Z}^L|U)=&\sum_{i=1}^L [I(V_i;Y_i|\bar{U}_i)-I(V_i;\bar{Z}_i|\bar{U}_i)]
\end{align}
holds, which is a standard identity in multiuser information theory (see for example \cite[Lemma 4.1]{ahlswede1993common}),

Next, observe that
\begin{align}
&I(U,V;X^L)-I(U,V;Y^L)\nonumber\\
&=\sum_{i=1}^L[I(U,V;X_i|X_{i+1}^L,Y^{i-1})-I(U,V;Y_i|X_{i+1}^LY^{i-1})]\label{st1}\\
&=\sum_{i=1}^L[I(U,V,X_{i+1}^L,Y^{i-1};X_i)-I(U,V,X_{i+1}^LY^{i-1};Y_i)]\label{st2}\\
&=\sum_{i=1}^L[I(U,V,X_{i+1}^L,Y^{i-1},\bar{Z}^L_{i+1};X_i)\nonumber\\
&\quad\quad\quad-I(U,V,X_{i+1}^LY^{i-1},\bar{Z}^L_{i+1};Y_i)]\label{st3}\\
&=\sum_{i=1}^L[I(U,V,Y^{i-1},\bar{Z}^L_{i+1};X_i)-I(U,V,Y^{i-1},\bar{Z}^L_{i+1};Y_i)]\nonumber\\
&+\sum_{i=1}^L[I(X_{i+1}^L;X_i|U,V,Y^{i-1},\bar{Z}^L_{i+1})\nonumber\\
&\quad\quad\quad-I(X_{i+1}^L;Y_i|U,V,Y^{i-1},\bar{Z}^L_{i+1})]\\
&\ge\sum_{i=1}^L[I(U,V,Y^{i-1},\bar{Z}^L_{i+1};X_i)-
I(U,V,Y^{i-1},\bar{Z}^L_{i+1};Y_i)]\label{st4}\\
&=\sum_{i=1}^L[I(\bar{U}_i,V_i;X_i)-I(\bar{U}_i,V_i;Y_i)],\label{eq190}
\end{align}
where (\ref{st1}) is again an application of \cite[Lemma 4.1]{ahlswede1993common}, and (\ref{st2}) is from the independence $(X_i,Y_i)\perp(X_{i+1}^L,Y^{i-1})$. Equality (\ref{st3}) follows from the Markov condition $\bar{Z}_{i+1}^L-(U,V,X_{i+1}^L,Y^{i-1})-(X_i,Y_i)$.
Inequality (\ref{st4}) is because of $I(X_{i+1}^L;X_i|U,V,Y^{i-1},\bar{Z}^L_{i+1})=I(X_{i+1}^L;X_i,Y_i|U,V,Y^{i-1},\bar{Z}^L_{i+1})$, due to the Markov condition $X_{i+1}^L-(U,V,Y^{i-1},\bar{Z}^L_{i+1},X_i)-Y_i$.

Finally, for each $i$ let $U_i$ be a r.v. such that
\begin{align}
P_{U_i|V_iX_iY_iZ_i}(u_i|v_i,x_i,y_i,z_i)=P_{\bar{U}_i|V_iX_i}(u_i|v_ix_i).
\end{align}
Then $(U_i,V_i,X_i,Z_i)$ and $(\bar{U}_i,V_i,X_i,\bar{Z}_i)$ have the same distribution, hence
\begin{align}
I(V_i;Z_i|U_i)&=I(V_i;\bar{Z}_i|\bar{U}_i).\label{e185}
\end{align}
By the same token, we also have
\begin{align}
I(V_i;Y_i|U_i)&=I(V_i;Y_i|\bar{U}_i),
\\
I(U_i,V_i;Y_i)&=I(\bar{U}_i,V_i;Y_i),
\\
I(U_i,V_i;X_i)&=I(\bar{U}_i,V_i;X_i),
\\
I(V;Z^L|U)&=I(V;\bar{Z}^L|U).\label{e184}
\end{align}
Therefore we see that \eqref{eq184}, \eqref{eq190} imply the desired result, once we make the substitutions with \eqref{e185}-\eqref{e184}.
\end{proof}


In the case where $Z^L$ does not exist, the tensorization property of $s^*(X;Y)$ and Theorem \ref{thm3} can also be proved using the following result.
\begin{lemma}\label{lem5}
Suppose that $\{(X_i,Y_i)\}_{i=1}^L$ possess the product structure of \eqref{assump1} and \eqref{assump2}, and $U$ is a r.v. such that $U-X^L-Y^L$. Then there exist $U^L$ such that $U_i-X_i-Y_i$ for $i=1,\dots,L$ and
\begin{align}
I(U;X^L)&=\sum_{i=1}^L I(U_i;X_i),\label{e179}
\\
I(U;Y^L)&\le\sum_{i=1}^L I(U_i;Y_i).\label{e180}
\end{align}
\end{lemma}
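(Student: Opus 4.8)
The plan is to build the single-letter auxiliaries by the usual prefix construction and then discharge the three requirements by chain-rule bookkeeping. Concretely, I would set $U_i:=(U,X^{i-1})$ for $i=1,\dots,L$ (with $X^0$ empty), and I expect this choice to satisfy all of \eqref{e179}, \eqref{e180}, and the Markov requirement. The Markov chain $U_i-X_i-Y_i$ is immediate: the product hypothesis $P_{X^LY^L}=\prod_i P_{X_iY_i}$ gives $P_{Y^L\mid X^L}=\prod_i P_{Y_i\mid X_i}$, and combining with $U-X^L-Y^L$ yields $P_{Y^L\mid X^L,U}=\prod_i P_{Y_i\mid X_i}$, so that $Y_i$ is conditionally independent of $(U,X^{\setminus i},Y^{\setminus i})$ given $X_i$; in particular $(U,X^{i-1})-X_i-Y_i$.

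Next, for the equality \eqref{e179} I would expand $I(U;X^L)=\sum_{i=1}^L I(U;X_i\mid X^{i-1})$ by the chain rule and use that the $X_i$ are mutually independent (a consequence of the product structure) to rewrite $I(U;X_i\mid X^{i-1})=I(U,X^{i-1};X_i)=I(U_i;X_i)$; summing gives \eqref{e179} with equality, not merely inequality. For \eqref{e180} I would expand $I(U;Y^L)=\sum_{i=1}^L I(U;Y_i\mid Y^{i-1})$ and bound each term by
$$I(U;Y_i\mid Y^{i-1})\le I(U,Y^{i-1};Y_i)\le I(U,X^{i-1},Y^{i-1};Y_i)=I(U,X^{i-1};Y_i)=I(U_i;Y_i),$$
where the final equality is the crucial step: it holds because $Y^{i-1}$ is conditionally independent of $(U,Y_i)$ given $X^{i-1}$, so adjoining $Y^{i-1}$ to the conditioning/argument is costless. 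Summing over $i$ then yields \eqref{e180}.

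The one point that needs genuine care — and the likely (mild) obstacle — is establishing the conditional independence $Y^{i-1}\perp(U,Y_i)\mid X^{i-1}$, which simultaneously underlies the equality in \eqref{e179} and the reduction in \eqref{e180}. This has to be extracted cleanly from $P_{Y^L\mid X^L,U}=\prod_i P_{Y_i\mid X_i}$ together with the independence of the $X_i$: conditioned on $X^L$ and $U$, the block $Y^{i-1}$ has law $\prod_{j<i}P_{Y_j\mid X_j}$, which depends neither on $(X_i,\dots,X_L)$, nor on $U$, nor on $Y_i$, and marginalizing $X_i^L$ leaves it unchanged, giving $P_{Y^{i-1}\mid X^{i-1},U,Y_i}=P_{Y^{i-1}\mid X^{i-1}}$. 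Everything else is routine manipulation. As a sanity check, this is just the $\mathbf{Z}$-free specialization of the bookkeeping played by $Y^{i-1}$ and $\bar{Z}^L_{i+1}$ in the proof of Lemma~\ref{lem1}, and it makes transparent why equality can be retained on the $X$-side while only an inequality survives on the $Y$-side.
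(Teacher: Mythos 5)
Your proof is correct and essentially reproduces the paper's argument: the paper proves the $L=2$ case with $U_1:=U$, $U_2:=(U,X_1)$ and induces on $L$, which unrolls to exactly your prefix choice $U_i:=(U,X^{i-1})$, and the chain-rule manipulations for \eqref{e179} and \eqref{e180} together with the Markov observation $Y^{i-1}-(U,X^{i-1})-Y_i$ are the same. The only difference is presentational (direct general-$L$ bookkeeping instead of induction), so there is nothing to flag.
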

\begin{proof}
By induction, it suffices to prove the case of $L=2$. Let $U_1:=U$ and $U_2:=(U,X_1)$. We have:
\begin{align}
I(U;X^2)&=I(U;X_1)+I(U;X_2|X_1)\nonumber\\
&=I(U;X_1)+[I(U;X_2|X_1)+I(X_1;X_2)]\nonumber\\
&=I(U;X_1)+I(U,X_1;X_2)\nonumber,
\end{align}
\begin{align}
I(U;Y^2)&=I(U;Y_1)+I(U;Y_2|Y_1)\nonumber\\
&=I(U;Y_1)+[I(U;Y_2|Y_1)+I(Y_1;Y_2)]\nonumber\\
&=I(U;Y_1)+I(U,Y_1;Y_2)\nonumber\\
&\le I(U;Y_1)+I(U,X_1,Y_1;Y_2)\nonumber\\
&=I(U;Y_1)+I(U,X_1;Y_2).
\end{align}
where the last equality is from the Markov chain $Y_1-(U,X_1)-Y_2$.
\end{proof}
Note that setting $Z^L$ in Lemma \ref{lem1} to be a constant will imply the existence of $U^L$ satisfying $U_i-X_i-Y_i$ and the inequalities
\begin{align}
I(U;X^L)&\ge\sum_{i=1}^L I(U_i;X_i),
\\
I(U;Y^L)&=\sum_{i=1}^L I(U_i;Y_i),
\end{align}
which are different from (\ref{e179}) and (\ref{e180}). Hence Lemma \ref{lem5} is not a special case of Lemma \ref{lem1}.

\section{Proof of Theorem\ref{thm1}}\label{app2}
\begin{enumerate}
\item From the data processing inequality the denominator in (\ref{ss}) is nonnegative, and $s^*_Z(X;Y)\le 1$. If there exists $U$ such that
$I(U;X)-I(U;Y)>0$, we can choose $V$ independent of $U,X,Y$ so that the numerator vanishes whereas the denominator is positive, which shows that $s^*_Z(X;Y)\ge0$. Otherwise if $I(U;X)-I(U;Y)=0$ for all $U$, the numerator will always be nonnegative:
\begin{align}
&\quad I(V;Y|U)-I(V;Z|U)\nonumber\\
&=I(U,V;Y)-I(U,V;Z)-I(U;Y)+I(U;Z)\nonumber\\
&=I(U,V;X)-I(U,V;Z)-I(U;X)+I(U;Z).
\end{align}
Hence $s^*_Z(X;Y)\ge0$ always holds.

Of course, from an operational viewpoint $0\le s^*_Z(X;Y)\le 1$ must be true because of Part 3) as well.

\item We only show that
\begin{align}
s^*_Z(X^L;Y^L)\le \max_{1\le i\le n}s^*_{Z_i}(X_i;Y_i)
\end{align}
since the other direction is trivial. For any $U,V$ such that $(U,V)-X^L-(Y^L,Z^L)$ and both
\begin{align}
I(U,V;X^L)-I(U,V;Y^L)>0
\end{align}
and
\begin{align}
I(V;Y^L|U)-I(V;Z^L|U)>0,
\end{align}
let $U^L,V^L$ be as in Lemma~\ref{lem1} in the appendix. That is, $U^L,V^L$ are such that $(U_i,V_i)-X_i-(Y_i,Z_i)$ for each $i$ and both
\begin{align}
I(U,V;X^L)-I(U,V;Y^L)\ge\sum_{i=1}^L [I(U_i,V_i;X_i)-I(U_i,V_i;Y_i)]
\end{align}
and
\begin{align}
I(V;Y^L|U)-I(V;Z^L|U)=\sum_{i=1}^L [I(V_i;Y_i|U_i)-I(V_i;Z_i|U_i)]
\end{align}
hold. Then
\begin{align}
&\frac{I(V;Y^L|U)-I(V;Z^L|U)}{I(U,V;X^L)-I(U,V;Y^L)}
\nonumber\\
&\le \frac{\sum_{i=1}^L [I(V_i;Y_i|U_i)-I(V_i;Z_i|U_i)]}{\sum_{i=1}^L [I(U_i,V_i;X_i)-I(U_i,V_i;Y_i)]}\label{leftmost}\\
&\le \max_{i\in \mathcal{I}}\frac{I(V_i;Y_i|U_i)-I(V_i;Z_i|U_i)}
{I(U_i,V_i;X_i)-I(U_i,V_i;Y_i)}\nonumber\\
&\le \max_{1\le i\le L}\sup_{U_i,V_i}\frac{I(V_i;Y_i|U_i)-I(V_i;Z_i|U_i)}
{I(U_i,V_i;X_i)-I(U_i,V_i;Y_i)}
\end{align}
where $\mathcal{I}$ is the set of indices such that $I(U_i,V_i;X_i)-I(U_i,V_i;Y_i)\neq 0$, and the suprema are over all $U_i,V_i$ such that $(U_i,V_i)-X_i-(Y_i,Z_i)$ and $I(U_i,V_i;X_i)-I(U_i,V_i;Y_i)\neq 0$. Supremizing with respect to $U,V$ on the left hand side of (\ref{leftmost}) shows the tensorization property of $\frac{s^*_Z(X;Y)}{1-s^*_Z(X;Y)}$, which is equivalent to the tensorization property of $s^*_Z(X;Y)$.

\item One direction of the inequality is trivial: if $U$ and $V$ are such that $I(V;Y|U)-I(V;Z|U)\ge 0$, we have (see \eqref{equ18}-\eqref{equ20})
\begin{figure*}[b]
\hrulefill
\begin{align}
&\frac{I(V;Y|U)-I(V;Z|U)}
{I(V;X|U)-I(V;Z|U)+I(U;X)-I(U;Y)}\nonumber\\
&=\frac{\int[I(V;Y|U=u)-I(V;Z|U=u)]{\rm d}P_U(u)}{\int[I(V;X|U=u)-I(V;Z|U=u)
+D(P_{X|U=u}||P_X)-D(P_{Y|U=u}||P_Y)]{\rm d}P_U(u)}\label{equ18}\\
&\le\sup_{u}\frac{I(V;Y|U=u)-I(V;Z|U=u)}{I(V;X|U=u)-I(V;Z|U=u)
+D(P_{X|U=u}||P_X)-D(P_{Y|U=u}||P_Y)}\\
&\le\sup_{Q_{VX}}\frac{I(\bar{V};\bar{Y})-I(\bar{V};\bar{Z})}
{I(\bar{V};\bar{X})-I(\bar{V};\bar{Z})+D(Q_X||P_X)-D(Q_Y||P_Y)}.\label{equ20}
\end{align}
\end{figure*}
For the other direction, to construct a distribution on $(U,V,X,Y,Z)$ from $Q$, we use a binary $U$ biased heavily toward zero. When $U=1$, the distribution is as specified by $Q$. When $U=0$, $V$ is independent of $(X,Y,Z)$, and the marginal distribution on $X$ balanced slightly to counteract $Q$, so that on average the distribution on $X$ is the source distribution. Even though this distribution is only rarely behaving according to $Q$ (i.e. only when $U=1$, which has low probability), we will see that the quantity of interest only depends on $Q$. Formally, for any $Q_{VX}$, consider
\begin{align}
Q^{(1)}_{VX}&:=Q_{VX},\\
Q^{(0)}_{VX}&:=P_V\cdot\frac{P_{X}-\alpha Q^{(1)}_{X}}{1-\alpha},\label{eq23}\\
P^{\alpha}_{XUV}(x,u,v)
&:=(1-\alpha)Q^{(0)}_{VX}(v,x)1_{u=0}
\nonumber\\
&\quad+\alpha Q^{(1)}_{VX}(v,x)1_{u=1},
\end{align}
where $P_V$ is an arbitrary probability distribution on $\mathcal{V}$. Then clearly $P^{\alpha}_X=P_X$ for each $0<\alpha<1$. Finally, define
\begin{align}
P^{\alpha}_{XYZUV}:=P^{\alpha}_{XUV}P_{YZ|X}.
\end{align}
In \eqref{e18} we have implicitly assumed that $D(Q_X||P_X)$ is well defined and so the support of $Q_X$ is a subset of the support of $P_X$.
Thus (\ref{eq23}) is a well-defined distribution for $\alpha>0$ small enough. Then, we can verify that $P^{\alpha}_{XYZ}=P_{XYZ}$ and the Markov chain $(U,V)-X-(Y,Z)$ with respect to $P^{\alpha}$. Next observe that (see \eqref{e188}-\eqref{e187})
\begin{figure*}[b]
\hrulefill
\begin{align}
&\frac{I(V;Y|U)-I(V;Z|U)}
{I(V;X|U)-I(V;Z|U)+I(U;X)-I(U;Y)}\nonumber\\
&=\frac{\alpha[I(V;Y|U=1)-I(V;Z|U=1)]}
{\alpha[I(V;X|U=1)-I(V;Z|U=1)]+I(U;X)-I(U;Y)}\label{e188}\\
&=\frac{\alpha[I(\bar{V};\bar{Y})-I(\bar{V};\bar{Z})]}
{\alpha[I(\bar{V};\bar{X})-I(\bar{V};\bar{Z})]+\alpha[D(Q_X||P_X)-D(Q_Y||P_Y)]
        +(1-\alpha)[D(Q_{X|U=0}||P_X)-D(Q_{Y|U=0}||P_Y)]}\\
&=\frac{I(\bar{V};\bar{Y})-I(\bar{V};\bar{Z})}
{I(\bar{V};\bar{X})-I(\bar{V};\bar{Z})+D(Q_X||P_X)-D(Q_Y||P_Y)+o(1)}\label{e187}
\end{align}
\end{figure*}
as $\alpha\downarrow0$, where $(U,V,X,Y,Z)$ has the joint distribution $P_{UVXYZ}:=P^{\alpha}_{UVXYZ}$, and the distribution of $(\bar{V},\bar{X},\bar{Y},\bar{Z})$ is as in \eqref{e19}. Equation \eqref{e188} is from the independence between $V$ and $(X,Y,Z)$ under $U=0$. To justify \eqref{e187}, recall the property of relative entropy that if $P_{\lambda}:=\lambda P_1+(1-\lambda)P_0$ is a distribution for sufficiently small $\lambda>0$, then $D(P_{\lambda}||P_0)=o(\lambda)$. This smoothness condition implies that
\begin{align}
D(P_{X|U=0}||P_X)&=o(\alpha),
\\
D(P_{Y|U=0}||P_Y)&=o(\alpha).
\end{align}
Therefore \eqref{e187} is true, and the $\ge$ part of (\ref{e18}) holds.

\item In the case of degraded sources $X-Y-Z$, we can write (see \eqref{e203}-\eqref{e206})
\begin{figure*}[b]
\hrulefill
\begin{align}
&\frac{I(\bar{V};\bar{Y})-I(\bar{V};\bar{Z})}
{I(\bar{V};\bar{X})-I(\bar{V};\bar{Z})+D(Q_X||P_X)-D(Q_Y||P_Y)}
\label{e203}
\\
&=\frac{\int[D(P_{\bar{Y}|\bar{V}=v}||P_Y)-D(P_{\bar{Z}|\bar{V}=v}||P_Z)]{\rm d}P_{\bar{V}}(v)-D(P_{\bar{Y}}||P_Y)+D(P_{\bar{Z}}||P_Z)}
{\int[D(P_{\bar{X}|\bar{V}=v}||P_X)-D(P_{\bar{Z}|\bar{V}=v}||P_Z)]{\rm d}P_{\bar{V}}(v)-D(P_{\bar{Y}}||P_Y)+D(P_{\bar{Z}}||P_Z)}
\\
&\le\frac{\int[D(P_{\bar{Y}|\bar{V}=v}||P_Y)-D(P_{\bar{Z}|\bar{V}=v}||P_Z)]{\rm d}P_{\bar{V}}(v)}
{\int[D(P_{\bar{X}|\bar{V}=v}||P_X)-D(P_{\bar{Z}|\bar{V}=v}||P_Z)]{\rm d}P_{\bar{V}}(v)}
\\
&\le\sup_{Q_X}\frac{D(Q_Y||P_Y)-D(Q_Z||P_Z)}{D(Q_X||P_X)-D(Q_Z||P_Z)},
\label{e206}
\end{align}
\end{figure*}
where the first inequality is from $-D(P_{\bar{Y}}||P_Y)+D(P_{\bar{Z}}||P_Z)\le 0$, and the second inequality used the fact that $D(P_{\bar{Y}|\bar{V}=v}||P_Y)-D(P_{\bar{Z}|\bar{V}=v}||P_Z)\ge0$. This establishes the ``$\le$'' part of (\ref{eq19}). Conversely, for any $Q_X$, define
\begin{align}
Q^{(1)}_{X}&=Q_{X},\\
Q^{(0)}_{X}&=\frac{P_{X}-\alpha Q^{(1)}_{X}}{1-\alpha},\\
Q^{\alpha}_{VX}(v,x)&=\alpha Q^{(1)}_{VX}(v,x)1_{v=1}+
(1-\alpha)Q^{(0)}_{VX}(v,x)1_{v=0}.
\end{align}
Let $P_{\bar{V}\bar{X}\bar{Y}\bar{Z}}=Q^{\alpha}_{VX}P_{YZ|X}$. Notice that $P_{\bar{X}\bar{Y}\bar{Z}}=P_{XYZ}$. Then
\begin{align}
&\lim_{\alpha\downarrow 0}\frac{I(\bar{V};\bar{Y})-I(\bar{V};\bar{Z})}
{I(\bar{V};\bar{X})-I(\bar{V};\bar{Z})+D(Q^{\alpha}_X||P_X)-D(Q^{\alpha}_Y||P_Y)}
\\
&=\lim_{\alpha\downarrow 0}\frac{I(\bar{V};\bar{Y})-I(\bar{V};\bar{Z})}
{I(\bar{V};\bar{X})-I(\bar{V};\bar{Z})}
\\
&=\frac{D(Q_Y||P_Y)-D(Q_Z||P_Z)}{D(Q_X||P_X)-D(Q_Z||P_Z)}
\end{align}
This implies that
\begin{align}
&\sup_{R_{VX}}\frac{I(\bar{V};\bar{Y})-I(\bar{V};\bar{Z})}
{I(\bar{V};\bar{X})-I(\bar{V};\bar{Z})+D(R_X||P_X)-D(R_Y||P_Y)}
\nonumber\\
&\ge \frac{D(Q_Y||P_Y)-D(Q_Z||P_Z)}{D(Q_X||P_X)-D(Q_Z||P_Z)}\label{e34}
\end{align}
where $P_{\bar{V}\bar{X}\bar{Y}\bar{Z}(v,x,y,z)}=R_{VX}(v,x)P_{YZ|X}(y,z|x)$.
The proof of (\ref{eq19}) is complete since $Q_X$ in the right side of (\ref{e34}) is arbitrary.

\item If the sources are of the form $X=(X',Z)$, $Y=(Y',Z)$, we have
\begin{align}
&\frac{D(Q_Y||P_Y)-D(Q_Z||P_Z)}{D(Q_X||P_X)-D(Q_Z||P_Z)}
\nonumber\\
&=\frac{\int D(Q_{Y'|Z=z}||P_{Y'|Z=z}){\rm d}Q_Z(z)}{\int D(Q_{X'|Z=z}||P_{X'|Z=z}){\rm d}Q_Z(z)}\nonumber\\
&\le \esssup_{z\in\mathcal{Z}}\sup_{Q_{X'}}\frac{D(Q_{Y'}||P_{Y'|Z=z})}{D(Q_{X'}||P_{X'|Z=z})},\label{eq40}
\end{align}
where in the last supremum $Q_{X'Y'}=Q_{X'}P_{Y'|X'Z=z}$. Conversely for any $z_0\in\mathcal{Z}$ and $Q_{X'}$ in (\ref{eq40}), define
\begin{align}
\tilde{Q}_X(x)
&=\tilde{Q}_{X'Z}(x',z)
\nonumber
\\
&=P_Z(z_0)Q_{X'}(x')1_{z=z_0}
\\
&\quad+P_Z(z)P_{X'|Z=z}(x')1_{z\neq z_0}.
\end{align}
Then
\begin{align}
\frac{D(\tilde{Q}_Y||P_Y)-D(\tilde{Q}_Z||P_Z)}{D(\tilde{Q}_X||P_X)
-D(\tilde{Q}_Z||P_Z)}=\frac{D(Q_{Y'}||P_{Y'|Z=z_0})}{D(Q_{X'}||P_{X'|Z=z_0})}.
\end{align}
This establishes (\ref{eq20}).

\item When $Z$ is constant, we recover $s^*(X;Y)=\sup_{Q_U\neq P_U}\frac{I(U;Y)}{I(U;X)}$ by either (\ref{eq19}) or (\ref{eq20}).
\end{enumerate}

\section{Proof of Lemma \ref{lem3}}\label{appc}
With the invertible linear transform $\tilde{\mb{X}}:=\mb{\Sigma}^{-1/2}_{\mb{X}}\mb{X}$, we have
\begin{align}
\mb{\Sigma}_{\tilde{\mb{X}}}=\left(
                   \begin{array}{cc}
                     \mb{I}_{r_x} & \mb{0} \\
                     \mb{0} & \mb{0} \\
                   \end{array}
                 \right),
\end{align}
where $r_x={\rm rank}(\mb{\Sigma}_{\mb{X}})$. Similar structures are also present in $\mb{\Sigma}_{\tilde{\mb{Y}}}$ and $\mb{\Sigma}_{\tilde{\mb{Z}}}$. By positive-semidefiniteness of the covariance matrix, we have the form
\begin{align}
\mb{\Sigma}_{\tilde{\mb{X}},\tilde{\mb{Y}}}&=
\left(
  \begin{array}{cc}
    \mb{A}_{x,y} & \mb{0} \\
    \mb{0} & \mb{0} \\
  \end{array}
\right),\nonumber\\
\mb{\Sigma}_{\tilde{\mb{X}},\tilde{\mb{Z}}}&=
\left(
  \begin{array}{cc}
    \mb{A}_{x,z} & \mb{0} \\
    \mb{0} & \mb{0} \\
  \end{array}
\right),
\end{align}
where $\mb{A}_{x,y}$ and $\mb{A}_{x,z}$ are $r_x\times r_y$ and $r_x\times r_z$ matrices, respectively. However, we also have
\begin{align}
\mb{\Sigma}_{\tilde{\mb{X}},\tilde{\mb{Y}}}&=
\mb{\Sigma}^{-1/2}_{\mb{X}}\mb{\Sigma}_{\mb{X},\mb{Y}}\mb{\Sigma}^{-1/2}_{\mb{Y}},\\
\mb{\Sigma}_{\tilde{\mb{X}},\tilde{\mb{Z}}}&=
\mb{\Sigma}^{-1/2}_{\mb{X}}\mb{\Sigma}_{\mb{X},\mb{Z}}\mb{\Sigma}^{-1/2}_{\mb{Z}},
\end{align}
Hence if $\mb{G}$ and $\mb{H}$ as defined in \eqref{defm} and \eqref{defn} commute, then so do $\mb{A}_{x,y}\mb{A}_{y,x}$ and $\mb{A}_{x,z}\mb{A}_{z,x}$.
Since commuting matrices are simultaneously diagonalizable \cite{horn2012matrix}, that is, there exists an orthogonal matrix $\mb{Q}_x$ such that $\mb{Q}_x\mb{A}_{x,y}\mb{A}_{y,x}\mb{Q}^{\top}_x$ and $\mb{Q}_x\mb{A}_{x,z}\mb{A}_{z,x}\mb{Q}^{\top}_x$ are diagonal. This in turn implies the existence of $\mb{Q}_y$ and $\mb{Q}_z$ such that
$
\mb{Q}_y\mb{A}_{y,x}\mb{Q}^{\top}_x
$
and
$
\mb{Q}_z\mb{A}_{z,x}\mb{Q}^{\top}_x
$ are diagonal.
Therefore, after the transforms
\begin{align}
\mb{X}&\mapsto \bar{\mb{X}}:=\left(
             \begin{array}{cc}
               \mb{Q}_x & \mb{0} \\
               \mb{0} & \mb{I}_{n-r_x} \\
             \end{array}
           \right)
           {\mb\Sigma}_{\mb{X}}^{-1/2}
\mb{X},\\
\mb{Y}&\mapsto \bar{\mb{Y}}:=\left(
             \begin{array}{cc}
               \mb{Q}_y & \mb{0} \\
               \mb{0} & \mb{I}_{n-r_y} \\
             \end{array}
           \right)
           {\mb\Sigma}_{\mb{Y}}^{-1/2}
\mb{Y},\\
\mb{Z}&\mapsto \bar{\mb{Z}}:=\left(
             \begin{array}{cc}
               \mb{Q}_z & \mb{0} \\
               \mb{0} & \mb{I}_{n-r_z} \\
             \end{array}
           \right)
           {\mb\Sigma}_{\mb{Z}}^{-1/2}
\mb{Z},\\
\end{align}
the matrices $\bf \Sigma_{\bar{X}},\Sigma_{\bar{Y}},\Sigma_{\bar{Z}},\Sigma_{\bar{X}\bar{Y}}$ and $\bf \Sigma_{\bar{X}\bar{Z}}$ are diagonal.

Conversely, if the asserted linear transforms exist, then there must exist orthogonal matrices $\mb{Q}_y$ and $\mb{Q}_z$ such that
$
\mb{Q}_y\mb{A}_{y,x}\mb{Q}^{\top}_x
$
and
$
\mb{Q}_z\mb{A}_{z,x}\mb{Q}^{\top}_x
$ are diagonal. Hence $\mb{A}_{x,y}\mb{A}_{y,x}$ and $\mb{A}_{x,z}\mb{A}_{z,x}$ commute, and so do $\mb{G}$ and $\mb{H}$.

\section{Connection between Gaussian and Bernoulli Sources in Example~\ref{ex2}}\label{ex2_app}
Suppose $P_{\bf UVW}=\prod_{i=1}^L P_{U_iV_iW_i}$, and $U_i,V_i$ and $W_i$ are symmetric Bernoulli random variable such that
\begin{align}
1-2\mathbb{P}[U_i\neq V_i]=\rho_{XY},\quad 1-2\mathbb{P}[U_i\neq W_i]=\rho_{XZ}.
\end{align}
Define
\begin{align}
\bar{X}_L:=\frac{1}{L}\sum_{i=1}^L U_i,\quad \bar{Y}_L:=\frac{1}{L}\sum_{i=1}^L V_i,\quad \bar{Z}_L:=\frac{1}{L}\sum_{i=1}^L W_i,
\end{align}
where the additions are on $\mathbb{R}$.
Assuming without loss of generality that $X,Y$ and $Z$ have unit variances, then by central limit theorem $P_{\bar{X}_L\bar{Y}_L}$ and $P_{\bar{X}_L\bar{Z}_L}$ converge to $P_{XY}$ and $P_{XZ}$ as $L\to\infty$, hence we expect (without a formal proof here) that $\eta_Z(X;Y)=\lim_{L\to \infty}\eta_{\bar{Z}_L}(\bar{X}_L;\bar{Y}_L)$. Observe that
\begin{align}
\eta_{\bar{Z}_L}(\bar{X}_L;\bar{Y}_L)&=\eta_{\bf W}(\bar{X}_L;\bar{Y}_L)\label{e26}
\\
&\le\eta_{\bf W}({\bf U;V})\label{e27}
\\
&=\eta_{W_1}({U_1;V_1})\label{e28}
\end{align}
where \eqref{e26} is because $\bar{Z}_L$ is a sufficient statistic of $\mb{W}$ for $(\bar{X}_L,\bar{Y}_L)$;
\eqref{e27} is because processing $\mb{U}$ and $\mb{V}$ reduces key capacity;  and \eqref{e28} uses the tensorization property \eqref{eq16}. Then from \eqref{e23} we see $\eta_Z(X;Y)\le \frac{\rho_{XY}^2-\rho_{XZ}^2}{1-\rho_{XY}^2}$. Note that this central limit argument is similar to a celebrated proof of Gaussian hypercontractivity using Boolean hypercontractivity due to Leonard Gross \cite{gross1975logarithmic}, which illustrates the interesting connection between Gaussian and symmetric Bernoulli distributions.

\section{Proof of Theorem~\ref{cor1}}\label{app_cor1}
Recall the following facts from linear algebra (see for example \cite{tao}):
\begin{fact}\label{fact2}
If $\bf A$ and $\bf B$ are matrices of the same dimension, then $\bf AB^{\top}$ and $\bf B^{\top}A$ have the same nonzero eigenvalues.
\end{fact}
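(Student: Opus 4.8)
The plan is to prove the slightly stronger statement that $\mathbf{AB}^{\top}$ and $\mathbf{B}^{\top}\mathbf{A}$ share the same nonzero eigenvalues \emph{with identical algebraic multiplicities}; the set statement in the Fact is then immediate. Write $\mathbf{A},\mathbf{B}\in\mathbb{R}^{m\times n}$ and abbreviate $\mathbf{C}:=\mathbf{B}^{\top}\in\mathbb{R}^{n\times m}$, so that $\mathbf{AC}$ is $m\times m$ and $\mathbf{CA}$ is $n\times n$.

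The set-level inclusion is a one-line eigenvector manipulation: if $\lambda\neq 0$ is an eigenvalue of $\mathbf{AC}$ with eigenvector $v\neq 0$, set $w:=\mathbf{C}v$; then $w\neq 0$ (otherwise $\lambda v=\mathbf{AC}v=\mathbf{A}\mathbf{0}=\mathbf{0}$, contradicting $\lambda\neq0$, $v\neq0$), and $\mathbf{CA}w=\mathbf{C}(\mathbf{AC}v)=\lambda\mathbf{C}v=\lambda w$, so $\lambda$ is an eigenvalue of $\mathbf{CA}$. Exchanging the roles of $(\mathbf{A},\mathbf{C})$ and $(\mathbf{C},\mathbf{A})$ gives the reverse inclusion.

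To obtain the multiplicities I would compute $\det\mathbf{M}$, where $\mathbf{M}:=\left(\begin{smallmatrix}t\mathbf{I}_m & \mathbf{A}\\ \mathbf{C}&\mathbf{I}_n\end{smallmatrix}\right)$, in two ways using the Schur complement: eliminating the $(2,2)$-block gives $\det(t\mathbf{I}_m-\mathbf{AC})$, while for $t\neq 0$ eliminating the $(1,1)$-block gives $t^m\det(\mathbf{I}_n-\tfrac{1}{t}\mathbf{CA})=t^{m-n}\det(t\mathbf{I}_n-\mathbf{CA})$. Hence $t^n\det(t\mathbf{I}_m-\mathbf{AC})=t^m\det(t\mathbf{I}_n-\mathbf{CA})$ for all $t\neq 0$, and therefore identically in $t$. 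Comparing the roots of the two characteristic polynomials away from $t=0$ yields the equality of nonzero eigenvalues with multiplicity. There is no genuine obstacle here; the only points requiring a little care are the bookkeeping of the zero eigenvalue (the two products live in different dimensions, so their characteristic polynomials differ precisely by a factor $t^{|m-n|}$) and the passage from ``equal for all $t\neq0$'' to ``equal as polynomials,'' which is automatic since both sides become polynomials after multiplying through by the appropriate power of $t$.
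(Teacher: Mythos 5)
Your proof is correct, and in fact proves something slightly stronger (equality of nonzero eigenvalues with algebraic multiplicity) than the stated Fact. The paper itself cites this as a known fact from linear algebra and gives no proof, so there is no internal argument to compare against; but it is worth noting that where the paper applies Fact~\ref{fact2} (in Appendix~\ref{app_cor1}, justifying~\eqref{eq227}) it parenthetically mentions the Sylvester determinant identity as an alternative, which is precisely the Schur-complement determinant identity $t^n\det(t\mathbf{I}_m-\mathbf{AC})=t^m\det(t\mathbf{I}_n-\mathbf{CA})$ that underlies the second half of your argument. Both halves of your proposal check out: in the eigenvector transfer, $w=\mathbf{C}v\ne 0$ for $\lambda\ne 0$ is exactly the point at which nonzeroness of the eigenvalue is used, and the block-determinant computation (eliminating the $(2,2)$-block versus the $(1,1)$-block) correctly yields the two expansions, with the passage from ``all $t\ne 0$'' to ``all $t$'' automatic after clearing denominators since both sides are polynomials. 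There is no gap.
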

\begin{fact}\label{fact3}
If $\bf A$ is a square matrix, then
\begin{align}
|{\bf I}+\epsilon {\bf A}|=\mb{I}+\epsilon\,{\rm tr}(\mb{A})+O(\epsilon^2).
\end{align}
\end{fact}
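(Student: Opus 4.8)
The plan is to expand the determinant $|{\bf I}+\epsilon{\bf A}|$ directly by the Leibniz formula and keep only the terms that are constant or first order in $\epsilon$. Writing $n$ for the size of ${\bf A}$ and $S_n$ for the symmetric group on $\{1,\dots,n\}$,
\begin{align}
|{\bf I}+\epsilon{\bf A}|=\sum_{\sigma\in S_n}{\rm sgn}(\sigma)\prod_{i=1}^n\bigl({\bf I}+\epsilon{\bf A}\bigr)_{i,\sigma(i)}.
\end{align}
I would then isolate the identity permutation, which carries ${\rm sgn}({\rm id})=1$ and all diagonal entries, and argue that every other permutation contributes only to the $O(\epsilon^2)$ remainder.

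The key (and essentially only) step is the bookkeeping in this sum: if $\sigma\neq{\rm id}$ then $\sigma$ moves at least two points, i.e.\ there are at least two indices $i$ with $\sigma(i)\neq i$, and for each such $i$ the factor $\bigl({\bf I}+\epsilon{\bf A}\bigr)_{i,\sigma(i)}=\epsilon{\bf A}_{i,\sigma(i)}$ contributes an explicit $\epsilon$, so the corresponding product over $i$ is divisible by $\epsilon^2$. Hence only the identity term survives at orders $\epsilon^0$ and $\epsilon^1$, giving
\begin{align}
|{\bf I}+\epsilon{\bf A}|
&=\prod_{i=1}^n\bigl(1+\epsilon{\bf A}_{ii}\bigr)+O(\epsilon^2)\nonumber\\
&=1+\epsilon\sum_{i=1}^n{\bf A}_{ii}+O(\epsilon^2)\nonumber\\
&=1+\epsilon\,{\rm tr}({\bf A})+O(\epsilon^2),
\end{align}
which is the claimed expansion; since $|{\bf I}+\epsilon{\bf A}|$ is a polynomial in $\epsilon$, the term $O(\epsilon^2)$ is a genuine Taylor remainder whose implied constant may depend on ${\bf A}$, which is harmless because ${\bf A}$ is fixed in every application.

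If a coordinate-free derivation is preferred, one can instead pass to the Schur (or Jordan) form of ${\bf A}$ over $\mathbb{C}$: letting $\lambda_1,\dots,\lambda_n$ be the eigenvalues of ${\bf A}$ counted with multiplicity, the eigenvalues of ${\bf I}+\epsilon{\bf A}$ are $1+\epsilon\lambda_j$, so $|{\bf I}+\epsilon{\bf A}|=\prod_{j=1}^n(1+\epsilon\lambda_j)=1+\epsilon\sum_{j=1}^n\lambda_j+O(\epsilon^2)$ and $\sum_j\lambda_j={\rm tr}({\bf A})$; alternatively, one may invoke Jacobi's formula $\frac{d}{d\epsilon}|{\bf I}+\epsilon{\bf A}|=|{\bf I}+\epsilon{\bf A}|\,{\rm tr}\bigl(({\bf I}+\epsilon{\bf A})^{-1}{\bf A}\bigr)$ and evaluate at $\epsilon=0$ together with a first-order Taylor expansion. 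There is no real obstacle here — the statement is a one-line computation — so the only thing worth stating carefully is the combinatorial fact used above, that a nontrivial permutation must displace at least two indices.
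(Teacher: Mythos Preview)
Your proof is correct. The paper does not actually supply a proof of this fact: it is stated without argument as one of two standard linear algebra facts, with a reference to \cite{tao}. Your Leibniz-formula expansion (and the alternative eigenvalue or Jacobi-formula arguments you sketch) are all valid and standard ways to establish this identity, so there is nothing to compare against.
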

Now we are in the position of proving Theorem \ref{cor1}. Let $s:=\lambda_{\max}((\mb{G}-\mb{H})(\mb{I}-\mb{H})^{-1})$. We first show that $s^*_{\mb{Z}}(\mb{X};\mb{Y})\ge s^+$. Since $s^*_{\mb{Z}}(\mb{X};\mb{Y})$ is nonnegative we only need to focus on the case of $s\ge0$. By restricting $Q_{V\mb{X}}$ in \eqref{e18} to have the marginal distribution $P_{\mb{X}}$ on $\mathcal{X}$, we find
\begin{align}\label{e215}
s^*_{\mb{Z}}(\mb{X};\mb{Y})\ge \sup_{P_{V|\mb{X}}}\frac{I(V;\mb{Y})-I(V;\mb{Z})}{I(V;\mb{X})-I(V;\mb{Z})}.
\end{align}
We remark that using Fact \ref{fact1} one can actually show that \eqref{e215} holds with equality, although we shall not use the ``$\le$'' direction.

Let
\begin{align}
(\mb{I}-\mb{H})^{-\frac{1}{2}}(\mb{G}-\mb{H})^{\frac{1}{2}}(\mb{I}-\mb{H})^{-\frac{1}{2}}
=\mb{Q}\mb{\Lambda}\mb{Q}^{\top}
\end{align}
be the eigendecomposition of $(\mb{I}-\mb{H})^{-\frac{1}{2}}(\mb{G}-\mb{H})^{\frac{1}{2}}(\mb{I}-\mb{H})^{-\frac{1}{2}}$, where $\mb{Q}$ is an orthogonal matrix and $\mb{\Lambda}$ is a diagonal matrix. Here we can take the square root of $\mb{I}-\mb{H}$ because it is a positive-semidefinite matrix according to Remark~\ref{rem8}. By Fact \ref{fact2}, $s$ is the largest eigenvalue of $({\bf I-H})^{-\frac{1}{2}}({\bf G-H})({\bf I-H})^{-\frac{1}{2}}$, hence we can assume without loss of generality that $\Lambda_{1,1}=s$. For each $\epsilon>0$ define the $L\times L$ matrices
\begin{align}
\mb{D}_{\epsilon}=\left(
                    \begin{array}{cc}
                      \epsilon & \mb{0} \\
                      \mb{0} & \mb{0} \\
                    \end{array}
                  \right)
\end{align}
and
\begin{align}
{\mb\Delta}_{\epsilon}=(\mb{I}-\mb{H})^{-\frac{1}{2}}\mb{Q}\mb{D}_{\epsilon}\mb{Q}^{\top}(\mb{I}-\mb{H})^{-\frac{1}{2}}.
\end{align}
Choose $\mb{V}_{\epsilon}$ to be a random $L$-vector such that $\mb{V}_{\epsilon}$ and $\mb{X}$ are jointly Gaussian, $\mb{V}_{\epsilon}-\mb{X}-(\mb{Y},\mb{Z})$, and
\begin{align}\label{eq219}
\mb{\Sigma}_{\mb{X}|\mb{V}_{\epsilon}}=\mb{\Sigma}_{\mb{X}}^{\frac{1}{2}}(\mb{I}-\mb{\Delta}_{\epsilon})
\mb{\Sigma}_{\mb{X}}^{\frac{1}{2}}.
\end{align}
This determines the joint distribution (up to a shift and a linear transform of $\mb{V}_{\epsilon}$, which are irrelevant), since the unconditional covariance of $\mb{X}$ is given in the problem statement.
Then, observe that (see \eqref{e_237}-\eqref{e237}):
\begin{figure*}[b]
\hrulefill
\begin{align}
{\mb \Sigma}_{\mb{Y}|\mb{V}_{\epsilon}}
&=\mb{\Sigma_{\mb Y|X}}
+\mathbb{E}\left[(\mathbb{E}[\mb{Y|X}]-\mathbb{E}[\mb{Y}|\mb{V}_{\epsilon}])
(\mathbb{E}[\mb{Y}|\mb{X}]-\mathbb{E}[\mb{Y}|\mb{V}_{\epsilon}])^{\top}|\mb{V}_{\epsilon}\right]
\label{e_237}
\\
&={\bf\Sigma_{\bf Y|X}}
+\mathbb{E}\left[{\bf\Sigma_{\bf YX}}{\bf\Sigma_{\bf X}}^{-1}
({\bf X}-\mathbb{E}[{\bf X}|{\bf V}_{\epsilon}])
({\bf X}-\mathbb{E}[{\bf X}|{\bf V}_{\epsilon}])^{\top}
{\bf\Sigma_{X}}^{-1}{\bf\Sigma_{\bf XY}}|{\bf V}_{\epsilon}
\right]
\\
&={\bf\Sigma_{Y|X}}+{\bf \Sigma_{\bf YX}}{\bf \Sigma_{\bf X}}^{-1}{\bf \Sigma}_{\mb{X}|\mb{V}_{\epsilon}}
{\bf \Sigma_{\bf X}}^{-1}{\bf \Sigma_{\bf XY}}
\\
&={\bf\Sigma_{Y}}-{\bf\Sigma_{YX}\Sigma_{\mb X}}^{-\frac{1}{2}}{\mb \Delta}_{\epsilon}{\mb\Sigma_{X}}^{-\frac{1}{2}}{\mb\Sigma_{\mb{XY}}}\label{e237}
\end{align}
\end{figure*}
where the last step uses \eqref{eq219}. Hence,
\begin{align}
&I(\mb{V}_{\epsilon};\mb{Y})
\nonumber\\
&=
\frac{1}{2}\log\frac{|\bf \Sigma_Y|}{|{\bf \Sigma}_{\mb{Y}|\mb{V}_{\epsilon}}|}
\\
&=-\frac{1}{2}\log|\mb{I}-{\bf\Sigma_Y}^{-\frac{1}{2}}{\bf\Sigma_{YX}\Sigma_{\mb X}}^{-\frac{1}{2}}{\mb \Delta}_{\epsilon}{\mb\Sigma_{X}}^{-\frac{1}{2}}{\mb\Sigma_{\mb{XY}}}{\bf\Sigma_Y}^{-\frac{1}{2}}|
\\
&=-\frac{1}{2}\log|{\bf I-G\Delta}_{\epsilon}|\label{eq227}
\\
&=\frac{\log e}{2}{\rm tr}({\bf G\Delta}_{\epsilon})+O(\epsilon^2)\label{eq228}
\end{align}
where \eqref{eq227} uses Fact \ref{fact2} (or the Sylvester determinant identity) and \eqref{eq228} uses Fact \ref{fact3}. By the same token, we have shown
\begin{align}
I(\mb{V}_{\epsilon};\mb{X})=\frac{\log e}{2}{\rm tr}({\bf \Delta}_{\epsilon})+O(\epsilon^2)
\end{align}
and
\begin{align}
I(\mb{V}_{\epsilon};\mb{Z})=\frac{\log e}{2}{\rm tr}({\bf H\Delta}_{\epsilon})+O(\epsilon^2).
\end{align}
Therefore,
\begin{align}
&\lim_{\epsilon\downarrow0}\frac{I(\mb{V}_{\epsilon};\mb{Y})-I(\mb{V}_{\epsilon};\mb{Z})}
{I(\mb{V}_{\epsilon};\mb{X})-I(\mb{V}_{\epsilon};\mb{Z})}
\nonumber\\
&=\lim_{\epsilon\downarrow0}\frac{{\rm tr}({\bf (G-H)\Delta}_{\epsilon})}{{\rm tr}({\bf (I-H)\Delta}_{\epsilon})}
\\
&=\lim_{\epsilon\downarrow0}\frac{\tr\left(({\bf G-H})({\bf I-H})^{-\frac{1}{2}}\mb{Q}
{\bf D}_{\epsilon}\mb{Q}^{\top}({\bf I-H})^{-\frac{1}{2}}\right)}
{\tr(\mb{D}_{\epsilon})}
\\
&=\lim_{\epsilon\downarrow0}\frac{\tr(\mb{\Lambda}\mb{D}_{\epsilon})}{\tr(\mb{D}_{\epsilon})}
\\
&=\Lambda_{1,1}
\\
&=s,
\end{align}
Hence by \eqref{e215} we have shown that $s^*_{\mb{Z}}(\mb{X};\mb{Y})\ge s=s^+$.

Conversely, to show $s^*_{\mb{Z}}(\mb{X};\mb{Y})\le s^+$, we may assume without loss of generality that $s<1$ since Remark~\ref{rem8} implies that $s\le1$ and when $s=1$ the claim is trivially true. We have remarked that $s$ is the largest eigenvalue of $({\bf I-H})^{-\frac{1}{2}}({\bf G-H})({\bf I-H})^{-\frac{1}{2}}$, hence
\begin{align}
({\bf I-H})^{-\frac{1}{2}}({\bf G-H})({\bf I-H})^{-\frac{1}{2}}\preceq s\mb{I},
\end{align}
which implies
\begin{align}\label{eq233}
{\bf I-G}\succeq (1-s)({\bf I-H}).
\end{align}
Now define $\hat{\mb{H}}:=\mb{I}-\frac{1}{1-s}(\bf I-G)$, then
\begin{align}\label{eq234}
{\bf I-G}=(1-s)({\bf I-\hat{H}}).
\end{align}
From \eqref{eq233} and \eqref{eq234} it is clear that
\begin{align}\label{eq235}
\bf \hat{H}\preceq H.
\end{align}
By \eqref{eq235}, we can find a Gaussian $L$-vector $\mb{W}$ independent of $(\bf X,Y,Z)$ and define
\begin{align}
\bf\hat{Z}=Z+W
\end{align}
such that
\begin{align}
\mb{\hat{H}}&=\mb{\Sigma}^{-1/2}_{\mb{X}}\mb{\Sigma}_{\mb{X}\mb{Z}}
\mb{\Sigma}^{-1}_{\mb{\hat{Z}}}\mb{\Sigma}_{\mb{Z}\mb{X}}\mb{\Sigma}^{-1/2}_{\mb{X}}.
\end{align}
Since $\bf X\perp W$, we see that
\begin{align}
\mb{\hat{H}}&=\mb{\Sigma}^{-1/2}_{\mb{X}}\mb{\Sigma}_{\mb{X}\mb{\hat{Z}}}
\mb{\Sigma}^{-1}_{\mb{\hat{Z}}}\mb{\Sigma}_{\mb{\hat{Z}}\mb{X}}\mb{\Sigma}^{-1/2}_{\mb{X}},
\end{align}
which agrees with the definition \eqref{defn}, i.e. $\mb{\hat{H}}$ is the corresponding matrix for the source $\bf (X,Y,\hat{Z})$.
A noisier observation for the eavesdropper is advantageous for key generation, hence $\eta_{\mb{\hat{Z}}}(\mb{X};\mb{Y})\ge\eta_{\mb{Z}}(\mb{X};\mb{Y})$, and so $s^*_{\mb{\hat{Z}}}(\mb{X};\mb{Y})\ge s^*_{\mb{Z}}(\mb{X};\mb{Y})$. Moreover from \eqref{eq234} we see that $\hat{\mb{H}}$ commutes with $\mb{G}$, so that we can apply Lemma \ref{lem3} to find invertible linear transforms $\mathbf{X}\mapsto \bar{\mathbf{X}}$, $\mathbf{Y}\mapsto \bar{\mathbf{Y}}$, $\mathbf{\hat{Z}}\mapsto \bar{\mathbf{Z}}$ such that $\bf (\bar{X},\bar{Y},\bar{Z})$ is a product source in the sense of \eqref{assump1} and \eqref{assump2}. Furthermore, from the proof of Lemma \ref{lem3} one sees that
\begin{align}
1-\rho_{\bar{X}_i\bar{Y}_i}=(1-s)(1-\rho_{\bar{X}_i\bar{Z}_i})
\end{align}
for $i=1,\dots,L$. Hence by \eqref{eq16},
\begin{align}
s^*_{\mb{\hat{Z}}}(\mb{X};\mb{Y})
&=s^*_{\mb{\bar{Z}}}(\mb{\bar{X}};\mb{\bar{Y}})
\\
&=s^*_{\bar{Z}_i}(\bar{X}_i;\bar{Y}_i)
\\
&=\left(\frac{\rho_{\bar{X}_i\bar{Y}_i}-\rho_{\bar{X}_i\bar{Z}_i}}{1-\rho_{\bar{X}_i\bar{Z}_i}}\right)^+
\\
&=s^+,
\end{align}
and we can conclude that
\begin{align}
s^*_{\mb{Z}}(\mb{X};\mb{Y})\le s^*_{\mb{\hat{Z}}}(\mb{X};\mb{Y})\le s^+.
\end{align}

In summary we have shown that $s^*_{\mb{Z}}(\mb{X};\mb{Y})=s^+$, or equivalently
\begin{align}
\eta_{\mb{Z}}(\mb{X};\mb{Y})=\frac{s^+}{1-s^+}=\left(\frac{s}{1-s}\right)^+=\lambda^+_{\max}((\mb{G}-\mb{H})(\mb{I}-\mb{G})^{-1}),
\end{align}
 as desired.

\section{Proof of Lemma \ref{lemma3}}\label{app_lem3}
From Jensen's inequality we have
\begin{align}
\ln\mathbb{E}e^{t\eta}\ge\mathbb{E}\ln e^{t\eta}=t\mathbb{E}\eta.
\end{align}
The proof of the other part of the bound in (\ref{el1}) is essentially based on uniform integrality of $\{e^{t\eta}\}_{\rho\in[0,1-\delta]}$. Without loss of generality we can assume that $U,X$ are zero mean with unit variance. Also it suffices to consider only the case of $\rho\ge0$ since otherwise the correlation coefficient between $-U$ and $X$ is $-\rho>0$ but the distribution of $\imath_{-U;X}(-U;X)$ is the same as that of $\eta$. Now $N:=\frac{X-\rho U}{\sqrt{1-\rho^2}}$ is zero mean, with unit variance, and independent of $U$. Note that
\begin{align}
|\eta|&=\left|\frac{1}{2}\log\frac{1}{1-\rho^2}-\frac{1}{2}\log e\left(\frac{\rho^2U^2+\rho^2X^2}{1-\rho^2}-\frac{2\rho UX}{1-\rho^2}\right)\right|
\\
&\le \frac{1}{2}\log\frac{1}{1-\rho^2}+\frac{1}{2}\log e\cdot\frac{\rho^2U^2+\rho^2X^2+\rho(U^2+X^2)}{1-\rho^2}
\\
&\le \frac{1}{2}\log\frac{1}{1-\rho^2}+\frac{1}{2}\log e\cdot\frac{\rho(U^2+X^2)}{1-\rho}
\\
&\le \frac{1}{2}\log\frac{1}{1-\rho^2}+\frac{\rho\log e}{2\delta}(U^2+X^2)
\\
&= \frac{1}{2}\log\frac{1}{1-\rho^2}+\frac{\rho\log e}{2\delta}[U^2+(\sqrt{1-\rho^2}N+\rho U)^2]
\\
&\le \frac{1}{2}\log\frac{1}{1-\rho^2}+\frac{3\rho\log e}{2\delta}(U^2+N^2)\label{eq136}
\\
&\le \frac{1}{2}\log\frac{1}{1-(1-\delta)^2}+\frac{3\log e}{2\delta}(U^2+N^2).\label{e136}
\end{align}
It is easy to show that for any $\lambda<\frac{1}{4}$, $\mathbb{E}e^{2\lambda U^2}=\mathbb{E}e^{2\lambda N^2}$ is finite, and hence
\begin{align}
\mathbb{E}[e^{\lambda U^2}e^{\lambda N^2}]
&\le \sqrt{\mathbb{E}e^{2\lambda U^2}\mathbb{E}e^{2\lambda N^2}}
\\
&<\infty.\label{e138}
\end{align}
Let $\xi$ be the random variable in (\ref{e136}), whose distribution does not depend on $\rho$. By (\ref{e138}), $\mathbb{E}e^{t\xi}<\infty$ for all $0<t<\frac{\delta}{6\log e}$. Now for each $\delta>0$,
\begin{align}
&\lim_{\Delta\to \infty}\sup_{\rho\in[0,1-\delta],t\in(0,\frac{\delta}{7\log e}]}\mathbb{E}1_{|\eta|\ge \Delta}e^{t\eta}
\nonumber\\
&\le
\lim_{\Delta\to \infty}\sup_{\rho\in[0,1-\delta],t\in(0,\frac{\delta}{7\log e}]}\mathbb{E}1_{|\eta|\ge \Delta}e^{t|\eta|}
\\
&\le
\lim_{\Delta\to \infty}\sup_{t\in(0,\frac{\delta}{7\log e}]}\mathbb{E}1_{\xi\ge \Delta}e^{t\xi}
\\
&\le \lim_{\Delta\to \infty}\mathbb{E}1_{\xi\ge \Delta}\exp\left(\frac{\delta\xi}{7\log e}\right)
\\
&=0
\end{align}
where the last step follows from bounded convergence theorem (or dominated convergence theorem). Then there exists $\Delta_0>0$ large enough such that
\begin{align}
\sup_{\rho\in[0,1-\delta],t\in(0,\frac{\delta}{7\log e}]}\mathbb{E}1_{|\eta|\ge \Delta}e^{t\eta}&<\frac{\epsilon}{4},
\\
\mathbb{P}[\xi<\Delta_0]&>\frac{1}{2}.
\end{align}
for $\Delta\ge\Delta_0$. Now observe that from (\ref{eq136}), there exists a r.v. $\zeta=C_1(U^2+N^2)+C_2$ such that $|\eta|<\rho\zeta$ whenever $\rho<\frac{1}{2}$, where $C_1,C_2$ are constants depending only on $\delta$. Then,
\begin{align}
\sup_{t<\frac{1}{2}}\sup_{\rho<\delta_0}\frac{\ln\mathbb{E}e^{t\eta}}
{t}&\le
\sup_{t<\frac{1}{2}}\sup_{\rho<\delta_0}\frac{\ln\mathbb{E}e^{\rho t\zeta}}{t}
\\
&=\sup_{t<\frac{1}{2}}\frac{\ln\mathbb{E}e^{\delta_0 t\zeta}}{t}
\\
&=2\ln\mathbb{E}e^{\frac{\delta_0\zeta}{2}}\label{e148}
\\
&\to 0,\quad \delta_0\to 0,
\end{align}
where (\ref{e148}) follows from convexity of the cumulant generating function. Thus, we can pick $\delta_0$ small enough such that
\begin{align}\label{e150}
\sup_{t<\frac{1}{2}}\sup_{\rho<\delta_0}\frac{\ln\mathbb{E}e^{t\eta}}
{t}\le\epsilon.
\end{align}
On the other hand, for $\rho\ge\delta_0$ we have
\begin{align}
&\sup_{\delta_0\le\rho<1-\delta}\frac{\ln\mathbb{E}e^{t\eta}}{t\mathbb{E}\eta}
\nonumber\\
&=\sup_{\delta_0\le\rho<1-\delta}\frac{\ln(\mathbb{E}1_{|\eta|\ge \Delta_0}e^{t\eta}+\mathbb{E}1_{|\eta|<\Delta_0}e^{t\eta})}{t\mathbb{E}\eta}
\\
&\le \sup_{\delta_0\le\rho<1-\delta}\frac{\ln\mathbb{E}1_{|\eta|< \Delta_0}e^{t\eta}}{t\mathbb{E}\eta}
\left(1+\frac{\mathbb{E}1_{|\eta|\ge \Delta_0}e^{t\eta}}{\mathbb{E}1_{|\eta|< \Delta_0}e^{t\eta}}\right)
\\
&\le \sup_{\delta_0\le\rho<1-\delta}\left\{\frac{\ln[(\mathbb{E}t\eta+1)\cdot\beta_t
]}{t\mathbb{E}\eta}\left(1+\frac{\epsilon/4}{e^{-t\Delta_0}\mathbb{
P}[|\eta|<\Delta_0]}\right)\right\}
\\
&\le\sup_{\delta_0\le\rho<1-\delta}\left\{
\frac{\mathbb{E}t\eta+\ln\beta_t}{t\mathbb{E}\eta}
(1+\frac{\epsilon}{2}e^{t\Delta_0})\right\}
\\
&\le\sup_{\delta_0\le\rho<1-\delta}\left\{
\left(1+\frac{\ln\beta_t}{t\cdot\frac{1}{2}\log\frac{1}{1-\delta_0^2}}
(1+\frac{\epsilon}{2}e^{t\Delta_0})\right)\right\}
\\
&\to 1+\epsilon/2,\quad t\to0.\label{e156}
\end{align}
where we have defined $\beta_t:=\max\{\frac{e^{t\Delta_0}}{1+t\Delta_0},
\frac{e^{-t\Delta_0}}{1-t\Delta_0}\}$, and used the fact that $\beta_t=1+O(t^2)$ when $t\to 0$. Finally, in view of (\ref{e150}) and (\ref{e156}), there exist $t<\frac{1}{2}$ small enough such that for each $\rho\in[0,1-\delta]$, either $\ln\mathbb{E}e^{t\eta}<t\epsilon$ or $\ln\mathbb{E}e^{t\eta}<(1+\epsilon)\mathbb{E}t\eta$ hold.

\section{Proof of Lemma \ref{lem4}}\label{appf}
\begin{description}
  \item[(a)] The asymptotic equivalences have been remarked earlier in \eqref{e116}, so we only have to bound the eigenvalues. From \cite[Lemma~4.1]{gray2006toeplitz} we have
\begin{align}
0&<\min_{\omega\in[0,2\pi)}S_X(\omega)
\nonumber\\
&\le\lambda_{\min}({\bf\Sigma_X})
\nonumber\\
&\le\lambda_{\max}({\bf\Sigma_X})
\nonumber\\
&\le \max_{\omega\in[0,2\pi)}S_X(\omega);
\end{align}
from \eqref{def2} the eigenvalues of $\bf \Sigma_{\tilde{X}}$ are $\{S_X(\frac{2\pi k}{n})\}_{k=1}^n$, which are also bounded between $\min_{\omega\in[0,2\pi)}S_X(\omega)$ and $\max_{\omega\in[0,2\pi)}S_X(\omega)$.
Similarly, the eigenvalues of $\bf\Sigma_{\tilde{Y}}$ and $\bf\Sigma_Y$ are bounded between $\min_{\omega\in[0,2\pi)}S_Y(\omega)$ and $\max_{\omega\in[0,2\pi)}S_Y(\omega)$; and the eigenvalues of $\bf\Sigma_{\tilde{Z}}$ and $\bf\Sigma_Z$ are bounded between $\min_{\omega\in[0,2\pi)}S_Z(\omega)$ and $\max_{\omega\in[0,2\pi)}S_Z(\omega)$.

  \item[(b)] We first show that $\bf \Sigma_{X|U}\sim\Sigma_{\tilde{X}|\hat{U}}$. Let $\mb{R}$ be the diagonal matrix whose $(i,i)$ entry is $\rho^{(n)}_i$. Clearly both $\bf\Sigma_{X|U}$ and $\bf\Sigma_{\tilde{X}|\hat{U}}$ depend only on $\bf R$ and $\bf\Sigma_{\hat{X}}$, and do not depend on the scaling of $\bf \hat{U}$. However, to compute $\bf\Sigma_{\tilde{X}|\hat{U}}$, it is convenient to specify $P_{\bf\hat{U}|\hat{X}}$ via the following random transformation:
\begin{align}
{\bf\hat{U}}=({\bf I-R}^2)^{\frac{1}{2}}{\bf W+R\hat{X}},
\end{align}
where $\bf W$ is a zero mean Gaussian vector with covariance matrix $\bf\Sigma_{\hat{X}}$ and independent of $\bf\hat{X}$. Then the conditional covariance matrices can be expressed as
\begin{align}
{\bf\Sigma_{\tilde{X}|\hat{U}}}&=
{\bf\Sigma_{\tilde{X}}-\Sigma_{\tilde{X}\hat{U}}\Sigma_{\hat{U}}}^{-1}{\bf\Sigma_{\hat{U}\tilde{X}}}
\\
&={\bf\Sigma_{\tilde{X}}-\Sigma_{\tilde{X}}QR[(I-R}^2)^{\frac{1}{2}}{\bf\Sigma_{\hat{X}}(I-R}^2)^{\frac{1}{2}}
\nonumber\\
&\quad+{\bf RQ^{\top}\Sigma_{\tilde{X}}QR}]^{-1}\bf RQ^{\top}\Sigma_{\tilde{X}}.
\end{align}
and
\begin{align}
{\bf\Sigma_{X|U}}&=
{\bf\Sigma_{X}-\Sigma_{XU}\Sigma_U}^{-1}{\bf\Sigma_{UX}}
\\
&={\bf\Sigma_{X}-\Sigma_{X}QR[(I-R}^2)^{\frac{1}{2}}{\bf\Sigma_{\hat{X}}(I-R}^2)^{\frac{1}{2}}
\nonumber\\
&\quad+{\bf RQ^{\top}\Sigma_{X}QR}]^{-1}\bf RQ^{\top}\Sigma_{X}.
\end{align}
It is easy to see that the smallest eigenvalue of ${\bf(I-R}^2)^{\frac{1}{2}}{\bf\Sigma_{\hat{X}}(I-R}^2)^{\frac{1}{2}}$ is lower bounded by $\min_{\omega\in[0,2\pi)}S_X(\omega)(1-\max_{\omega\in[0,2\pi)}\rho_{UX}^2(\omega))$ which is positive due to \eqref{e132}.
Therefore, Fact~\ref{fact4} and Part (a) imply the asymptotic equivalence $\bf \Sigma_{X|U}\sim\Sigma_{\tilde{X}|\hat{U}}$.

Next, from the Markov chains $\bf U-X-Y$ and $\bf \hat{U}-\tilde{X}-\tilde{Y}$, we can show that (similar to the derivations in \eqref{e237})
\begin{align}
{\bf\Sigma_{Y|U}}
&={\bf\Sigma_{Y|X}+\Sigma_{YX}\Sigma_{X}}^{-1}{\bf\Sigma_{X|U}\Sigma_{X}}^{-1}{\bf\Sigma_{XY}}
\nonumber\\
&={\bf\Sigma_Y}-{\bf\Sigma_{YX}\Sigma_X}^{-1}{\bf\Sigma_{XY}}
\nonumber\\
&\quad+{\bf\Sigma_{YX}\Sigma_{X}}^{-1}{\bf\Sigma_{X|U}\Sigma_{X}}^{-1}{\bf\Sigma_{XY}}\label{e284}
\end{align}
and
\begin{align}
{\bf\Sigma_{\tilde{Y}|\hat{U}}}
&={\bf\Sigma_{\tilde{Y}}}-{\bf\Sigma_{\tilde{Y}\tilde{X}}\Sigma_{\tilde{X}}}^{-1}{\bf\Sigma_{\tilde{X}\tilde{Y}}}
\nonumber\\
&\quad+{\bf\Sigma_{\tilde{Y}\tilde{X}}\Sigma_{\tilde{X}}}^{-1}{\bf\Sigma_{\tilde{X}|\hat{U}}
\Sigma_{\tilde{X}}}^{-1}{\bf\Sigma_{\tilde{X}\tilde{Y}}}.
\end{align}
Therefore \eqref{e116}, $\bf\Sigma_{X|U}\sim\Sigma_{\tilde{X}|\hat{U}}$, and Part (a) immediately establish the relation ${\bf\Sigma_{Y|U}}\sim{\bf\Sigma_{\tilde{Y}|\hat{U}}}$.

Note that \eqref{e284} can be written as
\begin{align}
{\bf\Sigma_{Y|U}}
={\bf\Sigma_Y}^{\frac{1}{2}}
[{\bf I}-{\bf A}({\bf I-\Sigma_X}^{-\frac{1}{2}}{\bf \Sigma_{X|U}\Sigma_X}^{-\frac{1}{2}}){\bf A}^{\top}]{\bf \Sigma_Y}^{\frac{1}{2}},
\end{align}
where we have defined ${\bf A}={\bf \Sigma_Y}^{-\frac{1}{2}}{\bf\Sigma_{YX}}{\bf\Sigma_X}^{-\frac{1}{2}}$.
From the result of Part (a) we see that
\begin{align}
\lambda_{\max}({\bf I}-{\bf \Sigma_X}^{-\frac{1}{2}}{\bf\Sigma_{X|U}}{\bf\Sigma_X}^{-\frac{1}{2}})<1-\delta
\end{align}
for some $\delta>0$ which is independent of $n$. However the positive-semidefiniteness of the covariance matrix of $\bf(X^{\top},Y^{\top})$ implies that the largest singular value $\sigma_{\max}({\bf A})\le 1$, which in turn gives
\begin{align}
\lambda_{\max}({\bf A}({\bf I}-{\bf \Sigma_X}^{-\frac{1}{2}}{\bf\Sigma_{X|U}}{\bf\Sigma_X}^{-\frac{1}{2}}){\bf A}^{\top})<1-\delta.
\end{align}
Therefore we have the uniform lower bound
\begin{align}
&\lambda_{\min}({\bf\Sigma_{Y|U}})
\nonumber\\
&\ge\min_{\omega\in[0,2\pi)}S_Y(\omega)
\\
&\quad(1-\lambda_{\max}({\bf A}({\bf I}-{\bf \Sigma_X}^{-\frac{1}{2}}{\bf\Sigma_{X|U}}{\bf\Sigma_X}^{-\frac{1}{2}}){\bf A}^{\top}))
\\
&>\delta\min_{\omega\in[0,2\pi)}S_Y(\omega),\quad\forall n>0.
\end{align}
A similar uniform lower bound can be obtained for $\bf\Sigma_{\tilde{Y}|\hat{U}}$.
The relation ${\bf\Sigma_{Z|U}}\sim{\bf\Sigma_{\tilde{Z}|\hat{U}}}$ and the uniform lower boundedness of their eigenvalues can be shown in the exactly same way since the roles of $\mathbb{Y}$ and $\mathbb{Z}$ are equal for this problem.
\end{description}

\section{Converse of Theorem \ref{thm5}}\label{app_conv}
The first step towards the converse proof is to bound the key rate and the transmission rate with multi-letter expressions. This part is similar to the initial steps in the converse proof of key capacity of memoryless sources, c.f. \cite{csiszar2000common}.

Consider
\begin{align}
\log|\mathcal{K}|&=H(K|W,Z^n)+\nu_n
\\
&\le H(K)+\nu_n\label{e254}
\\
&\le H(K)-H(K|Y^n,W)+n\gamma_n+\nu_n\label{e152}
\\
&= I(K;Y^n,W)+n\gamma_n+\nu_n
\\
&\le I(K;Y^n,W)-I(K;Z^n,W)+n\gamma_n+2\nu_n\label{e154}
\\
&= I(K;Y^n|W)-I(K;Z^n|W)+n\gamma_n+2\nu_n,\label{e155}
\end{align}
where \eqref{e254} and \eqref{e154} are from the definition of $\nu_n$ and (\ref{e152}) is from Fano's inequality, with $\gamma_n:=\frac{1}{n}[\epsilon_n\log|\mathcal{K}_1|+h(\epsilon_n)]$.

As for the transmission rate, note that
\begin{align}
\log|\mathcal{W}|&\ge H(W)
\\
&\ge H(W|Y^n)-H(W,K|X^n)
\\
&\ge H(W|Y^n)+H(K|W,Y^n)-n\gamma_n-H(W,K|X^n)\label{e158}
\\
&= H(K,W|Y^n)-n\gamma_n-H(W,K|X^n)
\\
&= I(K,W;X^n)-I(K,W;Y^n)-n\gamma_n,\label{e162}
\end{align}
where (\ref{e158}) used Fano's inequality.

Now suppose $(R,r)$ is achievable, where $r>0,R>0$. We identify $K$ and $W$ in (\ref{e155}), (\ref{e162}) with $V,U$ respectively, and then apply Fact \ref{fact1}. Also notice that $\lim_{n\to 0}\gamma_n=\lim_{n\to \infty}\nu_n=0$. These imply the existence of a sequence of conditional Gaussian distributions $P_{U^n|X^n}$ such that
\begin{align}
r&\ge\lim_{n\to\infty}\frac{1}{n}[I(X^n;U^n)-I(Y^n;U^n)];\\
R&\le\lim_{n\to\infty}\frac{1}{n}[I(Y^n;U^n)-I(Z^n;U^n)].
\end{align}
As in Section \ref{sec5}, let $\tilde{\mb{X}},\tilde{\mb{Y}}$ and $\tilde{\mb{Z}}$ be jointly Gaussian vectors with circulant covariance matrices defined in \eqref{ecirc}; and $\hat{\mb{X}},\hat{\mb{Y}},\hat{\mb{Z}}$ be the result of applying the linear transforms in \eqref{e123}-\eqref{e125}. Then
\begin{align}
r&\ge\lim_{n\to\infty}\frac{1}{n}[I(\hat{\mb{X}};
\hat{\mb{U}})-I(\hat{\mb{Y}};\hat{\mb{U}})];\\
R&\le\lim_{n\to\infty}\frac{1}{n}[I(\hat{\mb{Y}};
\hat{\mb{U}})-I(\hat{\mb{Z}};\hat{\mb{U}})].
\end{align}
For $x>0$ define the decreasing functions:
\begin{align}
f(x)&=\frac{1}{4\pi}\int_{\beta(\omega)>x}
\log\frac{\beta(\omega)(x+1)}{(\beta(\omega)
+1)x}{\rm d}\omega,
\\
g(x)&=\frac{1}{4\pi}\int_{\beta(\omega)>x}
\log\frac{\beta(\omega)+1}{x+1}{\rm d}\omega,
\\
f_n(x)&:=\frac{1}{2n}\sum_{i:\beta^{(n)}_i>x}
\log\frac{\beta^{(n)}_i(x+1)}{(\beta^{(n)}_i+1)x},
\\
g_n(x)&:=\frac{1}{2n}
\sum_{i:\beta^{(n)}_i>x}\log\frac{\beta^{(n)}_i+1}{x+1},
\end{align}
where
\begin{align}
\beta^{(n)}_i:=\frac{\rho_{\hat{X}^{(n)}_i\hat{Y}^{(n)}_i}^2-\rho_{\hat{X}^{(n)}_i\hat{Z}^{(n)}_i}^2}
{1-\rho_{\hat{X}^{(n)}_i\hat{Y}^{(n)}_i}^2}.
\end{align}

The empirical distribution of $\{\beta^{(n)}_i\}_{i=1}^n$ converges weakly to the distribution of $\beta(W)$ when $W$ is uniformly distributed on $[0,2\pi)$, which means that for any $x>0$ it holds that
\begin{align}
\lim_{n\to\infty}f_n(x)=&f(x),\label{e296}
\\
\lim_{n\to\infty}g_n(x)=&g(x).
\end{align}

By Theorem \ref{thm4}, there is a sequence $\{\mu_n\}$ such that $I(\hat{\mb{X}};
\hat{\mb{U}})-I(\hat{\mb{Y}};\hat{\mb{U}})\ge f_n(\mu_n)$ and $I(\hat{\mb{Y}};
\hat{\mb{U}})-I(\hat{\mb{Z}};\hat{\mb{U}})\le g_n(\mu_n)$, and so
\begin{align}
r&\ge\limsup_{n\to\infty}f_n(\mu_n),\label{e126}
\\
R&\le\liminf_{n\to\infty}g_n(\mu_n),\label{e127}
\end{align}
Define $\mu:=\limsup_{n\to\infty}\mu_n$. We observe that $\mu_n$ is bounded away from $0$ and $+\infty$: suppose on the contrary that it is not bounded away from $0$. Choose $\epsilon>0$ small enough such that $f(\epsilon)-\epsilon>r$ (which is possible since $\lim_{\epsilon\downarrow0}f(\epsilon)=+\infty$ by monotone convergence theorem), and there is a subsequence $\{\mu_{n_k}\}_{k=1}^{\infty}$ such that $\mu_{n_k}<\epsilon$ for all $k$. From monotonicity of $f_n$ we see that
\begin{align}
f_{n_k}(\mu_{n_k})\ge f_{n_k}(\epsilon)\to f(\epsilon),\quad k\to\infty.
\end{align}
This implies that $f_{n_k}(\mu_{n_k})>f(\epsilon)-\epsilon>r$ when $k$ is sufficiently large, which contradicts (\ref{e126}). Similarly we can also show that $\mu_n$ is upper bounded: if otherwise, we pick $M>0$ such that $g(M)<\frac{R}{2}$ (which is possible since $\lim_{x\to+\infty}g(x)=0$ by monotone convergence theorem), and choose a subsequence $\{\mu_{n_k}\}_{k=1}^{\infty}$ such that $\mu_{n_k}>M$ for all $k$. Then from monotonicity of $g_n$ we see that
\begin{align}
g_{n_k}(\mu_{n_k})\le g_{n_k}(\mu)\to g(M)<\frac{R}{2},\quad k\to\infty.
\end{align}
This implies that $g_{n_k}(\mu_{n_k})\le \frac{3}{4}R$ for $k$ large enough, which contradicts \eqref{e127}.
Thus, we may assume that $c<\mu_n<d$, for some $0<c<d$.

By differentiation it's easy to see that $f_n$ is $\frac{\log e}{2c(1+c)}$-Lipschitz on $[c,d]$. Now choose a new subsequence $\{\mu_{i_k}\}_{k=1}^{\infty}$ which converges to $\mu$. We have
\begin{align}
|f_{i_k}(\mu_{i_k})-f(\mu)|
&\le|f_{i_k}(\mu_{i_k})-f_{i_k}(\mu)|+|f_{i_k}(\mu)-f(\mu)|\label{eqq313}
\\
&\le\frac{\log e}{2c(1+c)}|\mu_{i_k}-\mu|+|f_{i_k}(\mu)-f(\mu)|
\\
&\to 0,\quad k\to\infty.\label{e280}
\end{align}
where \eqref{e280} used \eqref{e296}. Hence
\begin{align}\label{e128}
\liminf_{n\to\infty}f_n(\mu_n)\le\lim_{k\to\infty}f_{i_k}(\mu_{i_k})=f(\mu).
\end{align}
Similarly to \eqref{eqq313}-\eqref{e128}, we can also show that
\begin{align}\label{e129}
\limsup_{n\to\infty}g_n(\mu_n)\le\lim_{k\to\infty}g_{i_k}
(\mu_{i_k})=g(\mu).
\end{align}
%
The proof is accomplished by combining (\ref{e126}), (\ref{e127}), (\ref{e128}) and (\ref{e129}).

\section{Review of Results on Toeplitz Approximation}\label{appI}
The asymptotic distribution of the eigenvalues of Toeplitz matrices can be described in terms of the ``equal distribution'' introduced by H. Weyl \cite{weyl1916gleichverteilung}.
\begin{defn}\cite{grenander1958toeplitz}
For each $n$ consider two sets of $n$ real numbers $\{a^{(n)}_i\}_{i=1}^n$ and $\{b^{(n)}_i\}_{i=1}^n$ satisfying
\begin{align}
A<a^{(n)}_i<B,~~A<b^{(n)}_i<B,\quad\forall 1\le i\le n,~n\ge1.
\end{align}
for some $A,B>0$. The sequences $\{a^{(n)}_i\}_{i=1}^n$ and $\{b^{(n)}_i\}_{i=1}^n$ are said to be \emph{asymptotically equally distributed} in $[A,B]$ if for any continuous function $F\colon[A,B]\to\mathbb{R}$, it holds that
\begin{align}\label{e106}
\lim_{n\to\infty}\frac{\sum_{i=1}^n[F(a^{(n)}_i)-F(b^{(n)}_i)]}{n}=0.
\end{align}
\end{defn}
Denote by $\mathbb{P}_{a^{(n)}}$ the empirical distribution of $\{a^{(n)}\}$. Then (\ref{e106}) can be expressed as
\begin{align}
\lim_{n\to\infty}[\mathbb{E}F(X_a)-\mathbb{E}F(X_b)]=0,
\end{align}
where the random variables $X_a$ and $X_b$ are distributed according to $\mathbb{P}_{a^{(n)}}$ and $\mathbb{P}_{b^{(n)}}$, respectively.
\begin{defn}
Consider the sets $\{a^{(n)}_i\}_{i=1}^n$ of real numbers from $[A,B]$. The empirical distribution $\mathbb{P}_{a^{(n)}}$ is said to \emph{converge weakly} to a measure $\mu$ on $[A,B]$ if for any continuous function $F\colon[A,B]\to\mathbb{R}$,
\begin{align}
\lim_{n\to\infty}\mathbb{E}F(X_a)=\mathbb{E}F(X_{\mu}),
\end{align}
where the random variables $X_a$ and $X_{\mu}$ are distributed according to $\mathbb{P}_{a^{(n)}}$ and $\mu$, respectively.
\end{defn}

\begin{defn}\label{asymeq}
\cite{gray2006toeplitz} We say $\mb{A}_n$ and $\mb{B}_n$ are \emph{asymptotically equivalent} (denoted as $\mathbf{A}_n\sim \mathbf{B}_n$) for two sequences of matrices $\{\mathbf{A}_n\}$ and $\{\mathbf{B}_n\}$ if
\begin{enumerate}
  \item $\mathbf{A}_n$ and $\mathbf{B}_n$ are uniformly bounded in $\ell_2$ operator norm, i.e. for some $M>0$,
  \begin{align}\label{c1}
  \|\mathbf{A}_n\|,\|\mathbf{B}_n\|\le M<\infty,\quad n=1,2,\dots;
  \end{align}
  \item $\mathbf{A}_n-\mathbf{B}_n$ converges to zero in the weak norm:
  \begin{align}\label{c2}
  \lim_{n\to \infty}|\mathbf{A}_n-\mathbf{B}_n|=0,
  \end{align}
  where $|\mathbf{A}|:=(\frac{1}{n}{\rm tr}[\mathbf{A}^{\dagger}\mathbf{\mathbf{A}}])^{1/2}$.
\end{enumerate}
\end{defn}
The key result we use in Section~\ref{sec5} is then expressed as:
\begin{fact}\cite[Lemma~4.6]{gray2006toeplitz}\label{fact6}
If $f$ is in the Wiener class then $\mb{T}_n(f)\sim \mb{C}_n(f)$, where the notations $\mb{T}_n$ and $\mb{C}_n$ are as in \eqref{e109} and \eqref{e110}.
\end{fact}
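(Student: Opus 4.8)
The plan is to verify directly the two conditions of Definition~\ref{asymeq} for $\mb{A}_n=\mb{T}_n(f)$ and $\mb{B}_n=\mb{C}_n(f)$: a uniform operator-norm bound \eqref{c1} and vanishing of the weak-norm difference \eqref{c2}. Since $f$ lies in the Wiener class we have $\|f\|_W:=\sum_k|t_k|<\infty$, and this is the only hypothesis used.

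For \eqref{c1} I would write $\mb{T}_n(f)=\sum_{|\ell|\le n-1}t_\ell\,\mb{J}^{(\ell)}$, where $\mb{J}^{(\ell)}$ has ones on the $\ell$-th diagonal and zeros elsewhere; each $\mb{J}^{(\ell)}$ is a partial isometry, so $\|\mb{J}^{(\ell)}\|\le1$ and hence $\|\mb{T}_n(f)\|\le\|f\|_W$ uniformly in $n$. For the circulant, \eqref{e110} together with \eqref{def2} shows that $\mb{C}_n(f)$ is unitarily diagonalized by the DFT matrix with eigenvalues $f(2\pi j/n)$, $0\le j\le n-1$, so $\|\mb{C}_n(f)\|=\max_j|f(2\pi j/n)|\le\|f\|_W$, again uniformly. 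This settles \eqref{c1}.

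The substance is \eqref{c2}. By \eqref{def1}, \eqref{e109}, \eqref{e110}, the $\ell$-th diagonal of $\mb{C}_n(f)-\mb{T}_n(f)$ is constant and equals the aliasing tail $\sum_{m\ne0}t_{\ell+mn}$, so $|\mb{C}_n(f)-\mb{T}_n(f)|^2=\frac1n\sum_{|\ell|<n}(n-|\ell|)\bigl|\sum_{m\ne0}t_{\ell+mn}\bigr|^2$. A naive diagonal-by-diagonal bound is \emph{not} enough, because the diagonals with $|\ell|$ near $n-1$ carry $O(1)$ mass (e.g. the $(n-1)$-st diagonal contains the term $t_{-1}$). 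The remedy is a truncation argument: given $\epsilon>0$, pick $N$ with $\sum_{|k|>N}|t_k|<\epsilon$ and let $f_N$ be the trigonometric polynomial whose $k$-th Fourier coefficient is $t_k$ for $|k|\le N$ and $0$ otherwise. Using the identity $|\mb{T}_n(g)|^2=\frac1n\sum_{|\ell|<n}(n-|\ell|)|g_\ell|^2\le\sum_\ell|g_\ell|^2$ together with $\ell^1\hookrightarrow\ell^2$, one gets $|\mb{T}_n(f)-\mb{T}_n(f_N)|\le\sum_{|k|>N}|t_k|<\epsilon$ for every $n$. Using the circulant identity $|\mb{C}_n(g)|^2=\frac1n\sum_{j=0}^{n-1}|g(2\pi j/n)|^2=\sum_{k\equiv k'\,(n)}g_k\overline{g_{k'}}$ and Cauchy--Schwarz over residue classes mod $n$, one gets $|\mb{C}_n(f)-\mb{C}_n(f_N)|^2\le\sum_{r}\bigl(\sum_{k\equiv r\,(n),\,|k|>N}|t_k|\bigr)^2\le\epsilon^2$ for $n>2N$. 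Finally, for $n>2N$ the matrices $\mb{T}_n(f_N)$ and $\mb{C}_n(f_N)$ agree on every entry with $|i-j|\le n-N-1$ — the periodized corrections vanish there because $f_N$ is band-limited — hence differ only on two triangular corners comprising $O(N^2)$ entries, each of magnitude $\le 2\|f\|_W$, so $|\mb{T}_n(f_N)-\mb{C}_n(f_N)|^2=O(N^2/n)\to0$ as $n\to\infty$ with $N$ fixed. The triangle inequality for $|\cdot|$ then gives $\limsup_{n\to\infty}|\mb{T}_n(f)-\mb{C}_n(f)|\le2\epsilon$, and letting $\epsilon\downarrow0$ yields \eqref{c2}.

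The hard part, and the only place care is needed, is this weak-norm step: the aliasing mass becomes negligible only after separating off a band-limited part of $f$, so one genuinely needs the truncation/approximation scheme; a direct estimate of $\sum_{m\ne0}|t_{\ell+mn}|$ summed over all diagonals does not decay. Everything else is routine bookkeeping with the two quadratic identities for $|\mb{T}_n(\cdot)|$ and $|\mb{C}_n(\cdot)|$.
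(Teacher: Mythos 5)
The paper does not prove Fact~\ref{fact6}; it is stated as a citation to Gray's monograph (\cite[Lemma~4.6]{gray2006toeplitz}), so there is no in-paper proof to compare against. Your blind proof is correct, and it is also essentially the standard proof in Gray's reference: one does not try to control the aliasing mass $\sum_{m\neq 0} t_{\ell+mn}$ diagonal by diagonal (which indeed fails on the extreme diagonals), but instead splits $f = f_N + (f-f_N)$ with $f_N$ a trigonometric polynomial, uses the two exact quadratic identities for the weak norm of Toeplitz and circulant forms to make $|\mb{T}_n(f)-\mb{T}_n(f_N)|$ and $|\mb{C}_n(f)-\mb{C}_n(f_N)|$ small uniformly in $n$ (using only $\sum_{|k|>N}|t_k|<\epsilon$), and then observes that the banded pair $\mb{T}_n(f_N),\mb{C}_n(f_N)$ differ only on $O(N^2)$ bounded entries in the corners, so their weak-norm difference is $O(N/\sqrt n)\to 0$. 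Your operator-norm bound for \eqref{c1} via shift matrices and via DFT diagonalization is likewise the standard one. Two small remarks: the hypothesis $n>2N$ is only needed for the corner-agreement step, not for the circulant tail bound, and there is a benign sign convention mismatch between \eqref{def1} in the paper ($t_{-k+mn}$) and what \eqref{def2} actually produces ($t_{k+mn}$); your argument matches the correct aliasing relation, so the discrepancy, if it is one, is the paper's and not yours.
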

The following property will be useful later in proving asymptotic equivalence of Toeplitz matrices. The claim about square root matrices follows from \cite[Theorem 1]{gutierrez2008asymptotically} by particularizing the continuous function therein to the square root function, while all other claims are from \cite[Theorem 2.1]{gray2006toeplitz}.
\begin{fact}\label{fact4}
Sums and products of asymptotically equivalent matrices are asymptotically equivalent. If the smallest singular values of asymptotically equivalent matrices are uniformly lower bounded, then their inverses are also asymptotically equivalent. Moreover, square roots of asymptotically equivalent positive-semidefinite matrices are asymptotically equivalent.
\end{fact}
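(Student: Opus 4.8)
The plan is to derive all four assertions from two elementary inequalities relating the operator norm $\|\cdot\|$ and the weak norm $|\mb{A}|=(\tfrac1n\tr[\mb{A}^{\dagger}\mb{A}])^{1/2}=\tfrac1{\sqrt n}\|\mb{A}\|_F$ of Definition~\ref{asymeq}: namely $|\mb{A}|\le\|\mb{A}\|$, together with $|\mb{A}\mb{B}|\le\|\mb{A}\|\,|\mb{B}|$ and $|\mb{A}\mb{B}|\le|\mb{A}|\,\|\mb{B}\|$, all immediate from $\|\mb{A}\mb{B}\|_F\le\|\mb{A}\|\,\|\mb{B}\|_F$. Granting these, the sum claim follows by applying the triangle inequality to each of the two norms separately, and the product claim follows from the telescoping identity $\mb{A}_n\mb{C}_n-\mb{B}_n\mb{D}_n=\mb{A}_n(\mb{C}_n-\mb{D}_n)+(\mb{A}_n-\mb{B}_n)\mb{D}_n$, which gives $|\mb{A}_n\mb{C}_n-\mb{B}_n\mb{D}_n|\le\|\mb{A}_n\|\,|\mb{C}_n-\mb{D}_n|+|\mb{A}_n-\mb{B}_n|\,\|\mb{D}_n\|\to0$ while $\|\mb{A}_n\mb{C}_n\|\le\|\mb{A}_n\|\,\|\mb{C}_n\|$ stays uniformly bounded.

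For the inverse claim I would use $\mb{A}_n^{-1}-\mb{B}_n^{-1}=\mb{A}_n^{-1}(\mb{B}_n-\mb{A}_n)\mb{B}_n^{-1}$, whence $|\mb{A}_n^{-1}-\mb{B}_n^{-1}|\le\|\mb{A}_n^{-1}\|\,|\mb{B}_n-\mb{A}_n|\,\|\mb{B}_n^{-1}\|$; the uniform lower bound on the smallest singular values makes $\|\mb{A}_n^{-1}\|$ and $\|\mb{B}_n^{-1}\|$ uniformly bounded (which is simultaneously the operator-norm condition required of the inverses), and the weak-norm factor vanishes by hypothesis.

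The square-root claim is the one I expect to be the sticking point, because $x\mapsto\sqrt x$ is not Lipschitz at $0$, so the naive route through the Sylvester identity $\mb{A}^{1/2}\mb{E}+\mb{E}\mb{B}^{1/2}=\mb{A}-\mb{B}$ would require a uniform \emph{positive} lower bound on the eigenvalues, which is not part of the hypothesis. Instead I would argue by polynomial approximation. Since $\mb{A}_n,\mb{B}_n\succeq0$ and $\|\mb{A}_n\|,\|\mb{B}_n\|\le M$, all their eigenvalues lie in the fixed compact interval $[0,M]$. First establish the polynomial case $p(\mb{A}_n)\sim p(\mb{B}_n)$ for every polynomial $p$, by induction on the degree starting from $\mb{I}\sim\mb{I}$ and $\mb{A}_n\sim\mb{B}_n$ and using the sum and product rules already proved. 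Then, given $\varepsilon>0$, choose (Weierstrass) a polynomial $p$ with $\sup_{[0,M]}|\sqrt x-p(x)|<\varepsilon$; by functional calculus $\|\mb{A}_n^{1/2}-p(\mb{A}_n)\|\le\varepsilon$ and $\|\mb{B}_n^{1/2}-p(\mb{B}_n)\|\le\varepsilon$ for every $n$, so $|\mb{A}_n^{1/2}-\mb{B}_n^{1/2}|\le2\varepsilon+|p(\mb{A}_n)-p(\mb{B}_n)|$ and therefore $\limsup_n|\mb{A}_n^{1/2}-\mb{B}_n^{1/2}|\le2\varepsilon$; letting $\varepsilon\downarrow0$ concludes, while $\|\mb{A}_n^{1/2}\|=\|\mb{A}_n\|^{1/2}\le M^{1/2}$ supplies the operator-norm bound. (Alternatively one may simply quote \cite[Theorem~1]{gutierrez2008asymptotically}, which is precisely this argument carried out for an arbitrary continuous function in place of $\sqrt{\cdot}$.) The crux is exactly this last point: the operator-norm boundedness built into the definition of $\sim$ confines all the spectra to a single compact window, which is what makes the polynomial approximation uniform in $n$ and hence legitimate; without it the infinite slope of the square root at the origin could not be controlled.
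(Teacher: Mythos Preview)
Your proposal is correct and, in fact, more detailed than what the paper does: the paper gives no argument of its own but simply cites \cite[Theorem~2.1]{gray2006toeplitz} for sums, products, and inverses and \cite[Theorem~1]{gutierrez2008asymptotically} for the square root, and your write-up is precisely a self-contained reconstruction of those two cited proofs (as you yourself note parenthetically for the square-root part). There is nothing to correct.
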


The relevance of asymptotically equivalent matrices to coding theorems lies in the following fact:
\begin{fact}\label{fact5}
\cite[Theorem 2.4]{gray2006toeplitz} Let $\mb{A}_n$ and $\mb{B}_n$ be asymptotically equivalent sequences of Hermitian matrices with eigenvalues inside the interval $[m,M]$. Then the eigenvalues of $\mb{A}_n$ and $\mb{B}_n$ are asymptotically equally distributed on $[m,M]$.
\end{fact}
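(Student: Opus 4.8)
Since Fact~\ref{fact5} is quoted verbatim from \cite{gray2006toeplitz}, the plan is to reconstruct the classical argument, which first reduces the claim to polynomials and then closes the gap via the Weierstrass approximation theorem. Write $\{\alpha^{(n)}_i\}_{i=1}^n$ and $\{\beta^{(n)}_i\}_{i=1}^n$ for the eigenvalues of $\mb{A}_n$ and $\mb{B}_n$; by hypothesis all of them lie in $[m,M]$. Per the definition around \eqref{e106}, it suffices to show that for every continuous $F\colon[m,M]\to\mathbb{R}$,
\[
\frac{1}{n}\sum_{i=1}^n\bigl[F(\alpha^{(n)}_i)-F(\beta^{(n)}_i)\bigr]\to 0\quad\text{as }n\to\infty .
\]

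First I would dispatch the monomial case $F(x)=x^k$. Because $\mb{A}_n$ and $\mb{B}_n$ are Hermitian, $\frac{1}{n}\sum_i(\alpha^{(n)}_i)^k=\frac{1}{n}\tr\mb{A}_n^k$ and likewise for $\mb{B}_n$, so the quantity to be controlled is $\frac{1}{n}\tr(\mb{A}_n^k-\mb{B}_n^k)$. Telescoping, $\mb{A}_n^k-\mb{B}_n^k=\sum_{j=0}^{k-1}\mb{A}_n^{\,j}(\mb{A}_n-\mb{B}_n)\mb{B}_n^{\,k-1-j}$. I would bound each summand using the Cauchy--Schwarz inequality $\bigl|\frac{1}{n}\tr(\mb{G}\mb{H})\bigr|\le|\mb{G}|\,|\mb{H}|$ for the weak norm, the mixed inequalities $|\mb{G}\mb{H}|\le\|\mb{G}\|\,|\mb{H}|$ and $|\mb{H}\mb{G}|\le\|\mb{G}\|\,|\mb{H}|$ relating the $\ell_2$ operator norm to the weak norm, and the uniform bound $\|\mb{A}_n\|,\|\mb{B}_n\|\le M'$ from \eqref{c1}. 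This makes each summand at most $(M')^{k-1}|\mb{A}_n-\mb{B}_n|$, so $\bigl|\frac{1}{n}\tr(\mb{A}_n^k-\mb{B}_n^k)\bigr|\le k(M')^{k-1}|\mb{A}_n-\mb{B}_n|\to0$ by \eqref{c2}. By linearity the same limit holds with $x^k$ replaced by an arbitrary polynomial.

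To pass to a general continuous $F$, fix $\varepsilon>0$ and use the Weierstrass approximation theorem to choose a polynomial $p$ with $\sup_{x\in[m,M]}|F(x)-p(x)|<\varepsilon$. Since every $\alpha^{(n)}_i,\beta^{(n)}_i\in[m,M]$, the $\frac{1}{n}$-averages of $F$ and of $p$ over each spectrum differ by at most $\varepsilon$, uniformly in $n$; combined with the polynomial case this gives
\[
\limsup_{n\to\infty}\Bigl|\frac{1}{n}\sum_i F(\alpha^{(n)}_i)-\frac{1}{n}\sum_i F(\beta^{(n)}_i)\Bigr|\le 2\varepsilon ,
\]
and letting $\varepsilon\downarrow0$ finishes the argument.

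The only genuinely delicate point is the estimate in the polynomial step: one needs the submultiplicative-type inequality $|\mb{G}\mb{H}|\le\|\mb{G}\|\,|\mb{H}|$ between the operator norm and the normalized Hilbert--Schmidt (weak) norm, and the observation that the uniform operator-norm bound \eqref{c1} is exactly what keeps the telescoped sum from blowing up with $k$. Everything else is routine; note that the eigenvalue-confinement hypothesis is used essentially, since it is precisely what makes the Weierstrass approximation of $F$ uniform independently of $n$.
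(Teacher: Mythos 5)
The paper does not prove Fact~\ref{fact5} itself---it is imported as a black box from \cite[Theorem 2.4]{gray2006toeplitz}---so there is no internal proof to compare against. Your reconstruction is correct and is precisely the standard argument in that reference: reduce to $F(x)=x^k$ via the telescoping identity $\mb{A}_n^k-\mb{B}_n^k=\sum_{j=0}^{k-1}\mb{A}_n^{\,j}(\mb{A}_n-\mb{B}_n)\mb{B}_n^{\,k-1-j}$ together with the trace Cauchy--Schwarz bound and the mixed operator-norm/weak-norm inequality, then upgrade to arbitrary continuous $F$ on $[m,M]$ by Weierstrass approximation, with the spectral confinement hypothesis supplying the uniformity of that approximation across $n$.
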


\bibliographystyle{ieeetr}
\bibliography{ref2014}

\begin{thebibliography}{10}

\bibitem{csiszar2000common}
I.~Csisz\'{a}r and P.~Narayan, ``{Common randomness and secret key generation
  with a helper},'' {\em IEEE Transactions on Information Theory}, vol.~46,
  no.~2, pp.~344--366, Feb.~2000.

\bibitem{ahlswede1998common}
R.~Ahlswede and I.~Csisz\'{a}r, ``{Common randomness in information theory and
  cryptography. Part {II}. {CR} capacity},'' {\em IEEE Transactions on
  Information Theory}, vol.~44, no.~1, pp.~225--240, Jan.~1998.

\bibitem{watanabe}
S.~Watanabe and Y.~Oohama, ``{Secret key agreement from vector {Gaussian}
  sources by rate limited public communication},'' {\em in Proceedings of 2010
  IEEE International Symposium on Information Theory}, pp.~2597--2601.

\bibitem{weingarten2004capacity}
H.~Weingarten, Y.~Steinberg, and S.~Shamai, ``The capacity region of the
  {Gaussian} {MIMO} broadcast channel,'' {\em IEEE Transactions on Information
  Theory}, vol.~52, no.~9, pp.~3936--3964, Sept.~2004.

\bibitem{shannon1959coding}
C.~E. Shannon, ``{Coding theorems for a discrete source with a fidelity
  criterion},'' {\em IRE National Convention Record}, vol.~4, pp.~142--163,
  1959.

\bibitem{cover2012elements}
T.~M. Cover and J.~A. Thomas, {\em {Elements of Information Theory, Second
  Edition}}.
\newblock John Wiley \& Sons, 2012.

\bibitem{verdu1990channel}
S.~Verd\'{u}, ``{On channel capacity per unit cost},'' {\em IEEE Transactions
  on Information Theory}, vol.~36, no.~5, pp.~1019--1030, May~1990.

\bibitem{erkip1998efficiency}
E.~Erkip and T.~M. Cover, ``{The efficiency of investment information},'' {\em
  IEEE Transactions on Information Theory}, vol.~44, no.~3, pp.~1026--1040,
  Mar.~1998.

\bibitem{anantharam2013maximal}
V.~Anantharam, A.~Gohari, S.~Kamath, and C.~Nair, ``On maximal correlation,
  hypercontractivity, and the data processing inequality studied by {Erkip} and
  {Cover},'' {\em arXiv preprint arXiv:1304.6133}, 2013.

\bibitem{witsenhausen1975sequences}
H.~S. Witsenhausen, ``On sequences of pairs of dependent random variables,''
  {\em SIAM Journal on Applied Mathematics}, vol.~28, no.~1, pp.~100--113.

\bibitem{ahlswede1976spreading}
R.~Ahlswede and P.~G{\'a}cs, ``{Spreading of sets in product spaces and
  hypercontraction of the {Markov} operator},'' {\em The Annals of
  Probability}, pp.~925--939, 1976.

\bibitem{zhao}
L.~Zhao, ``{Common Randomness, Efficiency, and Actions},'' {\em PhD thesis,
  Department of Electrical Engineering, Stanford University}, 2011.

\bibitem{liu2014isit}
J.~Liu, P.~Cuff, and S.~Verd\'{u}, ``Key capacity with limited one-way
  communication for product sources,'' in {\em Proceedings of 2014
  International Symposium on Information Theory}, pp.~1146--1150, Honolulu,
  Hawaii, June 30-July 4, 2014.

\bibitem{beigi2015}
S.~Beigi and A.~Gohari, ``On the duality of additivity and tensorization,'' in
  {\em Proceedings of 2015 IEEE International Symposium on Information Theory
  Proceedings (ISIT)}, pp.~2381--2385, Hong Kong, China, June 2015.

\bibitem{courtade2013outer}
T.~A. Courtade, ``Outer bounds for multiterminal source coding via a strong
  data processing inequality,'' in {\em Proceedings of 2013 IEEE International
  Symposium on Information Theory (ISIT)}, pp.~559--563, Istanbul, Turkey, July
  2013.

\bibitem{anan_conj}
V.~Anantharam, A.~A. Gohari, S.~Kamath, and C.~Nair, ``{On hypercontractivity
  and the mutual information between {Boolean} functions},'' {\em The 51st
  Annual Allerton Conference on Communication, Control, and Computing},
  pp.~13--19.

\bibitem{hayashi2006general}
M.~Hayashi, ``{General nonasymptotic and asymptotic formulas in channel
  resolvability and identification capacity and their application to the
  wiretap channel},'' {\em IEEE Transactions on Information Theory}, vol.~52,
  no.~4, pp.~1562--1575, Apr.~2006.

\bibitem{maurer1993secret}
U.~M. Maurer, ``{Secret key agreement by public discussion from common
  information},'' {\em IEEE Transactions on Information Theory}, vol.~39,
  no.~3, pp.~733--742, Mar.~1993.

\bibitem{el2011network}
A.~{El Gamal} and Y.-H. Kim, {\em {Network Information Theory}}.
\newblock Cambridge University Press, 2011.

\bibitem{bloch2013strong}
M.~Bloch and N.~Laneman, ``{Strong Secrecy from Channel Resolvability},'' {\em
  IEEE Transactions on Information Theory}, vol.~59, no.~12, pp.~8077--8098,
  Dec.~2013.

\bibitem{song}
E.~C. Song, P.~Cuff, and H.~V. Poor, ``{The likelihood encoder for lossy
  compression},'' {\em arXiv:1408.4522v2}.

\bibitem{chou2012separation}
R.~A. Chou and M.~R. Bloch, ``Separation of reliability and secrecy in
  rate-limited secret key generation,'' {\em IEEE Transactions on Information
  Theory}, vol.~60, no.~8, pp.~4941--4957, Aug.~2014.

\bibitem{gray2006toeplitz}
R.~M. Gray, ``{Toeplitz and circulant matrices: A review},'' {\em Foundations
  and Trends on Communications and Information Theory}, vol.~2, no.~3,
  pp.~155--239, 2006.

\bibitem{cuff2012distributed}
P.~Cuff, ``Distributed channel synthesis,'' {\em IEEE Transactions on
  Information Theory}, vol.~59, no.~11, pp.~7071 -- 7096, Nov.~2013.

\bibitem{zhang2007estimating}
Z.~Zhang, ``{Estimating mutual information via {Kolmogorov} distance},'' {\em
  IEEE Transactions on Information Theory}, vol.~53, no.~9, pp.~3280--3282,
  Sept.~2007.

\bibitem{shannon1957certain}
C.~E. Shannon, ``{Certain results in coding theory for noisy channels},'' {\em
  Information and Control}, vol.~1, no.~1, pp.~6--25, 1957.

\bibitem{wyner1975common}
A.~D. Wyner, ``{The common information of two dependent random variables},''
  {\em IEEE Transactions on Information Theory}, vol.~21, no.~2, pp.~163--179,
  Feb.~1975.

\bibitem{liu2015key}
J.~Liu, P.~Cuff, and S.~Verd\'{u}, ``{Secret Key Generation with One
  Communicator and a One-Shot Converse via Hypercontractivity},'' in {\em
  Proceedings of 2015 IEEE International Symposium on Information Theory
  (ISIT)}, pp.~710--714, 2015.

\bibitem{watanabe2013}
S.~Watanabe, ``The rate-distortion function for product of two sources with
  side-information at decoders,'' {\em IEEE Transactions on Information
  Theory}, vol.~59, pp.~5678--5691, Sept.~2013.

\bibitem{liang2007rate}
Y.~Liang and G.~Kramer, ``{Rate regions for relay broadcast channels},'' {\em
  IEEE Transactions on Information Theory}, vol.~53, no.~10, pp.~3517--3535,
  Oct.~2007.

\bibitem{lovasz1979shannon}
L.~Lov\'{a}sz, ``{On the {Shannon} capacity of a graph},'' {\em IEEE
  Transactions on Information Theory}, vol.~25, no.~1, pp.~1--7, 1979.

\bibitem{alon1998shannon}
N.~Alon, ``{The {Shannon} capacity of a union},'' {\em Combinatorica}, vol.~18,
  no.~3, pp.~301--310, Mar.~1998.

\bibitem{han1993approximation}
T.~S. Han and S.~Verd\'{u}, ``{Approximation theory of output statistics},''
  {\em IEEE Transactions on Information Theory}, vol.~39, no.~3, pp.~752--772,
  Mar.~1993.

\bibitem{ahlswede1993common}
R.~Ahlswede and I.~Csisz\'{a}r, ``{Common randomness in information theory and
  cryptography. {I}. {S}ecret sharing},'' {\em IEEE Transactions on Information
  Theory}, vol.~39, no.~4, pp.~1121--1132, Apr.~1993.

\bibitem{horn2012matrix}
R.~A. Horn and C.~R. Johnson, {\em {Matrix analysis}}.
\newblock Cambridge University Press, 2012.

\bibitem{gross1975logarithmic}
L.~Gross, ``{Logarithmic {Sobolev} Inequalities},'' {\em American Journal of
  Mathematics}, vol.~97, no.~4, pp.~1061--1083, 1975.

\bibitem{tao}
T.~Tao, ``{Matrix identities as derivatives of determinant identities},'' {\em
  [Online]. Available:
  http://terrytao.wordpress.com/2013/01/13/matrix-identities-as-derivatives-of-determinant-identities/}.

\bibitem{weyl1916gleichverteilung}
H.~Weyl, ``{{\"{U}}ber die Gleichverteilung von Zahlen mod. eins},'' {\em
  Mathematische Annalen}, vol.~77, no.~3, pp.~313--352, 1916.

\bibitem{grenander1958toeplitz}
U.~Grenander and G.~Szeg\"{o}, {\em {Toeplitz forms and their applications}}.
\newblock Univ of California Press, 1958.

\bibitem{gutierrez2008asymptotically}
J.~Guti\'{e}rrez-Guti\'{e}rrez and P.~M. Crespo, ``{Asymptotically equivalent
  sequences of matrices and {Hermitian} block {Toeplitz} matrices with
  continuous symbols: {Applications} to {MIMO} systems},'' {\em IEEE
  Transactions on Information Theory}, vol.~54, no.~12, pp.~5671--5680,
  Dec.~2008.

\end{thebibliography}

\begin{IEEEbiographynophoto}{Jingbo Liu}
received the B.E. degree from Tsinghua University, Beijing, China in 2012 and the M.A. degree from Princeton University, Princeton, NJ, USA in 2014, both in electrical engineering. He is currently pursuing a Ph.D. degree at Princeton University. His research interests include signal processing, information theory, coding theory and the related fields. His undergraduate thesis on a topological viewpoint on non-convex sparse signal recovery received the best undergraduate thesis award at Tsinghua University (2012). He gave a semi-plenary presentation at the 2015 IEEE Int. Symposium on Information Theory, Hong-Kong, China.
\end{IEEEbiographynophoto}

\begin{IEEEbiographynophoto}{Paul Cuff} 
received the B.S. degree in electrical engineering from Brigham Young University, Provo, UT, in 2004 and the M.S. and Ph. D. degrees in electrical engineering from Stanford University in 2006 and 2009. Since 2009 he has been an Assistant Professor of Electrical Engineering at Princeton University.

As a graduate student, Dr. Cuff was awarded the ISIT 2008 Student Paper Award for his work titled ¡°Communication Requirements for Generating Correlated Random Variables¡± and was a recipient of the National Defense Science and Engineering Graduate Fellowship and the Numerical Technologies Fellowship. As faculty, he received the NSF Career Award in 2014 and the AFOSR Young Investigator Program Award in 2015.
\end{IEEEbiographynophoto}

\begin{IEEEbiographynophoto}
{Sergio Verd\'{u}} received the Telecommunications Engineering degree from the
Universitat Polit\`{e}cnica de Barcelona in 1980, and the Ph.D. degree in Electrical Engineering from the
University of Illinois at Urbana-Champaign in 1984. Since 1984 he has been a member of the faculty of
Princeton University, where he is the Eugene Higgins Professor of Electrical Engineering, and is a member
of the Program in Applied and Computational Mathematics.

Sergio Verd\'{u} is  the recipient of the 2007 Claude E. Shannon Award, and
the 2008 IEEE Richard W. Hamming Medal.
He is a member of both the National Academy of Engineering and the National Academy of Sciences.

Verd\'{u} is a recipient of several paper awards from the IEEE:
the 1992 Donald Fink Paper Award,
the 1998 and 2012 Information Theory  Paper Awards,
an Information Theory Golden Jubilee Paper Award,
the 2002 Leonard Abraham Prize Award,
the 2006 Joint Communications/Information Theory Paper Award,
and the 2009 Stephen O. Rice Prize from the IEEE Communications Society.
In 1998, Cambridge University Press published his book {\em Multiuser Detection,}
for which he received the 2000 Frederick E. Terman Award from the American Society for Engineering Education.
He was awarded a Doctorate Honoris Causa from the Universitat  Polit\`{e}cnica de Catalunya in 2005.

Sergio Verd\'{u} served as President of the IEEE Information Theory Society in 1997, and
on its Board of Governors (1988-1999, 2009-2014).
He has also served in various editorial capacities for the {\em IEEE Transactions on Information Theory}:
Associate Editor (Shannon Theory, 1990-1993; Book Reviews, 2002-2006),
Guest Editor of the Special Fiftieth Anniversary Commemorative Issue
(published by IEEE Press as ``Information Theory: Fifty years of discovery"),
and member of the Executive Editorial Board (2010-2013).
He is the founding Editor-in-Chief of {\em Foundations and Trends in Communications and Information Theory}.
Verd\'{u} is co-chair of the {\em 2016 IEEE International Symposium on Information Theory}, which will take place in his hometown.
\end{IEEEbiographynophoto}
\end{document}